\documentclass[11pt]{article}

\usepackage[letterpaper,margin=1in]{geometry}

\usepackage{amsmath,amssymb,amsfonts}
\usepackage{amsthm}
\usepackage{graphicx}
\usepackage{subcaption}
\usepackage{cite}
\usepackage{dsfont}
\usepackage{url}
\usepackage[hidelinks]{hyperref}

\newtheorem{theorem}{Theorem}
\newtheorem{lemma}{Lemma}

\newtheorem{definition}{Definition}

\newtheorem{corollary}{Corollary}

\title{\LARGE\bf
When Freshness Is Not Enough: Distribution-Aware Age of Information for Networked LQR Control
}

\author{
Abdullah Y. Etcibasi, C. Emre Koksal, and Eylem Ekici
\thanks{Abdullah Y. Etcibasi, C. Emre Koksal, and Eylem Ekici are with the Department of Electrical and Computer Engineering, The Ohio State University, Columbus, OH 43210, USA.}%
}
\date{}

\begin{document}

\maketitle
\thispagestyle{empty}
\pagestyle{empty}

\begin{abstract}
Age of Information (AoI) has become a central metric for the design of wireless update systems, especially in applications where fresh measurements support tracking, estimation, and control. Despite its popularity, the use of mean AoI or peak AoI as a surrogate for closed-loop performance is often motivated by intuition rather than by a control-theoretic derivation. This paper examines whether minimizing the mean AoI is in fact optimal for networked control systems. For scalar linear time-invariant systems with delayed intermittent updates, we show that, under state-independent scheduling policies, the infinite-horizon LQR tracking problem reduces to an optimization over the distribution of inter-scheduling intervals. The resulting objective depends on higher-order statistical moments, and in unstable or correlated regimes on exponential moments, of the inter-scheduling process rather than only on its mean. Consequently, policies with identical mean AoI can induce substantially different tracking costs. We further extend the analysis to disturbances with exponentially decaying autocorrelation and derive equivalent cost formulations that expose the role of the full interval distribution. Finally, we evaluate the theory using real vehicle trajectories from the NGSIM US-101 dataset. The empirical results match the predicted performance trends, demonstrating that mean AoI alone is insufficient for control-oriented network design.
\end{abstract}

\section{Introduction}

Cyber-physical systems (CPSs) increasingly rely on networked sensing, communication, and actuation to close feedback loops over shared and resource-constrained channels. Examples include automated driving, cooperative mobility, remote state estimation, industrial automation, and wireless control of robotic platforms. This setting is naturally modeled within networked control systems (NCSs), where stability and performance must be understood in the presence of communication delays, packet drops, lossy links, and shared-medium scheduling constraints~\cite{hespanha2007survey,zhang2001stability,schenato2007foundations,walsh2001scheduling,park2017wireless}. In these systems, the communication layer determines when measurements are delivered to the controller, while the control layer determines how these measurements are used to regulate the physical process. Since wireless channels are rate-limited, delayed, and often shared by multiple agents, one cannot transmit all measurements at all times. The central design question is therefore not only how to stabilize or track a dynamical system, but also how to allocate limited communication budget so that the resulting closed-loop performance is optimized.

Age of Information (AoI) has emerged as a standard network-layer metric for quantifying information freshness. Introduced for real-time status-update systems in~\cite{kaul2012real} and subsequently developed across queueing and wireless update models~\cite{sun2017update,yates2019age,yates2021age}, AoI measures how old the most recently available information is at the receiver, rather than only the latency of a delivered packet. This distinction has made AoI particularly attractive in status-update systems, where stale measurements may be less useful even if they were delivered reliably. As a result, a large body of work has studied the minimization of average AoI, peak AoI, and related freshness metrics in queues, wireless scheduling systems, random access networks, and energy-constrained communication architectures. In parallel, control-aware communication has investigated event-triggered and self-triggered sampling~\cite{astrom2002comparison,tabuada2007event,heemels2012introduction,heemels2010networked}, AoI- and value-of-information-based scheduling~\cite{ayan2019age,wang2021value,soleymani2021value,aggarwal2023weighted}, and full-loop AoI-oriented CPS designs~\cite{lu2023full}.

For networked control systems, however, the connection between AoI minimization and control performance remains incomplete. The objective of the controller is not freshness itself, but rather regulation, tracking, or a linear quadratic regulator (LQR)-type cost induced by the plant dynamics and disturbances. Existing work often adopts mean AoI or peak AoI as a design criterion for these objectives, implicitly assuming that fresher information necessarily yields better closed-loop behavior. This assumption is natural, but it is not generally sufficient: two scheduling policies may have the same average age while producing very different distributions of inter-scheduling intervals, and those distributional differences can be amplified by unstable dynamics, delayed feedback, or temporally correlated disturbances. Consequently, the relevant question is not whether lower mean AoI is intuitively desirable, but whether the LQR objective actually reduces to a function of mean AoI alone. This paper shows that it does not.

Motivated by this distinction, we do not impose an AoI metric a priori. Instead, we derive the scheduling objective from the LQR tracking cost itself. The main contributions are as follows. First, we reformulate the infinite-horizon LQR tracking cost under state-independent scheduling and show that it depends fundamentally on higher-order moments and, in unstable or correlated regimes, exponential moments of the inter-scheduling interval (equivalently, AoI) distribution. Second, for i.i.d. disturbances, we prove that the optimal state-independent policy is the as-periodic-as-possible scheduler, which concentrates the interval process around its mean. Third, we extend the formulation to temporally correlated disturbances with exponentially decaying autocorrelation and derive the corresponding equivalent optimization problems, revealing how correlation modifies the moment structure of the cost. Fourth, we validate the theoretical claims using a data-driven car-following simulation based on the NGSIM US-101 vehicle trajectory dataset \cite{ngsim_us101_2016}, showing that reducing mean AoI alone is insufficient and that the theoretical cost captures the observed empirical trends.

The remainder of the paper is organized as follows. Section~\ref{sec:Sys_Mdl} introduces the networked control model, the information structures, and the control-scheduling formulation, and shows how the LQR objective reduces to an estimation-error scheduling problem. Section~\ref{sec:AoI_iid_Noise} derives the AoI-equivalent objective under i.i.d. disturbances and identifies the as-periodic-as-possible scheduler as optimal within the state-independent class. Section~\ref{sec:AoI_Correlated_Noise} extends the analysis to temporally correlated disturbances and shows how correlation introduces additional exponential-moment terms. Section~\ref{sec:Real_Sim} validates the theoretical characterization using data-driven car-following simulations based on NGSIM US-101 trajectories.

\section{System Model}
\label{sec:Sys_Mdl}

\begin{figure}[tbp]
  \centering                
  \includegraphics[width=0.7\textwidth]{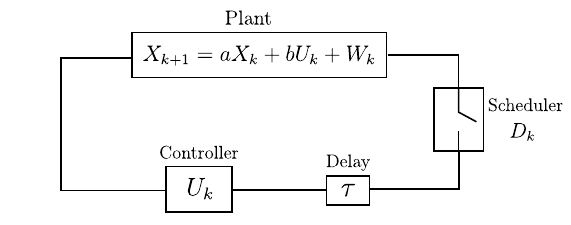}
  \caption{Block diagram of the closed-loop control system}
  \label{fig:Sys_Model}      
\end{figure}

Consider the single-loop, discrete-time, scalar LTI system illustrated in Fig.~\ref{fig:Sys_Model}. 
The plant evolves according to
\begin{equation}
    X_{k+1} = a X_k + b U_k + W_k,
    \label{eqn:system_Model}
\end{equation}
where $X_k$ denotes the system state, $U_k$ the control input, and $W_k$ an additive disturbance. 
We consider two disturbance models for the sequence $\{W_k\}$. In Section~\ref{sec:AoI_iid_Noise}, we assume that $\{W_k\}$ is i.i.d., zero mean, and has finite variance $\sigma_W^2$ (not necessarily Gaussian). In Section~\ref{sec:AoI_Correlated_Noise}, we extend the analysis to temporally correlated disturbances. Controller receives intermittent information through the scheduling mechanism with a fixed \(\tau \geq 1\) delay, so that the observation available to the controller is
\begin{equation}
    Y_{k} = D_{k - \tau}\,X_{k - \tau},
    \label{eqn:Y_k}
\end{equation}
where \(D_{k} \in \{0,1\}\) denotes the scheduling decision at time \(k\). 

Let \(t_n\) denote the \(n\)th scheduling instant, i.e., the \(n\)th time index at which \(D_k=1\). The inter-scheduling interval is then defined as \footnote{Classical AoI is an instantaneous, time-varying quantity. In this work, we evaluate AoI only at sampling instants, immediately before new updates are received. Accordingly, policies with the same mean inter-scheduling interval have the same mean sampled AoI.}
\[
\Delta_n := t_{n+1}-t_n .
\]
Controller's information at time \(k\) is given by
\begin{equation}
    I^\mathcal{C}_{k}
    \;=\;
    \{\,Y_{0},\,U_{0},\,Y_{1},\,U_{1},\,\dots,\,Y_{k-1},U_{\,k-1},\,Y_{\,k}\,\}.
    \label{eqn:Control_info}
\end{equation}
In this paper, we restrict attention to state-independent scheduling policies, as is commonly assumed in wireless network design. Under this restriction, the scheduler does not observe the current plant state and bases its decisions only on the past scheduling and control history. Accordingly, the scheduler information set is given by
\begin{align}
    I^\mathcal{S}_{k}
    \;&=\;
    \{\,D_{0},\,D_{1},\,\dots,\,D_{\,k-1}\,\}.
    \label{eqn:update_info}
\end{align}
Under state-independent scheduling policies, scheduling decisions do not convey additional state-dependent information beyond the transmitted updates. Consequently, the estimation-error dynamics are independent of the control input, and the separation principle applies. See~\cite{etcibasi2026separation} for details.
We define the control and scheduling policies as
\begin{align}
  U_{k} \; &= \; \gamma_{k}\bigl(I^\mathcal{C}_{k}\bigr) \;\in\; \Gamma,\\
  D_{k} \; &= \; f_{k}\bigl(I^\mathcal{S}_{k}\bigr) \;\in\; \Pi,
\end{align}
where the sets \(\Gamma\) and \(\Pi\) contain all admissible control and scheduling decision rules that satisfy the usual measurability and integrability conditions.  
We write  
\(\boldsymbol{\mu}=(\gamma_{0},\gamma_{1},\dots)\in\Gamma\) and  
\(\boldsymbol{\pi}=(f_{0},f_{1},\dots)\in\Pi\)  
for generic control- and scheduling-policy sequences, respectively.  

The main objective is to minimize the constrained expected LQR cost over an infinite horizon:
\begin{align}
    P_1 :\quad 
    & \min_{\boldsymbol{\pi}\in\Pi,\;\boldsymbol{\mu}\in\Gamma}
    \; \lim_{N \to \infty} \frac{1}{N} 
    \mathbb{E}\left[\sum_{k=0}^{N-1} qX_k^2 + rU_k^2\right] 
    \label{eqn:Original_Objective}\\
    \text{s.t.}\quad 
    & \lim_{N \to \infty} \frac{1}{N} \sum_{k=0}^{N-1} D_k \leq p_{max}. 
    \notag
\end{align}
where \(q,\,r,\,p_{max} > 0\).

\begin{theorem} 
\label{theorem:optimal control}
    Consider the system \eqref{eqn:system_Model}--\eqref{eqn:update_info} and the optimization problem \(P_{1}\). Under i.i.d. zero-mean disturbances with finite variance, the optimal controller is given by the linear feedback law 
    \begin{equation}
        U_k = \gamma_k(I^\mathcal{C}_{k}) = -L\,\hat{X}_k,
        \label{eqn:optimal_controller}
    \end{equation}
    where 
    \begin{align*}
        & \hat{X}_k = \mathbb{E}\bigl[X_k\mid I^\mathcal{C}_{k}\bigr], 
        \quad
        L = \frac{a b\,K}{r + b^2 K}, \\
        & K = q + a^2K - \frac{(a b K)^2}{r + b^2 K}.
    \end{align*}
\end{theorem}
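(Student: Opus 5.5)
Fix an arbitrary admissible state-independent scheduling policy $\boldsymbol{\pi}\in\Pi$. Since $D_k$ depends only on $D_0,\dots,D_{k-1}$, the scheduling sequence is independent of the disturbance sequence and the initial state, and the constraint on $\boldsymbol{\pi}$ does not involve $\boldsymbol{\mu}$. We can therefore minimize over $\boldsymbol{\mu}$ for fixed $\boldsymbol{\pi}$ and show that the minimizer has the claimed form for every $\boldsymbol{\pi}$. Write $\hat X_k=\mathbb{E}[X_k\mid I^\mathcal{C}_k]$ and $e_k=X_k-\hat X_k$.

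The first step is to show that $e_k$ does not depend on the control policy. Let $s_k$ be the most recent scheduling instant with $s_k\le k-\tau$ and $D_{s_k}=1$. Then $Y_k$ gives $X_{s_k}$ exactly, and $U_0,\dots,U_{k-1}$ are known. Unrolling \eqref{eqn:system_Model} gives
\[
X_k=a^{k-s_k}X_{s_k}+\sum_{j=s_k}^{k-1}a^{k-1-j}(bU_j+W_j).
\]
The future noise terms are zero-mean and independent of $I^\mathcal{C}_k$. This independence uses state-independence: the $D$'s carry no information about the $W$'s, as in the separation result of~\cite{etcibasi2026separation}, which we invoke here. Hence
\[
e_k=\sum_{j=s_k}^{k-1}a^{k-1-j}W_j ,
\]
which depends only on $\boldsymbol{\pi}$ and the noise, not on $\boldsymbol{\mu}$. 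The estimator recursion is $\hat X_{k+1}=a\hat X_k+bU_k+(\text{innovation})_k$. The innovation is zero-mean given $I^\mathcal{C}_k$, and it is uncorrelated with $\hat X_k$ and with $U_k$.

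The second step is a completing-the-square argument with the scalar Riccati solution $K$. Here I take $K>0$ as the stabilizing root, which exists when $b\neq 0$ (stabilizability). Define $V_k=K\hat X_k^2$. Expanding $\mathbb{E}[V_{k+1}-V_k]$ and using the orthogonality of the innovation, together with $\mathbb{E}[X_k^2]=\mathbb{E}[\hat X_k^2]+\mathbb{E}[e_k^2]$, gives the per-stage identity
\[
\mathbb{E}[qX_k^2+rU_k^2]=\mathbb{E}[V_k-V_{k+1}]+(r+b^2K)\,\mathbb{E}\bigl[(U_k+L\hat X_k)^2\bigr]+q\,\mathbb{E}[e_k^2]+K\,\mathbb{E}[\nu_k^2],
\]
where $\nu_k$ is the innovation. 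The algebraic Riccati equation in the statement is exactly what cancels the $\hat X_k^2$ terms once the square in $U_k$ is completed with $L=abK/(r+b^2K)$.

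Finally, sum over $k=0,\dots,N-1$, divide by $N$, and let $N\to\infty$. The terms $q\mathbb{E}[e_k^2]$ and $K\mathbb{E}[\nu_k^2]$ do not depend on $\boldsymbol{\mu}$. The square is nonnegative and vanishes exactly when $U_k=-L\hat X_k$, which yields \eqref{eqn:optimal_controller}. This holds for every $\boldsymbol{\pi}$, so it holds at the joint optimum of $P_1$.

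The main obstacle is the telescoping boundary term $\frac1N\mathbb{E}[V_0-V_N]$. For policies with finite average cost, I would restrict to controls for which $\mathbb{E}[\hat X_N^2]/N\to 0$; policies violating this have infinite average cost and can be discarded. Under $U_k=-L\hat X_k$ the closed-loop factor satisfies $|a-bL|<1$, so the boundary term vanishes. Any remaining possibility of an infinite cost for every control, for instance an unstable $a$ combined with heavy-tailed intervals, is then a property of $\boldsymbol{\pi}$ and does not change which controller is optimal.
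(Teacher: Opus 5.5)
Your argument is correct, but it takes a different route from the paper. The paper obtains the controller by dynamic programming under $I^{\mathcal C}_k$, deferring both the derivation and the separation property to~\cite{etcibasi2026separation}. Its telescoping identity \eqref{eqn:eqv_prob_last_eqn} is built on the true state, $KX_k^2$.

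You instead give a direct average-cost verification argument. You derive the no-dual-effect property (control-independence of $e_k$) rather than citing it, and you complete the square with the stationary $K$ using the estimate-based function $K\hat X_k^2$ and the innovation $\nu_k$. This makes the proof self-contained and handles the average-cost criterion without a finite-horizon-to-limit step.

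The paper's state-based choice avoids the innovation altogether. Since $\mathbb{E}[KX_{k+1}^2]=K\mathbb{E}[(aX_k+bU_k)^2]+K\sigma_W^2$ and $\mathbb{E}[(U_k+LX_k)^2]=\mathbb{E}[(U_k+L\hat X_k)^2]+L^2\mathbb{E}[e_k^2]$, its residual is $L^2(r+b^2K)\mathbb{E}[e_k^2]+K\sigma_W^2$. That is exactly the form Theorem~\ref{theorem:equivalent problem} uses. Your residual $q\mathbb{E}[e_k^2]+K\mathbb{E}[\nu_k^2]$ matches it only up to a telescoping term, and only after using $W_k+ae_k=e_{k+1}+\nu_k$ (with orthogonal summands) and $q+(a^2-1)K=L^2(r+b^2K)$.

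Two of your assertions deserve a line each. First, $\nu_k=W_k+ae_k-e_{k+1}$ shows the innovation is control-independent. Second, finite cost forces $\mathbb{E}[\hat X_N^2]/N\to0$ because $P_1$ uses a genuine limit: with $J_N$ the $N$-stage average, $\mathbb{E}[qX_N^2+rU_N^2]/N=\tfrac{N+1}{N}J_{N+1}-J_N\to0$, and $\mathbb{E}[\hat X_N^2]\le\mathbb{E}[X_N^2]$.

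Note also that in $J_N(\boldsymbol{\mu})-J_N(\boldsymbol{\mu}^\star)$ the candidate's boundary term enters with a nonnegative sign. So $|a-bL|<1$ is needed only to show the candidate's own cost is finite, not for the comparison.
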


\begin{proof}
    The proof follows by dynamic programming under the controller information \(I_k^\mathcal{C}\). See~\cite{etcibasi2026separation} for the full derivation. 
\end{proof}

Let us define the estimation error at the controller as
\begin{equation}
    \mathcal{E}_{k} \;=\; X_{k} \;-\; \hat{X}_{k}.
    \label{eqn:est_error_original}
\end{equation}
Given the optimal controller in~\eqref{eqn:optimal_controller}, the objective function in~\eqref{eqn:Original_Objective} can be reformulated in terms of the estimation error~\(\mathcal{E}_{k}\). The goal is then to determine the optimal scheduling policy that minimizes this reformulated cost.

\begin{theorem} 
\label{theorem:equivalent problem}
  For a scalar LTI system with \(q,r>0\), i.i.d. zero-mean disturbances with finite variance, a fixed delay \(\tau \geq 1\), and using the optimal controller in \eqref{eqn:optimal_controller}, problem \(P_{1}\) can be equivalently written as
  \begin{align}
    P_{2}:\quad 
      &\min_{\boldsymbol{\pi}\; \in \; \Pi}\;  
      \lim_{N \to \infty}
      \frac{1}{N}\,
      \mathbb{E}\left[\sum_{k=0}^{N-1} \mathcal{E}_{k}^{2}\right] 
      \label{eqn:P_2}\\
    \text{s.t.}\quad 
      & \lim_{N \to \infty}
      \frac{1}{N}\sum_{k=0}^{N-1} D_{k} \;\le\; p_{max}. 
      \nonumber
  \end{align}
\end{theorem}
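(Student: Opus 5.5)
The plan is the standard completion-of-squares argument for LQG-type problems under partial information, with the separation property cited after Theorem~\ref{theorem:optimal control} doing the key work. Let $K$ be the Riccati solution of Theorem~\ref{theorem:optimal control}. The first step is to expand $K X_{k+1}^2 - K X_k^2$ using \eqref{eqn:system_Model}, add the stage cost $qX_k^2 + rU_k^2$, and complete the square in $U_k$. Using the Riccati identity $K = q + a^2K - (abK)^2/(r+b^2K)$, this gives
\begin{equation*}
qX_k^2 + rU_k^2 = K X_k^2 - K X_{k+1}^2 + (r+b^2K)\bigl(U_k + L X_k\bigr)^2 + 2K W_k(aX_k+bU_k) + K W_k^2 .
\end{equation*}
Next I take expectations. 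Since $W_k$ is zero mean and independent of $(X_k,U_k)$ (both are functions of $W_0,\dots,W_{k-1}$ and possibly independent scheduling randomization), the cross term vanishes and $\mathbb{E}[KW_k^2]=K\sigma_W^2$. Summing over $k$, dividing by $N$, and letting $N\to\infty$, the telescoping term $K(\mathbb{E}X_0^2-\mathbb{E}X_N^2)/N$ disappears, provided $\mathbb{E}X_N^2=o(N)$. This holds for any policy with finite average cost, which is the only class that matters. The result is
\begin{equation*}
J(\boldsymbol{\pi},\boldsymbol{\mu}) = K\sigma_W^2 + (r+b^2K)\lim_{N\to\infty}\frac1N\,\mathbb{E}\Bigl[\sum_{k=0}^{N-1}(U_k+LX_k)^2\Bigr].
\end{equation*}

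The second step minimizes over $\boldsymbol{\mu}$ for fixed $\boldsymbol{\pi}$. Write $U_k+LX_k = (U_k+L\hat X_k) + L\mathcal{E}_k$. Because $U_k$ and $\hat X_k$ are $I^\mathcal{C}_k$-measurable and $\mathbb{E}[\mathcal{E}_k\mid I^\mathcal{C}_k]=0$, the cross term has zero expectation, so
\begin{equation*}
\mathbb{E}(U_k+LX_k)^2 = \mathbb{E}(U_k+L\hat X_k)^2 + L^2\,\mathbb{E}\mathcal{E}_k^2 .
\end{equation*}
The central point is that, under state-independent scheduling, $\mathcal{E}_k$ does not depend on $\boldsymbol{\mu}$. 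Since $I^\mathcal{S}_k$ contains no state information, the instants $t_n$ are exogenous to the control. With $Y$ delayed by $\tau$, $\mathcal{E}_k$ is then a linear combination of the $W_j$ since the last delivered sample, with $\hat X_k$ propagated forward using the known inputs. So the control terms cancel, as noted from \cite{etcibasi2026separation}. It follows that the first term is nonnegative and is zeroed by $U_k=-L\hat X_k$, which recovers Theorem~\ref{theorem:optimal control}.

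The third step collects the result. The optimal cost is
\begin{equation*}
K\sigma_W^2 + (r+b^2K)L^2\lim_{N\to\infty}\frac1N\,\mathbb{E}\Bigl[\sum_{k=0}^{N-1}\mathcal{E}_k^2\Bigr],
\end{equation*}
where $(r+b^2K)L^2>0$ when $a,b\neq 0$. (If $L=0$, scheduling is irrelevant and the equivalence holds trivially, since every feasible $\boldsymbol{\pi}$ is optimal for both problems.) The constraint involves only $\boldsymbol{\pi}$ and is untouched, so $P_1$ and $P_2$ share minimizers over $\Pi$.

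The main obstacle is the rigor of the limits: showing that the telescoping boundary term and the exchange of limits are harmless, especially for unstable $a$, where some schedules give infinite cost. I would handle this by restricting attention to schedules for which the $P_2$ objective is finite. For these, $\mathbb{E}X_k^2$ is bounded under $U_k=-L\hat X_k$, because the closed loop $a-bL$ is stable and is driven by the $\mathcal{E}$ and $W$ terms. Any schedule with infinite $P_2$ cost also gives infinite $P_1$ cost by the decomposition above, so both problems exclude it.
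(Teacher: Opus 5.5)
Your proposal is correct and follows the same route as the paper. The paper's proof simply invokes the Riccati telescoping identity from \cite{etcibasi2026separation}, which it displays in finite-horizon form with terminal weight $qX_N^2$ and $K_0$; you instead use the stationary $K$ and dispose of $K\,\mathbb{E}[X_N^2]/N$ directly. Both then rely on the orthogonal split of $(U_k+LX_k)^2$ and the control-independence of $\mathcal{E}_k$ under state-independent scheduling.

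One small correction to your degenerate-case aside. If $L=0$ because $b=0$ (with $0<|a|<1$), then $P_1$ is schedule-independent but the $P_2$ objective is not, so every $P_2$-optimal schedule is $P_1$-optimal but not conversely. Only $a=0$ is genuinely trivial, since there $\mathcal{E}_k=W_{k-1}$ for $k\ge 1$.
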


\begin{proof}
    The result follows from the Riccati telescoping argument, see
    \cite{etcibasi2026separation} for details. In particular, it was shown
    that
    \begin{align}
        \frac{1}{N} \mathbb{E} \Bigg[
        q X_N^2 + & \sum_{k=0}^{N-1} \Big( q X_k^2 + r U_k^2 \Big)
        \Bigg]
        =
        \frac{K_0}{N} \mathbb{E} [X_0^2] \notag \\
        & +
        \frac{1}{N} \mathbb{E}\left[
        \sum_{k=0}^{N-1}
        L^2 \left( r + b^2 K \right)
        \left( X_k - \mathbb{E}[X_k \mid I^\mathcal{C}_{k}] \right)^2
        \right]
        + K \sigma_W^2 .
        \label{eqn:eqv_prob_last_eqn}
    \end{align}
\end{proof}

Under the state-independent information structure considered here, the control law is fixed by the conditional state estimate, while the scheduler affects performance only through the temporal pattern of received updates and the resulting estimation-error process. This observation allows us to study scheduling through the distribution of the inter-scheduling intervals. In the next section, we use this formulation to examine whether classical freshness metrics, in particular mean AoI and peak AoI, are sufficient to characterize the optimal scheduling policy for the LQR tracking problem.

\section{Control-Aware Freshness under i.i.d. Disturbances}
\label{sec:AoI_iid_Noise}

In this section, we derive the freshness objective induced directly by the LQR tracking criterion. Under state-independent scheduling, the resulting objective depends on the inter-scheduling interval distribution through higher-order terms, not only through its mean. This characterization allows us to identify the optimal rate-limited scheduler and construct an example showing that a lower mean AoI can still lead to a higher tracking cost.

Before giving the main theorem for this section, we first characterize the estimation-error dynamics between two consecutive received updates. For \(1 \le m \le \Delta_n\), we have
\begin{align*}
	X_{t_n+\tau+m} & = a^{\tau+m}X_{t_n}
	+\sum_{j=0}^{\tau+m-1} a^{\tau+m-1-j}bU_{t_n+j} +\sum_{j=0}^{\tau+m-1} a^{\tau+m-1-j}W_{t_n+j}.
\end{align*}
Under state-independent scheduling policies and i.i.d. zero-mean disturbances,
\begin{align*}
	\mathbb{E}[W_{t_n+j}\mid I^{\mathcal C}_{t_n+\tau+m}]
	=
	0,
	\qquad 0 \le j \le \tau+m,
\end{align*}
which implies
\begin{align}
	\mathcal{E}_{t_n+\tau+m}
	=
	\sum_{j=0}^{\tau+m-1}
	a^{\tau+m-1-j}W_{t_n+j}. \label{eqn:est_error_dynamics}
\end{align}
The following theorem is the main result of this section. It shows that the LQR-induced freshness objective depends not only on mean AoI, but also on higher-order statistics of the inter-scheduling interval distribution.

\begin{theorem} \label{theorem:equivalent_problem_state_indep_iid}
    Consider the system~\eqref{eqn:system_Model}--\eqref{eqn:update_info} with $q,r>0$, a fixed delay $\tau \ge 1$, and i.i.d. zero-mean disturbances $W_k$ satisfying $\mathbb{E}[W_k^2]=\sigma_W^2<\infty$. Under state-independent scheduling policies that induce a stationary and ergodic inter-scheduling interval process $\{\Delta_n\}_{n\ge0}$ with finite-valued support, the tracking problem $P_1$ can be equivalently written as follows.
    
    For $|a| \neq 1$,
    \begin{align}
        P_3: \;& \min_{\Delta_n} \quad 
        \frac{\mathbb{E}\!\left[a^{2\Delta_n}\right]-1}
        {\mathbb{E}\!\left[\Delta_n\right]}\\
        & \; \text{s.t.}\quad 
        \frac{1}{\mathbb{E}\!\left[\Delta_n\right]} \le p_{max}. 
        \nonumber
    \end{align}
    
    For $|a|=1$,
    \begin{align}
        P_4: \;& \min_{\Delta_n} \quad 
        \frac{\mathbb{E}\!\left[\Delta_n^2\right]}
        {\mathbb{E}\!\left[\Delta_n\right]}\\
        & \; \text{s.t.}\quad 
        \frac{1}{\mathbb{E}\!\left[\Delta_n\right]} \le p_{max}. 
        \nonumber
    \end{align}
\end{theorem}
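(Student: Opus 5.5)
Starting from Theorem~\ref{theorem:equivalent problem}, it suffices to evaluate the time-average estimation error $\lim_{N\to\infty}\frac{1}{N}\mathbb{E}\bigl[\sum_{k<N}\mathcal{E}_k^2\bigr]$ as a functional of the law of $\{\Delta_n\}$. The plan is a renewal-reward argument. Partition time into the reception cycles $\{t_n+\tau+1,\dots,t_{n+1}+\tau\}$, each of length $\Delta_n$. On the $n$th cycle, \eqref{eqn:est_error_dynamics} gives $\mathcal{E}_{t_n+\tau+m}$ explicitly for $1\le m\le\Delta_n$.

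\textbf{Step 1: reward per cycle.} Because scheduling is state-independent, $\Delta_n$ is independent of the disturbances $W_{t_n},\dots$ that enter the error on that cycle. Together with the i.i.d. zero-mean assumption, this gives
\begin{align*}
\mathbb{E}\bigl[\mathcal{E}_{t_n+\tau+m}^2 \mid \Delta_n\bigr]=\sigma_W^2\sum_{i=0}^{\tau+m-1}a^{2i}.
\end{align*}
Summing over $m=1,\dots,\Delta_n$ gives the cycle reward $R_n=\sigma_W^2\sum_{m=1}^{\Delta_n}\sum_{i=0}^{\tau+m-1}a^{2i}$.

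For $|a|\neq1$,
\begin{align*}
R_n=\frac{\sigma_W^2}{a^2-1}\Bigl(a^{2\tau+2}\,\frac{a^{2\Delta_n}-1}{a^2-1}-\Delta_n\Bigr).
\end{align*}
For $|a|=1$,
\begin{align*}
R_n=\sigma_W^2\Bigl(\tau\Delta_n+\frac{\Delta_n(\Delta_n+1)}{2}\Bigr).
\end{align*}

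\textbf{Step 2: pass to the time average.} Since $\{\Delta_n\}$ is stationary and ergodic with finite support, the reward sequence is stationary and ergodic with bounded moments. Apply the ergodic theorem to $\sum_n R_n/\sum_n\Delta_n$, then use bounded convergence to move from the almost-sure limit to the limit of expectations. This yields
\begin{align*}
\lim_{N\to\infty}\frac{1}{N}\mathbb{E}\Bigl[\sum_{k<N}\mathcal{E}_k^2\Bigr]=\frac{\mathbb{E}[R_n]}{\mathbb{E}[\Delta_n]}.
\end{align*}
Boundary cycles and the initial segment before $t_0+\tau$ contribute $O(1)$ and vanish after dividing by $N$.

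I expect Step~2 to be the main obstacle. The concern is that $\mathcal{E}$ on a cycle depends on $\Delta_n$ while $\{\Delta_n\}$ may be correlated across $n$. Finite support resolves the integrability issue, and independence of $W$ from the schedule still allows conditioning as in Step~1. Stationary ergodicity is then enough to identify the limit; no i.i.d. renewal structure is needed.

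\textbf{Step 3: strip constants.} For $|a|\neq1$,
\begin{align*}
\frac{\mathbb{E}[R_n]}{\mathbb{E}[\Delta_n]}=\frac{\sigma_W^2a^{2\tau+2}}{(a^2-1)^2}\cdot\frac{\mathbb{E}[a^{2\Delta_n}]-1}{\mathbb{E}[\Delta_n]}-\frac{\sigma_W^2}{a^2-1}.
\end{align*}
The prefactor is positive and the last term is a constant, so minimizing the cost is equivalent to $P_3$. For $|a|=1$, the ratio is $\sigma_W^2\bigl(\tau+\tfrac12+\tfrac12\,\mathbb{E}[\Delta_n^2]/\mathbb{E}[\Delta_n]\bigr)$, which gives $P_4$.

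The constraint follows from the same ergodic argument: the long-run fraction of slots with $D_k=1$ equals $1/\mathbb{E}[\Delta_n]$. Finally, combine with the affine relation \eqref{eqn:eqv_prob_last_eqn}, whose coefficient $L^2(r+b^2K)$ is positive, to conclude that $P_1$ is equivalent to $P_3$ or $P_4$ as stated.
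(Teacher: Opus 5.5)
Your proposal is correct and follows essentially the same route as the paper: decompose time into reception cycles, compute the conditional cycle cost from \eqref{eqn:est_error_dynamics} using independence of the schedule from the i.i.d.\ disturbances, identify the time average as $\mathbb{E}[R_n]/\mathbb{E}[\Delta_n]$ via the ergodic theorem (with $p_c=1/\mathbb{E}[\Delta_n]$ for the constraint), and strip the positive prefactor and additive constants; your closed forms agree with the paper's. Taking the conditional mean given the schedule as the reward, so that its stationarity and ergodicity come directly from $\{\Delta_n\}$, and then applying bounded convergence, is a slightly cleaner justification than the paper's bare assertion that the raw cycle costs $\{H_n\}$ are stationary and ergodic; note only that, as in the paper, strict positivity of the prefactors $a^{2\tau+2}/(a^2-1)^2$ and $L^2(r+b^2K)$ tacitly requires $a\neq 0$ (and $b\neq 0$).
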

\vspace{.5em}
\begin{proof}
Let \(M_N\) denote the total number of scheduling instants up to time \(N\), i.e.,
\begin{align}
    M_N \triangleq \sum_{k=0}^{N-1} D_k .
\end{align}
Assume that the long-term communication rate exists and is given by
\begin{align}
    p_c \triangleq \lim_{N\to\infty}\frac{M_N}{N} \le p_{max}.
\end{align}
Let $t_0$ denote the first scheduling instant. Then, the total cost can be decomposed into an initial transient term, the contributions of the completed scheduling intervals, and a final residual term as
\begin{align}
    \sum_{k=0}^{N-1}\mathcal{E}_k^2
    =
    \sum_{k=0}^{t_0+\tau}\mathcal{E}_k^2
    +
    \sum_{n=0}^{M_N-1}
    \sum_{k=t_n+\tau+1}^{t_{n+1}+\tau}
    \mathcal{E}_k^2
    +
    \sum_{k=t_{M_N}+\tau+1}^{N-1}\mathcal{E}_k^2 .
\end{align}
Dividing both sides by $N$ and taking the limit as $N\to\infty$ yields
\begin{align}
    & \lim_{N\to\infty}
    \frac{1}{N}
    \sum_{k=0}^{N-1}\mathcal{E}_k^2
     =
    \lim_{N\to\infty}
    \frac{1}{N}
    \Bigg(
    \sum_{k=0}^{t_0+\tau}\mathcal{E}_k^2  +
    \sum_{n=0}^{M_N-1}
    \sum_{k=t_n+\tau+1}^{t_{n+1}+\tau}
    \mathcal{E}_k^2
    +
    \sum_{k=t_{M_N}+\tau+1}^{N-1}\mathcal{E}_k^2
    \Bigg).
\end{align}

Under state-independent scheduling policies that induce a stationary and ergodic inter-scheduling interval process $\{\Delta_n\}_{n\ge0}$ with finite-valued support, and under finite-valued disturbances with finite variance, the cycle cost
\begin{align}
    H_n
    \triangleq
    \sum_{k=t_n+\tau+1}^{t_{n+1}+\tau}
    \mathcal{E}_k^2
\end{align}
has finite expectation. Consequently, the initial transient and residual terms become asymptotically negligible, yielding
\begin{align}
    \lim_{N\to\infty}
    \frac{1}{N}
    \sum_{k=0}^{N-1}\mathcal{E}_k^2
    =
    \lim_{N\to\infty}
    \frac{1}{N}
    \sum_{n=0}^{M_N-1} H_n .
\end{align}
Using
\begin{align}
    \lim_{N\to\infty}\frac{M_N}{N}=p_c ,
\end{align}
we obtain
\begin{align}
    \lim_{N\to\infty}
    \frac{1}{N}
    \sum_{k=0}^{N-1}\mathcal{E}_k^2
    =
    p_c
    \lim_{N\to\infty}
    \frac{1}{M_N}
    \sum_{n=0}^{M_N-1} H_n .
\end{align}
Moreover, under the same assumptions, the induced process $\{H_n\}_{n\ge0}$ is stationary and ergodic. Therefore, by the ergodic theorem,
\begin{align}
    \lim_{N\to\infty}
    \frac{1}{M_N}
    \sum_{n=0}^{M_N-1} H_n
    =
    \mathbb{E}[H_n] .
\end{align}
Hence,
\begin{align}
    \lim_{N\to\infty}
    \frac{1}{N}
    \sum_{k=0}^{N-1}\mathcal{E}_k^2
    =
    p_c
    \mathbb{E}
    \left[
    \sum_{k=t_n+\tau+1}^{t_{n+1}+\tau}
    \mathcal{E}_k^2
    \right].
\end{align}

Next, observe that the completed scheduling intervals cover the horizon up to an asymptotically negligible boundary term. Therefore,
\begin{align}
    \sum_{n=0}^{M_N-1}\Delta_n
    =
    N + o(N),
    \qquad \text{as } N\to\infty .
\end{align}
Dividing both sides by $M_N$ gives
\begin{align}
    \frac{1}{M_N}
    \sum_{n=0}^{M_N-1}\Delta_n
    =
    \frac{N}{M_N}
    +
    o\!\left(\frac{N}{M_N}\right).
\end{align}
Taking the limit and using
\begin{align}
    \lim_{N\to\infty}\frac{M_N}{N}=p_c ,
\end{align}
yields
\begin{align}
    \lim_{N\to\infty}
    \frac{1}{M_N}
    \sum_{n=0}^{M_N-1}\Delta_n
    =
    \frac{1}{p_c}.
\end{align}
Since $\{\Delta_n\}_{n\ge0}$ is stationary and ergodic, another application of the ergodic theorem gives
\begin{align}
    \mathbb{E}[\Delta_n]
    =
    \lim_{N\to\infty}
    \frac{1}{M_N}
    \sum_{n=0}^{M_N-1}\Delta_n
    =
    \frac{1}{p_c}.
\end{align}
Therefore,
\begin{align}
    p_c
    =
    \frac{1}{\mathbb{E}[\Delta_n]} .
\end{align}
Substituting this relation into the objective yields the equivalent optimization problem
\begin{align}
    \min_{\Delta_n}
    \quad
    \frac{1}{\mathbb{E}[\Delta_n]}
    \,
    \mathbb{E}
    \left[
    \sum_{m=1}^{\Delta_n}
    \mathcal{E}_{t_n+\tau+m}^2
    \right]
\end{align}
subject to
\begin{align}
    \frac{1}{\mathbb{E}[\Delta_n]}
    \le p_{max} .
\end{align}
Hence, using \eqref{eqn:est_error_dynamics}, for $|a|\neq 1$ we have
\begin{align}
    &\mathbb{E}
    \left[
    \sum_{m=1}^{\Delta_n}
    \mathcal{E}_{t_n+\tau+m}^2
    \right]
    \nonumber\\
    &\overset{(a)}{=}
    \sigma_w^2
    \sum_{s=1}^{\infty}
    \Pr(\Delta_n=s)
    \sum_{m=1}^{s}
    \frac{1-a^{2(\tau+m)}}{1-a^2}
    \nonumber\\
    &=
    \sigma_w^2
    \left(
    \frac{\mathbb{E}[\Delta_n]}{1-a^2}
    +
    \frac{a^{2\tau+2}}{(1-a^2)^2}
    \left(
    \mathbb{E}[a^{2\Delta_n}]-1
    \right)
    \right).
\end{align}
For $|a|=1$, we obtain
\begin{align}
    &\mathbb{E}
    \left[
    \sum_{m=1}^{\Delta_n}
    \mathcal{E}_{t_n+\tau+m}^2
    \right]
    \nonumber\\
    & \overset{(a)}{=}
    \sum_{s=1}^{\infty}
    \Pr(\Delta_n=s)
    \sum_{m=1}^{s}
    (\tau+m)\sigma_w^2
    \nonumber\\
    &=
    \sum_{s=1}^{\infty}
    \Pr(\Delta_n=s)
    \left(
    \tau s
    +
    \frac{s}{2}
    +
    \frac{s^2}{2}
    \right)
    \sigma_w^2
    \nonumber\\
    &=
    \left(
    \left(\tau+\frac{1}{2}\right)\mathbb{E}[\Delta_n]
    +
    \frac{1}{2}\mathbb{E}[\Delta_n^2]
    \right)
    \sigma_w^2 .
\end{align}
where step $(a)$ for both $|a| \neq 1$ and $|a|=1$ follows because the scheduling policy is state independent and the disturbance process is i.i.d. zero-mean. Therefore, conditioning on the scheduling interval $\Delta_n=s$ does not change the second-order statistics of the disturbance sequence, and we have
\begin{align}
    \mathbb{E}\!\left[\mathcal{E}_{t_n+\tau+m}^2 \mid \Delta_n=s\right]
    =
    \sum_{j=0}^{\tau+m-1} a^{2(\tau+m-1-j)}\sigma_w^2 .
\end{align}
Hence, the equivalent optimization problem becomes the following.

For $|a|\neq 1$,
\begin{align}
    \min_{\Delta_n} & 
    \bigg\{
    \frac{a^{2\tau+2}}{(1-a^2)^2}
    \frac{\mathbb{E}[a^{2\Delta_n}]}
    {\mathbb{E}[\Delta_n]}
    -
    \frac{a^{2\tau+2}}{(1-a^2)^2}
    \frac{1}{\mathbb{E}[\Delta_n]}  +
    \frac{1}{1-a^2}
    \bigg\}
    \sigma_w^2
\end{align}
subject to
\begin{align}
    \frac{1}{\mathbb{E}[\Delta_n]}
    \le p_{max} .
\end{align}

For $|a|=1$,
\begin{align}
    \min_{\Delta_n}
    \quad
    \left(
    \frac{1}{2}
    \frac{\mathbb{E}[\Delta_n^2]}
    {\mathbb{E}[\Delta_n]}
    +
    \tau+\frac{1}{2}
    \right)
    \sigma_w^2 \label{eqn:eqv_cost_iid_case}
\end{align}
subject to
\begin{align}
    \frac{1}{\mathbb{E}[\Delta_n]}
    \le p_{max} .
\end{align}

Ignoring additive and positive multiplicative constants yields the equivalent formulations stated in the theorem.
\end{proof}

Theorem~\ref{theorem:equivalent_problem_state_indep_iid} provides the key link between freshness and control performance. It shows that, under state-independent scheduling, the infinite-horizon LQR tracking problem does not reduce to minimizing mean AoI. Instead, the control-induced freshness objective depends on the full inter-scheduling interval distribution: for $|a|\neq 1$, through the exponential moment $\mathbb{E}[a^{2\Delta_n}]$, and for $|a|=1$, through the second moment $\mathbb{E}[\Delta_n^2]$. Consequently, the theorem converts the qualitative statement that ``freshness matters'' into an explicit optimization criterion determined by the plant dynamics.

We now investigate the optimal AoI distribution for the rate-limited, always-accessible channel in problems $P_6$ and $P_7$.

\begin{definition}[As-periodic-as-possible scheduler]
	\label{def:as_periodic_as_possible}
	For a target rate \(p\in(0,1]\), let
	\(
	d \triangleq \left\lfloor 1/p \right\rfloor,
	\;
	\theta \triangleq 1/p-d .
	\)
	A scheduler is called as periodic as possible with rate \(p\) if
	\[
	\mathbb{P}(\Delta_n=d)=1-\theta,
	\;
	\mathbb{P}(\Delta_n=d+1)=\theta .
	\]
\end{definition}
If \(1/p\) is an integer, this reduces to the periodic scheduler \(\Delta_n=1/p\).

\begin{corollary}
	\label{corollary:optimal_as_periodic_as_possible}
	For the rate-limited, always-accessible channel, the optimal solution of \(P_6\) and \(P_7\) is the as-periodic-as-possible scheduler with rate \(p_{\max}\). Consequently, the rate constraint is active.
\end{corollary}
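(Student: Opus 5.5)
We read \(P_6\) and \(P_7\) as the problems \(P_3\) (\(|a|\neq1\)) and \(P_4\) (\(|a|=1\)) of Theorem~\ref{theorem:equivalent_problem_state_indep_iid}. The decision variable is then the law of a positive-integer-valued interval \(\Delta_n\), and both objectives have the form \(\mathbb{E}[\phi(\Delta_n)]/\mathbb{E}[\Delta_n]\). Here \(\phi(s)=a^{2s}-1\), divided by the positive constant \((1-a^2)^2/a^{2\tau+2}\) which does not affect the minimizer, or \(\phi(s)=s^2\). The key structural fact is that in both cases \(\phi\) is convex on the integers, and the ratio \(\phi(s)/s\) is nondecreasing in \(s\).

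The proof has two steps.

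\textbf{Step 1: fix the mean and optimize the distribution.} Fix \(\mathbb{E}[\Delta_n]=\mu\ge 1/p_{\max}\). Minimizing \(\mathbb{E}[\phi(\Delta_n)]\) over integer-valued laws with mean \(\mu\) is a discrete Jensen-type problem. Let \(\tilde\phi\) be the piecewise-linear interpolation of \(\phi\) through the integer points. Convexity of \(\phi\) on \(\mathbb{Z}\) makes \(\tilde\phi\) convex, and \(\tilde\phi=\phi\) on the integers. Jensen then gives \(\mathbb{E}[\phi(\Delta_n)]=\mathbb{E}[\tilde\phi(\Delta_n)]\ge\tilde\phi(\mu)\). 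Equality holds for the two-point law on \(\lfloor\mu\rfloor\) and \(\lfloor\mu\rfloor+1\) with mean \(\mu\), which is exactly the as-periodic-as-possible law of Definition~\ref{def:as_periodic_as_possible} with \(p=1/\mu\). The only case check is convexity of \(s\mapsto a^{2s}\), which holds for every \(a\neq0\). For \(a=0\) the objective is constant and the claim is trivial. The reduced objective is therefore \(g(\mu)=\tilde\phi(\mu)/\mu\).

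\textbf{Step 2: optimize over the mean.} We must show that \(g\) is nondecreasing on \([1,\infty)\), so that the constraint \(\mu\ge 1/p_{\max}\) is active. Since \(\tilde\phi\) is convex, it suffices that \(\tilde\phi(0)\le 0\) extended linearly, i.e.\ that \(\tilde\phi(\mu)/\mu\) is increasing. The tool is the standard fact that a convex function with \(\phi(0)\le0\) has \(\phi(\mu)/\mu\) nondecreasing, since it is a slope from the origin. For \(\phi(s)=s^2\) we have \(\phi(0)=0\). For \(\phi(s)=a^{2s}-1\) we also have \(\phi(0)=0\). When \(|a|<1\), \(\phi\) is negative but the constant in front, \(a^{2\tau+2}/(1-a^2)^2\), remains positive. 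The slope from the origin of a convex function vanishing at \(0\) is then still nondecreasing, so the argument goes through unchanged. I must include the point \(0\) in the interpolation, which is legitimate because convexity of \(s\mapsto a^{2s}\) or \(s^2\) holds on all integers \(s\ge0\).

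Combining the two steps, the minimum is attained at \(\mu=1/p_{\max}\) by the as-periodic-as-possible scheduler with rate \(p_{\max}\), and the rate constraint binds.

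The main obstacle I expect is making Step 2 rigorous for the interpolated function at non-integer \(\mu\), that is, checking that the two-point law's value \(g(\mu)\) is monotone across integer breakpoints. I would handle this directly. On each interval \([d,d+1]\), \(g(\mu)=(\alpha+\beta\mu)/\mu\) with intercept \(\alpha=\phi(d)-d(\phi(d+1)-\phi(d))\). Convexity together with \(\phi(0)=0\) gives \(\alpha\le0\), so \(g\) is nondecreasing on each piece, and continuity glues the pieces. Strictness, needed for uniqueness, follows from strict convexity when \(|a|\neq1\) or \(\phi=s^2\), provided \(a\neq0\).
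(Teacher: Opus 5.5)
Your proof is correct. It shares the paper's two-step skeleton: fix \(\mu=\mathbb{E}[\Delta_n]\) and show the adjacent two-point law minimizes the numerator, then show the reduced cost is nondecreasing in \(\mu\), so \(\mu^\star=1/p_{\max}\). You justify both steps differently.

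In the fixed-mean step, the paper only asserts that the ``integer-valued analogue'' of Jensen puts all mass on the two integers around \(\mu\); a mass-transfer proof of this appears only later, in Lemma~\ref{lem:discrete_jensen_apap}. Your piecewise-linear interpolation \(\tilde\phi\) plus ordinary Jensen proves it directly, and it makes uniqueness transparent.

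In the mean step, the paper differentiates on each \([m,m+1)\) and obtains \(G_m'(\mu)=H_m(c)/\mu^2\) with \(H_m(c)=1+c^m[m(c-1)-1]\). It then shows \(H_m(c)>0\) by a separate calculus argument in \(c\), and computes \(m(m+1)/\mu^2\) directly for \(P_7\). Your intercept is exactly \(\alpha=-H_m(c)\), so your observation explains that computation. Positivity of \(H_m\) is just the fact that a secant of a convex function, extended to the left, passes below \(\phi(0)=0\); equivalently, \(\tilde\phi(\mu)/\mu\) is a nondecreasing slope from the origin.

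This buys generality: any kernel that is discretely convex on \(\{0,1,2,\dots\}\) and vanishes at \(0\) gives a nondecreasing reduced cost, hence an optimum at \(\mu_0\), for free. Each per-cycle cost is an empty sum at \(\Delta_n=0\), so the correlated-noise kernels \(g_8\), \(g_9\) and \(g_{10}\) all vanish there; for \(g_{10}\) one checks \(H_2+H_3+H_4=0\). Your argument would therefore settle activity for \(P_8\) and \(P_9\) without the auxiliary inequality used for \(g_9\). It would also settle \(P_{10}\) in the regime \(a>0\), \(a\neq\rho\), where the paper's convexity computation holds; its sign argument also covers \(d=1\).

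The paper's derivative route, in its general form Lemma~\ref{lem:apap_mean_monotonicity}, gives something different: the exact criterion \(ng(n+1)\ge(n+1)g(n)\). That is what you fall back on when convexity fails.

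One wording fix in Step 2. The sentence ``it suffices that \(\tilde\phi(0)\le 0\) extended linearly, i.e.\ that \(\tilde\phi(\mu)/\mu\) is increasing'' conflates hypothesis and conclusion. State the hypothesis (\(\tilde\phi\) convex on \([0,\infty)\) with \(\tilde\phi(0)\le 0\)), and then derive the monotone slope from it.
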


\begin{proof}
    Let
    \[
        \mu \triangleq \mathbb{E}[\Delta_n],
        \qquad
        \mu_0 \triangleq \frac{1}{p_{\max}} .
    \]
    Since the communication-rate constraint is
    \[
        \frac{1}{\mathbb{E}[\Delta_n]}\le p_{\max},
    \]
    feasibility is equivalent to \(\mu\ge \mu_0\). We first fix an arbitrary
    feasible mean \(\mu\) and optimize over all positive integer-valued
    distributions with this mean.
    Write
    \[
        m=\lfloor \mu \rfloor,
        \qquad
        \theta=\mu-m,
        \qquad
        \theta\in[0,1).
    \]
    For \(P_6\), where \(|a|\neq1\), let \(c=a^2\). Then \(c>0\), \(c\neq1\),
    and the relevant numerator is \(\mathbb{E}[c^{\Delta_n}]\). The forward
    second difference of \(f(d)=c^d\) is
    \[
        f(d+2)-2f(d+1)+f(d)
        =
        c^d(c-1)^2
        >
        0.
    \]
    Thus \(c^d\) is discretely convex. Equivalently, by Jensen's inequality,
    if the prescribed mean \(\mu\) is an integer, then
    \[
        \mathbb{E}[c^{\Delta_n}]
        \ge
        c^\mu,
    \]
    with equality if \(\Delta_n=\mu\) almost surely, i.e., under the periodic
    scheduler. If \(\mu\) is not an integer, the integer-valued analogue is
    obtained by placing all probability mass on the two adjacent integers
    around \(\mu\). Hence, among all integer-valued \(\Delta_n\) with mean
    \(\mu\), the minimum of \(\mathbb{E}[c^{\Delta_n}]\) is attained by
    \[
        \mathbb{P}(\Delta_n=m)=1-\theta,
        \qquad
        \mathbb{P}(\Delta_n=m+1)=\theta .
    \]
    Equivalently,
    \[
        \mathbb{E}[c^{\Delta_n}]
        \ge
        c^m(1-\theta)+c^{m+1}\theta,
    \]
    with equality under the distribution above.

    The same fixed-mean conclusion holds for \(P_7\). Indeed, when
    \(|a|=1\), the numerator is \(\mathbb{E}[\Delta_n^2]\), and \(d^2\) is
    discretely convex. Hence, for fixed \(\mu\), the minimum of
    \(\mathbb{E}[\Delta_n^2]\) is also attained by the same adjacent
    two-point distribution. Therefore, for either \(P_6\) or \(P_7\), once the
    mean is fixed, the optimal distribution is as periodic as possible.

    It remains to optimize the mean. For \(P_6\), under the adjacent
    two-point distribution and for \(\mu\in[m,m+1)\),
    \[
        V(\mu)
        \triangleq
        \mathbb{E}[c^{\Delta_n}]
        =
        c^m(m+1-\mu)+c^{m+1}(\mu-m)
        =
        c^m\bigl[1+(\mu-m)(c-1)\bigr].
    \]
    Thus the best achievable objective value at mean \(\mu\) is
    \[
        G_m(\mu)
        \triangleq
        \frac{V(\mu)-1}{\mu}.
    \]
    Since \(m\) is fixed on the interval \([m,m+1)\),
    \[
        G_m'(\mu)
        =
        \frac{\mu V'(\mu)-\bigl(V(\mu)-1\bigr)}{\mu^2}
        =
        \frac{1+c^m\bigl[m(c-1)-1\bigr]}{\mu^2}.
    \]
    Define
    \[
        H_m(c)
        \triangleq
        1+c^m\bigl[m(c-1)-1\bigr].
    \]
    Then \(H_m(1)=0\), and
    \[
        H_m'(c)
        =
        m(m+1)c^{m-1}(c-1).
    \]
    Hence \(H_m'(c)<0\) for \(0<c<1\) and \(H_m'(c)>0\) for \(c>1\). Since
    \(H_m(1)=0\), it follows that
    \[
        H_m(c)>0,
        \qquad c>0,\quad c\neq1.
    \]
    Therefore \(G_m'(\mu)>0\). The best achievable value for \(P_6\) is
    increasing in the mean \(\mu\).

    For \(P_7\), again using the adjacent two-point distribution,
    \[
        V(\mu)
        \triangleq
        \mathbb{E}[\Delta_n^2]
        =
        m^2(m+1-\mu)+(m+1)^2(\mu-m)
        =
        \mu(2m+1)-m(m+1).
    \]
    The best achievable objective value is
    \[
        G_m(\mu)
        \triangleq
        \frac{V(\mu)}{\mu},
    \]
    and therefore
    \[
        G_m'(\mu)
        =
        \frac{\mu(2m+1)-V(\mu)}{\mu^2}
        =
        \frac{m(m+1)}{\mu^2}
        \ge 0.
    \]
    Thus the best achievable value for \(P_7\) is also increasing in
    \(\mu\).

    Since the optimal fixed-mean objective is increasing in \(\mu\) for both
    \(P_6\) and \(P_7\), the smallest feasible mean is optimal:
    \[
        \mu^\star=\mu_0=\frac{1}{p_{\max}}.
    \]
    Hence
    \[
        \mathbb{E}[\Delta_n^\star]=\frac{1}{p_{\max}},
        \qquad
        \frac{1}{\mathbb{E}[\Delta_n^\star]}=p_{\max},
    \]
    so the communication-rate constraint is active.
    Substituting \(\mu^\star\) into the fixed-mean optimizer, with
    \[
        d=\lfloor \mu^\star \rfloor,
        \qquad
        \theta=\mu^\star-d,
    \]
    gives
    \[
        \mathbb{P}(\Delta_n^\star=d)=1-\theta,
        \qquad
        \mathbb{P}(\Delta_n^\star=d+1)=\theta .
    \]
    This is precisely the as-periodic-as-possible scheduler with rate
    \(p_{\max}\). If \(\mu^\star\) is an integer, then \(\theta=0\), and the
    policy reduces to the deterministic periodic scheduler.
\end{proof}

As shown above, the inequality constraint is active under the optimal policy.
However, this monotonic behavior does not hold for every inter-scheduling
interval distribution. In particular, if an arbitrary inter-scheduling interval
distribution is shifted to the right, thereby decreasing the communication
rate, the tracking cost is not guaranteed to increase. Equivalently, there
exist inter-scheduling interval distributions for which decreasing the
communication rate can improve performance. This occurs because the mean
inter-scheduling interval appears in the denominator of the objective functions
in $P_6$ and $P_7$. We illustrate this point as follows.

Consider the objective in $P_7$,
\[
    J(\Delta)=\frac{\mathbb{E}[\Delta^2]}{\mathbb{E}[\Delta]}.
\]
Let $\widetilde{\Delta}=\Delta+s$, where $s>0$. Then
\[
    J(\widetilde{\Delta})
    =
    \frac{\mathbb{E}[(\Delta+s)^2]}{\mathbb{E}[\Delta+s]}
    =
    \frac{\mathbb{E}[\Delta^2]+2s\mathbb{E}[\Delta]+s^2}
    {\mathbb{E}[\Delta]+s}.
\]
Writing
\[
    \mu=\mathbb{E}[\Delta],
    \qquad
    m_2=\mathbb{E}[\Delta^2],
\]
we obtain
\[
    J(\widetilde{\Delta})-J(\Delta)
    =
    \frac{s\left(2\mu^2+s\mu-m_2\right)}
    {\mu(\mu+s)}.
\]
Therefore,
\[
    J(\widetilde{\Delta})<J(\Delta)
    \quad \Longleftrightarrow \quad
    m_2>2\mu^2+s\mu.
\]
Equivalently,
\[
    J(\widetilde{\Delta})>J(\Delta)
    \quad \Longleftrightarrow \quad
    m_2<\mu(s+2\mu).
\]
Thus, for the \(P_7\) objective, shifting the distribution to the right
increases the cost only when the second moment is not too large. If the
original inter-scheduling interval distribution has a sufficiently large second
moment, namely \(m_2>\mu(s+2\mu)\), the same shift decreases the objective
value even though the communication rate is reduced.

Now consider the objective in $P_6$,
\[
    J(\Delta)
    =
    \frac{\mathbb{E}[a^{2\Delta}]-1}{\mathbb{E}[\Delta]}.
\]
Let
\[
    c\triangleq a^2,
    \qquad c>0,\quad c\neq 1,
\]
so that
\[
    J(\Delta)=\frac{\mathbb{E}[c^\Delta]-1}{\mathbb{E}[\Delta]}.
\]
Let $\widetilde{\Delta}=\Delta+s$, where $s>0$. Then
\[
    J(\widetilde{\Delta})
    =
    \frac{\mathbb{E}[c^{\Delta+s}]-1}
    {\mathbb{E}[\Delta+s]}
    =
    \frac{c^s\mathbb{E}[c^\Delta]-1}
    {\mathbb{E}[\Delta]+s}.
\]
Writing
\[
    \mu=\mathbb{E}[\Delta],
    \qquad
    M_c=\mathbb{E}[c^\Delta],
\]
we obtain
\[
    J(\widetilde{\Delta})-J(\Delta)
    =
    \frac{\mu(c^sM_c-1)-(\mu+s)(M_c-1)}
    {\mu(\mu+s)}.
\]
Equivalently,
\[
    J(\widetilde{\Delta})-J(\Delta)
    =
    \frac{M_c\bigl(\mu(c^s-1)-s\bigr)+s}
    {\mu(\mu+s)}.
\]
Therefore,
\[
    M_c\bigl(\mu(c^s-1)-s\bigr)+s<0.
\]
Equivalently, writing
\[
    \eta_s \triangleq s+\mu-\mu c^s=s-\mu(c^s-1),
\]
we have
\[
    J(\widetilde{\Delta})>J(\Delta)
    \quad \Longleftrightarrow \quad
    s>\eta_s M_c .
\]
This gives two cases. If
\[
    \eta_s\le 0
    \quad \Longleftrightarrow \quad
    s+\mu\le \mu c^s,
\]
then \(s>\eta_s M_c\) holds automatically because \(s>0\) and \(M_c>0\). In this
case, shifting the distribution to the right necessarily increases the
objective. If
\[
    \eta_s>0
    \quad \Longleftrightarrow \quad
    s+\mu>\mu c^s,
\]
then the sign depends on the exponential moment:
\[
    J(\widetilde{\Delta})>J(\Delta)
    \quad \Longleftrightarrow \quad
    M_c<\frac{s}{\eta_s},
\]
while
\[
    J(\widetilde{\Delta})<J(\Delta)
    \quad \Longleftrightarrow \quad
    M_c>\frac{s}{\eta_s}.
\]
Thus, when \(\eta_s>0\) and \(\mathbb{E}[c^\Delta]\) is sufficiently large,
shifting the inter-scheduling interval distribution to the right can reduce the
objective value, even though the communication rate is reduced.

As we observed, shifting a distribution to the right lowers the communication rate, but it can still reduce the tracking cost because the objective depends on the entire interval distribution, not only on its mean. The following example illustrates this phenomenon.

\paragraph*{\textbf{Interesting observation:}}
Consider a two-point randomized
inter-scheduling interval distribution
\[
    \Delta =
    \begin{cases}
        \delta_s, & \text{with probability } 1-p_\ell,\\
        \delta_\ell, & \text{with probability } p_\ell,
    \end{cases}
\]
where \(\delta_s,\delta_\ell\in\mathbb{Z}_{>0}\),
\(\delta_s<\delta_\ell\), and \(p_\ell\in[0,1]\). We also define the shifted
distribution
\[
    \widetilde{\Delta}=\Delta+s,
    \qquad s\in\mathbb{Z}_{>0},
\]
so that
\[
    \mathbb{P}(\widetilde{\Delta}=\delta_s+s)=1-p_\ell,
    \qquad
    \mathbb{P}(\widetilde{\Delta}=\delta_\ell+s)=p_\ell .
\]

\begin{figure}[t]
    \centering
    \begin{subfigure}{0.48\linewidth}
        \centering
        \includegraphics[width=0.7\linewidth]{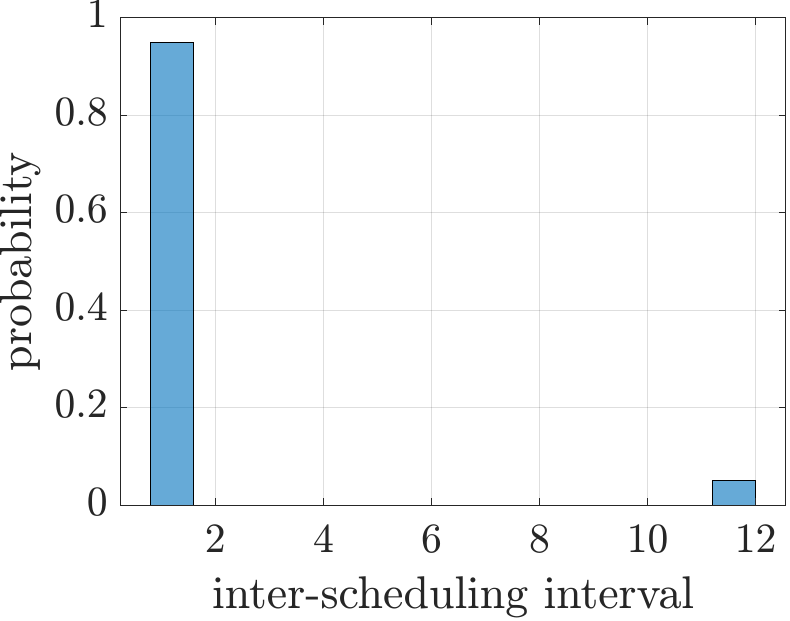}
        \caption{Original two-point distribution.}
        \label{fig:iut_distribution_two_point_iid_original}
    \end{subfigure}
    \hfill
    \begin{subfigure}{0.48\linewidth}
        \centering
        \includegraphics[width=0.7\linewidth]{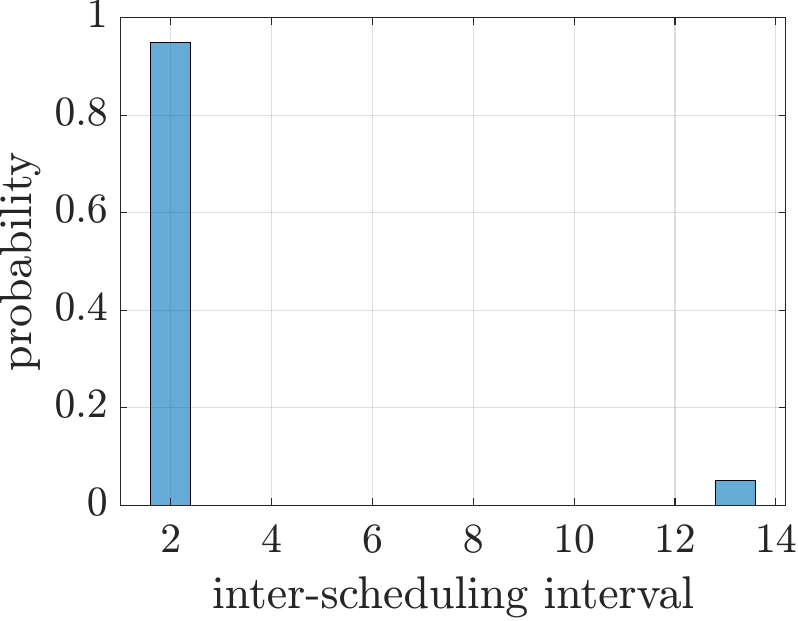}
        \caption{Shifted two-point distribution.}
        \label{fig:iut_distribution_two_point_iid_shifted}
    \end{subfigure}
    \caption{Original and shifted two-point inter-scheduling interval
    distributions for \(\delta_s=1\), \(\delta_\ell=12\),
    \(p_\ell=0.05\), and \(s=1\). The shifted scheduler increases each
    inter-scheduling interval by one time step, thereby reducing the
    communication rate.}
    \label{fig:iut_distribution_two_point_iid}
\end{figure}
Now consider the specific choice illustrated in
Fig.~\ref{fig:iut_distribution_two_point_iid},
\[
    \delta_s=1,
    \qquad
    \delta_\ell=12,
    \qquad
    p_\ell=0.05,
    \qquad
    s=1.
\]
Then
\[
    \mathbb{P}(\Delta=1)=0.95,
    \qquad
    \mathbb{P}(\Delta=12)=0.05,
\]
whereas
\[
    \mathbb{P}(\widetilde{\Delta}=2)=0.95,
    \qquad
    \mathbb{P}(\widetilde{\Delta}=13)=0.05.
\]
The corresponding means are
\[
    \mathbb{E}[\Delta]
    =
    0.95(1)+0.05(12)
    =
    1.55,
\]
and
\[
    \mathbb{E}[\widetilde{\Delta}]
    =
    0.95(2)+0.05(13)
    =
    2.55.
\]
Therefore,
\[
    \frac{1}{\mathbb{E}[\widetilde{\Delta}]}
    <
    \frac{1}{\mathbb{E}[\Delta]},
\]
so the shifted distribution has a smaller communication rate.

For the objective in \(P_7\),
\[
    \mathbb{E}[\Delta^2]
    =
    0.95(1)^2+0.05(12)^2
    =
    8.15,
\]
and hence
\[
    J(\Delta)
    =
    \frac{\mathbb{E}[\Delta^2]}{\mathbb{E}[\Delta]}
    =
    \frac{8.15}{1.55}
    \approx 5.26.
\]
For the shifted distribution,
\[
    \mathbb{E}[\widetilde{\Delta}^2]
    =
    0.95(2)^2+0.05(13)^2
    =
    12.25,
\]
so
\[
    J(\widetilde{\Delta})
    =
    \frac{\mathbb{E}[\widetilde{\Delta}^2]}
    {\mathbb{E}[\widetilde{\Delta}]}
    =
    \frac{12.25}{2.55}
    \approx 4.80.
\]
Thus,
\[
    J(\widetilde{\Delta})<J(\Delta).
\]
This example shows that shifting the inter-scheduling interval distribution to
the right decreases the communication rate, but can also decrease the
equivalent objective.

This example also admits a queueing-theoretic interpretation. Suppose the channel is usually fast, but occasionally incurs a large service time: the effective update time is \(1\) with probability \(0.95\) and \(12\) with probability \(0.05\). One might expect that the best strategy is to schedule updates as frequently as possible. The calculation above shows that this intuition can fail: in this example, scheduling once every two time steps yields better tracking performance than scheduling at every time step. This is consistent with the ``update-or-wait'' phenomenon in~\cite{sun2017update}, where zero-wait policies are shown to be suboptimal for certain AoI penalty functions and service-time distributions. The key distinction is that~\cite{sun2017update} optimizes freshness directly, whereas here the freshness objective emerges from the LQR tracking cost.

Before closing the i.i.d. case analysis, we clarify the connection with the
classical time-average AoI metric. As noted in the system model, classical AoI
is an instantaneous, time-varying process, whereas our analysis uses the AoI
sampled at the controller observation times. Under the same stationary
inter-scheduling interval model and constant delay \(\tau\), the classical
time-average AoI is
\[
    \frac{1}{2}
    \frac{\mathbb{E}[\Delta_n^2]}{\mathbb{E}[\Delta_n]}
    +
    \tau+\frac{1}{2}.
\]
Thus, up to an affine scaling and shift, the classical mean-AoI objective
coincides with \(P_4\), the special case of our LQR-induced objective when
\(|a|=1\). This equivalence is fragile: it holds only at the marginal open-loop
gain \(|a|=1\). If \(a\) is perturbed even slightly away from this value, the
equivalent objective becomes \(P_3\), which depends on the exponential moment
\(\mathbb{E}[a^{2\Delta_n}]\) rather than only on the second moment. Therefore,
even near \(a=1\), minimizing classical mean AoI need not minimize the LQR
tracking cost. In Fig.~\ref{fig:iid_two_point_a_near_one}, we illustrate this with an example where the LQR cost increases although the mean AoI decreases.

\section{Control-Aware Freshness under Correlated Disturbances}
\label{sec:AoI_Correlated_Noise}
The preceding section isolates the effect of the inter-scheduling interval distribution when the disturbance sequence is i.i.d. Physical systems, however, generally experience temporally correlated disturbances. Vehicle velocities, actuator loads, wind disturbances, and human-driven traffic trajectories all exhibit temporal persistence. To account for this structure, we now extend the analysis to correlated disturbances and show that the same distributional message remains: mean AoI is not sufficient, and the relevant cost depends on exponential moment terms determined jointly by the plant dynamics and the correlation decay.

Assume that the disturbance process $\{W_k\}_{k\ge0}$ is wide-sense stationary and zero mean, with exponentially decaying autocorrelation given by
\begin{equation}
    \mathbb{E}[W_iW_j] = R_W(i-j) = e^{-\beta |i-j|}\sigma_W^2 , \label{eqn:noise_ACF}
\end{equation}
where $\beta \ge 0$ controls the correlation strength. This exponential form describes a disturbance process with exponentially fading memory and a single dominant correlation timescale. Disturbances at nearby time instants are strongly correlated, while their dependence decays exponentially with temporal separation, corresponding to short-range dependence. Such exponential velocity correlations are not merely theoretical assumptions but have been observed experimentally in physical systems, including the Lagrangian motion of tracer particles in fully developed turbulent flows~\cite{mordant2001measurement}. Exponentially correlated acceleration models are also widely used to represent persistent vehicle and target maneuvers~\cite{singer1970estimating}. For the NGSIM US-101 velocity data considered in Section~\ref{sec:Real_Sim}, the parameter $\beta$ is selected by minimizing the mean-squared error between the empirical average ACF and~\eqref{eqn:noise_ACF}. The resulting MSE of $3.5\times10^{-3}$ demonstrates excellent agreement between the exponential model and the measured temporal correlation.

In the correlated disturbances setting, the optimal controller characterization in Theorem~\ref{theorem:optimal control} and the equivalent formulation in Theorem~\ref{theorem:equivalent problem} do not generally hold, because the independence arguments used in the i.i.d. case fail. We therefore return to the original infinite-horizon LQR problem \(P_1\). Assuming state-independent scheduling, a maintained separation structure, and a fixed controller, we focus on the scheduler design. To isolate the effect of scheduling, we consider the small control penalty regime \(r\to0\).

Note that, even under a fixed controller assumption, the control cost term $rU_k^2$ cannot generally be removed directly from the objective. The reason is that the controller input is typically a function of the most recent received state update $X_{t_n}$. Since the scheduling decisions determine the distribution of the scheduling instants and the disturbance process is temporally correlated, the future states $X_{t_n+m}$ become statistically dependent on the previously received state $X_{t_n}$. Consequently, the control inputs $U_k$ also become dependent on the scheduling decisions through the induced state evolution. Therefore, removing the control term from the objective is, in general, not theoretically justified, even when the controller itself is fixed. Nevertheless, in the regime $r\to0$, the dominant contribution to the objective is the state tracking term. Hence, the problem we consider can be written as
\begin{align}
    P_{7} :\quad 
    & \min_{D_k} \; 
    \lim_{N \to \infty} \frac{q}{N} 
    \mathbb{E}\left[\sum_{k=0}^{N-1} X_k^2\right] \\
    \text{s.t.}\quad 
    & \lim_{N \to \infty} \frac{1}{N} 
    \sum_{k=0}^{N-1} D_k 
    \leq p_{max}. 
    \notag
\end{align}

Now, similarly to the i.i.d. disturbance case, under state-independent scheduling policies that induce stationary and ergodic inter-scheduling intervals $\{\Delta_n\}_{n\ge0}$ with finite-valued support, and under finite-valued disturbances, the infinite-horizon average cost can be rewritten in terms of the typical scheduling period. Let $t_0$ denote the first scheduling instant and let $M_N$ denote the number of completed scheduling intervals within the horizon $[0,N)$. Then,
\begin{align}
    \sum_{k=0}^{N-1} X_k^2
    & =
    \sum_{k=0}^{t_0+\tau} X_k^2
    +
    \sum_{n=0}^{M_N-1}
    \sum_{k=t_n+\tau+1}^{t_{n+1}+\tau} X_k^2  +
    \sum_{k=t_{M_N}+\tau+1}^{N-1} X_k^2 .
\end{align}
Dividing both sides by $N$ and taking the limit as $N\to\infty$ yields
\begin{align}
    &\lim_{N\to\infty}
    \frac{1}{N}
    \sum_{k=0}^{N-1} X_k^2
    \nonumber\\
    &=
    \lim_{N\to\infty}
    \frac{1}{N}
    \Bigg(
    \sum_{k=0}^{t_0+\tau} X_k^2
    +
    \sum_{n=0}^{M_N-1}
    \sum_{k=t_n+\tau+1}^{t_{n+1}+\tau} X_k^2
    +
    \sum_{k=t_{M_N}+\tau+1}^{N-1} X_k^2
    \Bigg) \notag \\
    &\overset{(a)}{=}
    \lim_{N\to\infty}
    \frac{1}{N}
    \sum_{n=0}^{M_N-1}
    \sum_{k=t_n+\tau+1}^{t_{n+1}+\tau} X_k^2
    \nonumber\\
    &=
    \lim_{N\to\infty}
    \frac{M_N}{N}
    \frac{1}{M_N}
    \sum_{n=0}^{M_N-1}
    \sum_{k=t_n+\tau+1}^{t_{n+1}+\tau} X_k^2
    \nonumber\\
    &\overset{(b)}{=}
    p_c
    \mathbb{E}
    \left[
    \sum_{k=t_n+\tau+1}^{t_{n+1}+\tau} X_k^2
    \right].
\end{align}
In step $(a)$, the initial transient and residual boundary terms vanish asymptotically. This requires
\begin{align}
    \mathbb{E}[R_n] < \infty ,
\end{align}
where
\begin{align}
    R_n
    \triangleq
    \sum_{k=t_n+\tau+1}^{t_{n+1}+\tau} X_k^2 .
\end{align}
Under stationary and ergodic inter-scheduling intervals $\{\Delta_n\}_{n\ge0}$ with finite-valued support, and finite-valued disturbances, the process $\{R_n\}_{n\ge0}$ has finite expectation, and the boundary terms become asymptotically negligible.

Step $(b)$ follows from
\begin{align}
    \frac{M_N}{N}
    \to p_c,
    \qquad \text{as } N\to\infty ,
\end{align}
where $p_c$ denotes the long-term communication rate.

Finally, in step $(c)$, we apply the ergodic theorem, which requires the processes
\begin{align}
    \{R_n\}_{n\ge0}
    \quad \text{and} \quad
    \{\Delta_n\}_{n\ge0}
\end{align}
to be stationary and ergodic.

Moreover, note that the communication-rate constraint can be written as
\begin{align}
    \sum_{k=0}^{N-1} D_k
    =
    M_N .
\end{align}
Hence,
\begin{align}
    p_c
    =
    \frac{1}{\mathbb{E}[\Delta_n]} .
\end{align}
Therefore, the problem can be equivalently rewritten as
\begin{align}
    \min_{\Delta_n}
    \quad
    \frac{q}{\mathbb{E}[\Delta_n]}
    \mathbb{E}
    \left[
    \sum_{m=1}^{\Delta_n}
    X_{t_n+\tau+m}^2
    \right] \label{eqn:X_cost_typical_period}
\end{align}
subject to
\begin{align}
    \frac{1}{\mathbb{E}[\Delta_n]}
    \le p_{max} .
\end{align}

Now, let us consider the impulsive controller defined below. 
\begin{align}
    U_{t_n+\tau} &= -\frac{a}{b}\left(a^\tau X_{t_n}+\sum_{j=0}^{\tau-1} a^{\tau-1-j} bU_{t_n+j}\right), \label{eqn:impulsive_controller}\\
    U_{t_n+\tau+m} &= 0, \qquad 0 \le m \le t_{k+1}-t_n .
\end{align}
Using the linear system dynamics, the system state at time $t_n+\tau$ can be written as
\begin{equation}
    X_{t_n+\tau}
    =
    a^\tau X_{t_n}
    +
    \sum_{j=0}^{\tau-1} a^{\tau-1-j} bU_{t_n+j}
    +
    \sum_{j=0}^{\tau-1} a^{\tau-1-j} W_{t_n+j}.
\end{equation}
Substituting the impulsive controller~\eqref{eqn:impulsive_controller} into the system dynamics yields
\begin{align}
    X_{t_n+\tau+1}
    &= a\left(a^\tau X_{t_n}+\sum_{j=0}^{\tau-1} a^{\tau-1-j} bU_{t_n+j}+\sum_{j=0}^{\tau-1} a^{\tau-1-j} W_{t_n+j}\right) \nonumber\\
    &\quad - b\left(\frac{a}{b}\left(a^\tau X_{t_n}+\sum_{j=0}^{\tau-1} a^{\tau-1-j} bU_{t_n+j}\right)\right)+W_{t_n+\tau} \nonumber\\
    &= \sum_{j=0}^{\tau} a^{\tau-j} W_{t_n+j}.
\end{align}

Next, we work on the objective in \eqref{eqn:X_cost_typical_period} for different cases of the open loop gain $a$ and the correlation coefficient $\beta$. First, let's investigate the case $|a|=1$ and very correlated disturbances $\beta \to 0$. This case is quite important in the sense of the practical simulation we do in Section \ref{sec:Real_Sim} because in that simulation we use the real vehicle speed data collected on a highway and we show that the autocorrelation of these mean normalized speed data is actually coming from a exponentially correlated speed process with $\beta=0.007558$, which is nearly zero. Hence, the objective in \eqref{eqn:X_cost_typical_period} for $|a|=1$ becomes
\begin{align}
    &\mathbb{E}\!\left[\sum_{m=1}^{\Delta_n} X_{t_n+\tau+m}^2\right]
    \nonumber\\
    &=
    \sigma_W^2
    \sum_{s=1}^{\infty}\Pr(\Delta_n=s)
    \sum_{m=1}^{s}
    \left(
    \tau+m
    +
    2\sum_{d=1}^{\tau+m-1}(\tau+m-d)e^{-\beta d}
    \right)
    \nonumber\\
    &=
    \sigma_W^2
    \sum_{s=1}^{\infty}\Pr(\Delta_n=s)
    \sum_{m=1}^{s}
    \left(
    \left(1+\frac{2e^{-\beta}}{1-e^{-\beta}}\right)(\tau+m)
    -
    \frac{2e^{-\beta}}{(1-e^{-\beta})^2}
    +
    \frac{2e^{-\beta}}{(1-e^{-\beta})^2}e^{-\beta(\tau+m)}
    \right)
    \nonumber\\
    &=
    \Bigg[
    \left(
    \left(\tau+\frac{1}{2}\right)
    \left(1+\frac{2e^{-\beta}}{1-e^{-\beta}}\right)
    -
    \frac{2e^{-\beta}}{(1-e^{-\beta})^2}
    \right)
    \mathbb{E}[\Delta_n]
    \nonumber\\
    &\qquad
    +
    \frac{1}{2}
    \left(1+\frac{2e^{-\beta}}{1-e^{-\beta}}\right)
    \mathbb{E}[\Delta_n^2]
    +
    \frac{2e^{-\beta(\tau+2)}}{(1-e^{-\beta})^3}
    \left(
    1-\mathbb{E}[e^{-\beta\Delta_n}]
    \right)
    \Bigg]\sigma_W^2 .
    \label{eqn:corr_noise_a_1}
\end{align}
The first equality follows from the state-independent scheduling policy and the exponentially correlated covariance model. In particular, conditioning on \(\Delta_n=s\) does not alter the second-order disturbance statistics, so the variance of the partial sum is obtained by summing all covariance terms \(\mathbb{E}[W_{t_n+i}W_{t_n+j}]=e^{-\beta|i-j|}\sigma_W^2\).
Now, as $\beta\to0$, the objective becomes
\begin{align}
    \lim_{\beta\to0}\mathbb{E}\!\left[\sum_{m=1}^{\Delta_n}X_{t_n+\tau+m}^2\right]
    &=\sum_{s=1}^{\infty}\Pr(\Delta_n=s)\sum_{m=1}^{s}\sum_{j=0}^{\tau+m-1}\sum_{i=0}^{\tau+m-1}\sigma_w^2 \nonumber\\
    &=\sum_{s=1}^{\infty}\Pr(\Delta_n=s)\sum_{m=1}^{s}(\tau+m)^2\sigma_w^2 \nonumber\\
    &=\left(\frac{1}{3}\mathbb{E}[\Delta_n^3]+\left(\tau+\frac{1}{2}\right)\mathbb{E}[\Delta_n^2]+\left(\tau^2+\tau+\frac{1}{6}\right)\mathbb{E}[\Delta_n]\right)\sigma_w^2 .
\end{align}
Hence, for the case $|a|=1$ and $\beta\to0$, the equivalent problem becomes
\begin{align}
    P_{8}: \; & \min_{\Delta_n}\quad
    q\left(
    \frac{1}{3}\frac{\mathbb{E}[\Delta_n^3]}{\mathbb{E}[\Delta_n]}
    +
    \left(\tau+\frac{1}{2}\right)
    \frac{\mathbb{E}[\Delta_n^2]}{\mathbb{E}[\Delta_n]}
    +
    \tau^2+\tau+\frac{1}{6}
    \right)\sigma_w^2 \label{eqn:eqv_cost_corr_noise_a_1_beta_0}\\
    & \textit{s.t.}\quad \frac{1}{\mathbb{E}[\Delta_n]} \le p_{max} . \notag
\end{align}
Let
\[
    \rho \triangleq e^{-\beta}.
\]
Using the expanded form in~\eqref{eqn:corr_noise_a_1}, for $\beta>0$ the equivalent problem for $|a| = 1$ can be written as
\begin{align}
    P_{9}: \; & \min_{\Delta_n}\quad
    q\left(
    F_1\frac{\mathbb{E}[\Delta_n^2]}{\mathbb{E}[\Delta_n]}
    +
    F_2\frac{1-\mathbb{E}[e^{-\beta\Delta_n}]}{\mathbb{E}[\Delta_n]}
    +
    F_3
    \right)\sigma_w^2 \notag\\
    & \textit{s.t.}\quad \frac{1}{\mathbb{E}[\Delta_n]} \le p_{max} . \notag
\end{align}
where
\begin{align*}
F_1 &\triangleq \frac{1}{2}+\frac{\rho}{1-\rho},\\
F_2 &\triangleq \frac{2\rho^{\tau+2}}{(1-\rho)^3},\\
F_3 &\triangleq \tau+\frac{1}{2}+(2\tau+1)\frac{\rho}{1-\rho}-\frac{2\rho}{(1-\rho)^2}.
\end{align*}
Finally, consider the general case \( |a|\neq 1 \) and \( \beta>0 \). The objective in \eqref{eqn:X_cost_typical_period} becomes
\begin{align}
&\mathbb{E}\!\left[\sum_{m=1}^{\Delta_n}X_{t_n+\tau+m}^2\right]
\nonumber\\
&=
\sigma_W^2
\sum_{s=1}^{\infty}\Pr(\Delta_n=s)
\sum_{m=1}^{s}
\Bigg(
\sum_{i=0}^{\tau+m-1}a^{2(\tau+m-1-i)}
+
2\sum_{d=1}^{\tau+m-1}
e^{-\beta d}
\sum_{k=0}^{\tau+m-1-d}
a^{2(\tau+m-1)-2k-d}
\Bigg)
\nonumber\\
&=
\frac{\sigma_W^2}{1-a^2}
\Bigg[
\left(
1+\frac{2ae^{-\beta}}{1-ae^{-\beta}}
\right)\mathbb{E}[\Delta_n]
+
\Bigg(
-\frac{a^{2\tau+2}}{1-a^2}
-\frac{2(ae^{-\beta})^{\tau+1}}{(1-ae^{-\beta})^2}
-\frac{2a^{2\tau+1}e^{-\beta}}
{(1-a^{-1}e^{-\beta})(1-a^2)}
\nonumber\\
&\qquad\qquad
+
\frac{2a^{\tau+1}e^{-\beta(\tau+1)}}
{(1-a^{-1}e^{-\beta})(1-ae^{-\beta})}
\Bigg)
+
\left(
\frac{a^{2\tau+2}}{1-a^2}
+
\frac{2a^{2\tau+1}e^{-\beta}}
{(1-a^{-1}e^{-\beta})(1-a^2)}
\right)
\mathbb{E}[a^{2\Delta_n}]
\nonumber\\
&\quad
+
\left(
\frac{2(ae^{-\beta})^{\tau+1}}{(1-ae^{-\beta})^2}
-
\frac{2a^{\tau+1}e^{-\beta(\tau+1)}}
{(1-a^{-1}e^{-\beta})(1-ae^{-\beta})}
\right)
\mathbb{E}[(ae^{-\beta})^{\Delta_n}]
\Bigg].
\end{align}
The first equality follows from the state-independent scheduling policy and the exponentially correlated covariance model
\(\mathbb{E}[W_{t_n+i}W_{t_n+j}]=e^{-\beta|i-j|}\sigma_W^2\). The final expression is obtained by evaluating the resulting geometric sums.
Let
\[
    \xi \triangleq \frac{\rho}{a},
    \qquad
    \eta \triangleq a\rho .
\]
Hence, the equivalent problem for $|a| \neq 1$ becomes
\begin{align}
P_{10} :\;& \min_{\Delta_n}\quad
\frac{q}{1-a^2}
\left(
H_1
+
H_2\frac{\mathbb{E}[a^{2\Delta_n}]}{\mathbb{E}[\Delta_n]}
+
H_3\frac{\mathbb{E}[\eta^{\Delta_n}]}{\mathbb{E}[\Delta_n]}
+
H_4\frac{1}{\mathbb{E}[\Delta_n]}
\right)\sigma_w^2 \notag \\
& \; \textit{s.t.} \quad\frac{1}{\mathbb{E}[\Delta_n]} \le p_{max} , \notag
\end{align}
where
\begin{align*}
    H_1 &\triangleq
    \frac{1+\eta}{1-\eta},\\
    H_2 &\triangleq
    \frac{a^{2\tau}\bigl(a^2(1-\xi)+2\eta\bigr)}
    {(1-\xi)(1-a^2)},\\
    H_3 &\triangleq
    -\frac{2\rho^2\eta^\tau(1-a^2)}
    {(1-\xi)(1-\eta)^2},\\
    H_4 &\triangleq
    \frac{2\eta^{\tau+1}(\xi-\eta)}
    {(1-\xi)(1-\eta)^2}
    -
    \frac{a^{2\tau}(a^2+\eta)}
    {(1-\xi)(1-a^2)}.
\end{align*}

We now characterize the optimal inter-scheduling distribution for the
correlated-noise equivalent problems, following the same fixed-mean argument
used in Section~\ref{sec:AoI_iid_Noise}. The key step is the following
discrete Jensen-type lemma.

\begin{lemma}
\label{lem:discrete_jensen_apap}
Let \(f:\mathbb{Z}_{+}\to\mathbb{R}\) be discretely convex, i.e.,
\[
    f(d+1)-2f(d)+f(d-1)\ge 0,
    \qquad d\ge 2.
\]
Then, among all positive integer-valued random variables \(\Delta\) with
fixed mean \(\mathbb{E}[\Delta]=\mu\), the minimum of
\(\mathbb{E}[f(\Delta)]\) is achieved by the as-periodic-as-possible
distribution.
\end{lemma}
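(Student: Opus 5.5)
Fix $\mu\ge 1$ and write $m=\lfloor\mu\rfloor$, $\theta=\mu-m\in[0,1)$. The candidate minimizer is $\Delta^\star$ with $\mathbb{P}(\Delta^\star=m)=1-\theta$ and $\mathbb{P}(\Delta^\star=m+1)=\theta$. Its mean is $\mu$, and its cost is $\mathbb{E}[f(\Delta^\star)]=(1-\theta)f(m)+\theta f(m+1)$. The plan is a supporting-line argument. Let $\ell(d)\triangleq f(m)+\bigl(f(m+1)-f(m)\bigr)(d-m)$ be the chord of $f$ through $m$ and $m+1$. We will show $f(d)\ge \ell(d)$ for every positive integer $d$. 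Once this holds, for any admissible $\Delta$ with $\mathbb{E}[\Delta]=\mu$,
\[
\mathbb{E}[f(\Delta)]\ge \mathbb{E}[\ell(\Delta)]=\ell(\mu)=(1-\theta)f(m)+\theta f(m+1)=\mathbb{E}[f(\Delta^\star)],
\]
because $\ell$ is affine. This is exactly the claim.

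The key step is the chord inequality. The hypothesis $f(d+1)-2f(d)+f(d-1)\ge 0$ for $d\ge2$ says the forward increments $g(d)\triangleq f(d+1)-f(d)$ are nondecreasing on $d\ge1$. Set $\lambda\triangleq g(m)$. For $d\ge m+1$, telescope:
\[
f(d)-f(m+1)=\sum_{j=m+1}^{d-1}g(j)\ge (d-m-1)\lambda,
\]
so $f(d)\ge f(m)+\lambda(d-m)=\ell(d)$. For $1\le d\le m$,
\[
f(m)-f(d)=\sum_{j=d}^{m-1}g(j)\le (m-d)\lambda,
\]
which again gives $f(d)\ge \ell(d)$. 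In both cases only monotonicity of $g$ on the indices $j\ge1$ is used, and this is exactly what the hypothesis stated for $d\ge2$ provides. So the support $\mathbb{Z}_{>0}$ of positive integers is handled with no boundary issue.

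The main obstacle is integrability and well-definedness rather than the algebra. If $\Delta$ has unbounded support, $\mathbb{E}[f(\Delta)]$ could a priori be undefined. The chord bound settles this: $f-\ell\ge0$, so $\mathbb{E}[f(\Delta)-\ell(\Delta)]$ is always defined in $[0,\infty]$, while $\mathbb{E}[\ell(\Delta)]=\ell(\mu)$ is finite because $\mu<\infty$. Hence the inequality holds even when $\mathbb{E}[f(\Delta)]=+\infty$; in the paper's setting the support is finite anyway. Finally, when $\mu$ is an integer we have $\theta=0$ and the minimizer is the periodic scheduler. The minimizer is also unique when $f$ is strictly discretely convex, since equality forces $\Delta\in\{m,m+1\}$ almost surely. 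This recovers the fixed-mean step used for $P_6$ and $P_7$ in Corollary~\ref{corollary:optimal_as_periodic_as_possible}, now for general convex $f$.
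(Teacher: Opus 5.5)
Your proof is correct, and it takes a different route from the paper.

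\textbf{The paper's route.} It argues by mass transport. Whenever the support contains $i<j$ with $j-i\ge 2$, it moves $\varepsilon$ of probability from $i$ to $i+1$ and from $j$ to $j-1$. This preserves the mean and changes the cost by $\varepsilon\bigl(\nabla f(i)-\nabla f(j-1)\bigr)\le 0$. The paper then asserts that repeating this removes all gaps, leaving mass only on $\lfloor\mu\rfloor$ and $\lfloor\mu\rfloor+1$.

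\textbf{Your route.} You prove a discrete Jensen inequality in one step. The chord $\ell$ through $(m,f(m))$ and $(m+1,f(m+1))$ lies below $f$ on all positive integers, because the increments are monotone. Taking expectations of this affine minorant then gives the bound immediately.

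\textbf{What your approach buys.} It needs no iteration, so it applies verbatim to distributions with infinite support. The paper's ``repeating the operation'' would need a termination or limiting argument there that is not supplied. For finite support, termination is easy: take $i,j$ to be the extreme support points and $\varepsilon=\min(p_i,p_j)$, so the support range shrinks at each step. You also settle well-definedness of $\mathbb{E}[f(\Delta)]$ explicitly, since the negative part of $f(\Delta)$ is dominated by the integrable $|\ell(\Delta)|$. Your bound gives the optimal value $(1-\theta)f(m)+\theta f(m+1)$ directly, which is exactly the quantity $V_g(\mu)$ used in the subsequent mean-optimization lemma. Finally, you separate existence from uniqueness cleanly. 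The paper's phrase that a minimizer ``must'' be supported on two adjacent integers holds only under strict discrete convexity, as your remark makes precise; for affine $f$, every distribution with mean $\mu$ is optimal.

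\textbf{What the paper's approach buys.} Each smoothing step is a local, mean-preserving reduction of interval variability. This matches the scheduling intuition that spreading out inter-scheduling intervals can only hurt.

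Since the paper's theorems assume finite-valued support, either argument suffices there. Yours proves the lemma as literally stated, for all positive integer-valued $\Delta$ with mean $\mu$.
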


\begin{proof}
Let \(p_d=\mathbb{P}(\Delta=d)\). Suppose that a feasible distribution places
positive probability at two points \(i\) and \(j\), with \(j-i\ge 2\). Move an
amount \(\varepsilon>0\) of probability mass from \(i\) to \(i+1\), and the
same amount from \(j\) to \(j-1\). This operation preserves both the total
probability and the mean, since
\[
    \varepsilon(i+1-i)+\varepsilon(j-1-j)=0.
\]
The corresponding change in \(\mathbb{E}[f(\Delta)]\) is
\[
    \varepsilon\big(f(i+1)-f(i)\big)
    +
    \varepsilon\big(f(j-1)-f(j)\big).
\]
Since \(f\) is discretely convex, the forward differences
\[
    \nabla f(d)\triangleq f(d+1)-f(d)
\]
are nondecreasing in \(d\). Therefore, because \(i<j-1\),
\[
    f(i+1)-f(i)
    =
    \nabla f(i)
    \le
    \nabla f(j-1)
    =
    f(j)-f(j-1).
\]
Hence,
\[
    \big(f(i+1)-f(i)\big)
    +
    \big(f(j-1)-f(j)\big)
    \le 0.
\]
Thus, this smoothing operation cannot increase the objective. Repeating the
operation removes all gaps larger than one in the support. Therefore, a
minimizing distribution must be supported on two adjacent integers. Matching
the fixed mean \(\mu\) gives the support points
\(\ell=\lfloor\mu\rfloor\) and \(\ell+1\), with probabilities
\(1-\theta\) and \(\theta=\mu-\ell\), respectively.
\end{proof}

For \(P_{8}\), fixing
\[
    \mu=\mathbb{E}[\Delta_n]
\]
fixes the denominator in the objective. Hence, the relevant kernel is
\[
    f_{8}(d)
    =
    \frac{1}{3}d^3+\left(\tau+\frac{1}{2}\right)d^2 .
\]
Its discrete second difference is
\[
    f_{8}(d+1)-2f_{8}(d)+f_{8}(d-1)
    =
    2d+2\tau+1
    >
    0.
\]
Therefore, \(f_{8}\) is discretely convex, and
Lemma~\ref{lem:discrete_jensen_apap} implies that the optimal distribution
for any fixed mean is as-periodic-as-possible.

For \(P_{9}\), recall that \(\rho=e^{-\beta}\in(0,1)\). The coefficients
\(F_1,F_2,F_3\) are defined above. Since \(F_3\) is independent of
\(\Delta_n\), it does not affect the fixed-mean minimization.
For fixed \(\mu=\mathbb{E}[\Delta_n]\), the relevant kernel is
\[
    f_{9}(d)
    =
    F_1 d^2+F_2(1-\rho^d).
\]
Its discrete second difference is
\[
\begin{aligned}
    f_{9}(d+1)-2f_{9}(d)+f_{9}(d-1)
    &=
    2F_1-F_2\rho^{d-1}(1-\rho)^2 .
\end{aligned}
\]
Using
\[
    F_1
    =
    \frac{1}{2}+\frac{\rho}{1-\rho}
    =
    \frac{1+\rho}{2(1-\rho)}
\]
and
\[
    F_2
    =
    \frac{2\rho^{\tau+2}}{(1-\rho)^3},
\]
we obtain
\[
\begin{aligned}
    f_{9}(d+1)-2f_{9}(d)+f_{9}(d-1)
    &\ge
    \frac{1+\rho}{1-\rho}
    -
    \frac{2\rho^{\tau+2}}{1-\rho} \\
    &=
    \frac{1+\rho-2\rho^{\tau+2}}{1-\rho}
    \ge 0,
\end{aligned}
\]
because \(0<\rho<1\) and \(\tau\ge 0\). Thus, \(f_{9}\) is discretely
convex, and the optimal fixed-mean distribution is again
as-periodic-as-possible.

For \(P_{10}\), the fixed-mean reduction gives the kernel
\[
    f_{10}(d)
    =
    \frac{1}{1-a^2}
    \left(
    H_2a^{2d}+H_3\eta^d
    \right),
\]
since \(H_1\) and \(H_4/\mathbb{E}[\Delta_n]\) are fixed when
\(\mu=\mathbb{E}[\Delta_n]\) is fixed. Its discrete second difference is
\[
\begin{aligned}
    f_{10}(d+1)-2f_{10}(d)+f_{10}(d-1)
    &=
    \frac{1}{1-a^2}
    \left(
    H_2a^{2d-2}(a^2-1)^2
    +
    H_3\eta^{d-1}(\eta-1)^2
    \right)\\
    &=
    \frac{
    a^{2\tau+2d-2}\bigl(a^2(1-\xi)+2\eta\bigr)
    -
    2\rho^2\eta^{\tau+d-1}
    }{1-\xi}.
\end{aligned}
\]
Using \(\xi=\rho/a\) and \(\eta=a\rho\), and assuming \(a>0\), this can be
rewritten as
\[
\begin{aligned}
    f_{10}(d+1)-2f_{10}(d)+f_{10}(d-1)
    &=
    \frac{
    a^{\tau+d-1}\rho^{\tau+d+1}
    \left[
    \left(\frac{a}{\rho}\right)^{\tau+d}
    \left(\frac{a}{\rho}+1\right)
    -2
    \right]
    }{
    1-\frac{\rho}{a}
    } .
\end{aligned}
\]
Let \(r=a/\rho\) and \(n=\tau+d\). Since \(d\ge2\) and \(\tau\ge0\), we have
\(n\ge2\). The function \(r^n(r+1)-2\) is strictly increasing for \(r>0\) and
vanishes at \(r=1\). Therefore, its sign agrees with the sign of \(r-1\),
which is also the sign of \(1-\rho/a\). Hence,
\[
    f_{10}(d+1)-2f_{10}(d)+f_{10}(d-1)\ge0,
\]
for \(a>0\) and \(a\neq\rho\). Thus, under this parameter regime,
\(f_{10}\) is discretely convex and the optimal fixed-mean distribution for
\(P_{10}\) is also as-periodic-as-possible.

It remains to determine whether the communication-rate constraint is active.
For this purpose, we use the following scalar reduction.

\begin{lemma}
\label{lem:apap_mean_monotonicity}
Let \(g:\mathbb{Z}_{+}\to\mathbb{R}\), and let \(\Delta_\mu\) denote the
as-periodic-as-possible random variable with mean \(\mu\). For
\(\mu\in[n,n+1]\), where \(n=\lfloor\mu\rfloor\),
\[
    \mathbb{P}(\Delta_\mu=n)=n+1-\mu,
    \qquad
    \mathbb{P}(\Delta_\mu=n+1)=\mu-n .
\]
Define
\[
    V_g(\mu)
    \triangleq
    \mathbb{E}[g(\Delta_\mu)]
    =
    (n+1-\mu)g(n)+(\mu-n)g(n+1),
\]
and
\[
    \Phi_g(\mu)
    \triangleq
    \frac{V_g(\mu)}{\mu}.
\]
Then, on each interval \((n,n+1)\),
\[
    \Phi_g'(\mu)
    =
    \frac{
    n\bigl(g(n+1)-g(n)\bigr)-g(n)
    }{\mu^2}
    =
    \frac{
    ng(n+1)-(n+1)g(n)
    }{\mu^2}.
\]
Consequently, \(\Phi_g\) is nondecreasing on every feasible interval if and
only if
\[
    ng(n+1)\ge(n+1)g(n)
\]
on those intervals. In that case, the minimum over \(\mu\ge\mu_0\) is attained
at \(\mu=\mu_0\).
\end{lemma}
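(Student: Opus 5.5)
The plan is a direct computation on a single interval $(n,n+1)$. On such an interval $n=\lfloor\mu\rfloor$ is fixed, so $V_g$ is affine in $\mu$:
\[
    V_g(\mu)=g(n)+(\mu-n)\bigl(g(n+1)-g(n)\bigr),
    \qquad
    V_g'(\mu)=g(n+1)-g(n).
\]
The quotient rule then gives
\[
    \Phi_g'(\mu)=\frac{\mu V_g'(\mu)-V_g(\mu)}{\mu^2}.
\]
Substituting the affine form, the numerator becomes $\mu\bigl(g(n+1)-g(n)\bigr)-g(n)-(\mu-n)\bigl(g(n+1)-g(n)\bigr)$. The $\mu$-dependent terms cancel, leaving $n\bigl(g(n+1)-g(n)\bigr)-g(n)$. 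Expanding this yields the second form, $ng(n+1)-(n+1)g(n)$. This is the same cancellation already carried out for $P_6$ and $P_7$ in the proof of Corollary~\ref{corollary:optimal_as_periodic_as_possible}. There it appeared with $g(d)=c^d-1$ and $g(d)=d^2$, which recover $H_m(c)$ and $m(m+1)$ respectively.

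For the equivalence, note that $\mu^2>0$. Hence $\Phi_g'\ge0$ on $(n,n+1)$ if and only if $ng(n+1)\ge(n+1)g(n)$. Since $\Phi_g$ is continuous and differentiable there, this sign condition is exactly monotonicity on that open interval.

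The only step needing care is gluing the intervals together, which is where I expect the main obstacle. I need $\Phi_g$ to be nondecreasing on all of $[\mu_0,\infty)$, not merely on each open piece. To get this, I will show $V_g$ is continuous at integer points. From the left of $n+1$ we have $V_g\to g(n+1)$, and at $n+1$ itself the as-periodic-as-possible distribution is the point mass $g(n+1)$. So $V_g$, and hence $\Phi_g=V_g/\mu$, is continuous and piecewise $C^1$. A continuous function with nonnegative derivative on each open piece is nondecreasing on the union of the closed pieces. One sees this by applying the mean value theorem on each piece and chaining the inequalities across the breakpoints. The interval containing $\mu_0$ is handled the same way, since $\Phi_g$ is nondecreasing on $[\mu_0,\lceil\mu_0\rceil]$.

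Therefore $\Phi_g(\mu)\ge\Phi_g(\mu_0)$ for all $\mu\ge\mu_0$, and the minimum is attained at $\mu_0$. The condition makes sense only for $n\ge1$, since the variables are positive-integer valued. This is automatic here because $\mu_0=1/p_{\max}\ge1$.
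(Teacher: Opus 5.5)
Your proposal is correct and follows the paper's proof: it uses the affine form of $V_g$ on $(n,n+1)$ and the quotient rule, and the $\mu$-dependent terms cancel to give $ng(n+1)-(n+1)g(n)$. Your continuity-at-integers argument is a useful addition, because the paper states that the minimum over $\mu\ge\mu_0$ is attained at $\mu_0$ without noting that $V_g$, the piecewise-linear interpolant of $g$, is continuous across breakpoints; that continuity is what turns nonnegative derivatives on each open piece into monotonicity on all of $[\mu_0,\infty)$.
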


\begin{proof}
For \(\mu\in(n,n+1)\),
\[
    V_g'(\mu)=g(n+1)-g(n).
\]
Therefore,
\[
    \Phi_g'(\mu)
    =
    \frac{\mu V_g'(\mu)-V_g(\mu)}{\mu^2}.
\]
Substituting the affine expression for \(V_g(\mu)\) gives
\[
    \mu\bigl(g(n+1)-g(n)\bigr)
    -
    \bigl((n+1-\mu)g(n)+(\mu-n)g(n+1)\bigr)
    =
    ng(n+1)-(n+1)g(n),
\]
which proves the result.
\end{proof}

We now apply Lemma~\ref{lem:apap_mean_monotonicity} to the three
correlated-noise equivalent costs. Let
\[
    \mu_0=\frac{1}{p_{\max}}.
\]
The communication-rate constraint satisfies
\[
    \frac{1}{\mathbb{E}[\Delta_n]}\le p_{\max}
    \quad \Longleftrightarrow \quad
    \mu\ge \mu_0.
\]

For \(P_{8}\), the relevant kernel is
\[
    g_8(d)
    =
    \frac{1}{3}d^3+\left(\tau+\frac{1}{2}\right)d^2 .
\]
Thus,
\[
\begin{aligned}
    ng_8(n+1)-(n+1)g_8(n)
    &=
    n(n+1)
    \left(
    \frac{2n+1}{3}+\tau+\frac{1}{2}
    \right)\\
    &=
    \frac{n(n+1)(4n+6\tau+5)}{6}
    >
    0 .
\end{aligned}
\]
Hence, the reduced \(P_{8}\) cost is increasing in \(\mu\), and the
communication-rate constraint is active.

For \(P_{9}\), the relevant kernel is
\[
    g_9(d)
    =
    F_1d^2+F_2(1-\rho^d).
\]
Lemma~\ref{lem:apap_mean_monotonicity} gives
\[
\begin{aligned}
    ng_9(n+1)-(n+1)g_9(n)
    &=
    F_1n(n+1)
    +
    F_2\left(\rho^n(1+n(1-\rho))-1\right).
\end{aligned}
\]
The second term is nonpositive. However, using
\[
    1-\rho^n(1+n(1-\rho))
    \le
    \frac{n(n+1)}{2}(1-\rho)^2,
\]
we obtain
\[
\begin{aligned}
    ng_9(n+1)-(n+1)g_9(n)
    &\ge
    F_1n(n+1)
    -
    \frac{F_2n(n+1)}{2}(1-\rho)^2\\
    &=
    \frac{n(n+1)}{1-\rho}
    \left(
    \frac{1+\rho}{2}
    -
    \rho^{\tau+2}
    \right)
    \ge 0,
\end{aligned}
\]
because \(0<\rho<1\) and \(\tau\ge0\). Therefore, the reduced \(P_{9}\) cost
is also nondecreasing in \(\mu\), and the communication-rate constraint is
active.

For \(P_{10}\), the corresponding kernel is
\[
    g_{10}(d)
    =
    \frac{1}{1-a^2}
    \left(
    H_2a^{2d}+H_3\eta^d
    +
    H_4
    \right).
\]
In this case,
\[
\begin{aligned}
    ng_{10}(n+1)-(n+1)g_{10}(n)
    &=
    \frac{1}{1-a^2}
    \Bigl[
    H_2a^{2n}\bigl(n(a^2-1)-1\bigr)\\
    &\qquad\qquad
    +
    H_3\eta^n\bigl(n(\eta-1)-1\bigr)
    -
    H_4
    \Bigr].
\end{aligned}
\]
Thus, the communication-rate constraint is active for \(P_{10}\) whenever
\[
    H_2a^{2n}\bigl(n(a^2-1)-1\bigr)
    +
    H_3\eta^n\bigl(n(\eta-1)-1\bigr)
    -
    H_4
    \quad
    \text{has the same sign as }1-a^2
\]
for every integer \(n\) whose interval \([n,n+1]\) intersects the feasible
set \([\mu_0,\infty)\). Unlike \(P_{8}\) and \(P_{9}\), this monotonicity is
parameter dependent; hence, for the general \(P_{10}\) cost, activity of the
communication-rate constraint should be verified through the above condition.

Similar to the i.i.d. disturbance case, the fact that the optimal constrained
solution uses the largest allowed communication rate does not imply that every
increase in communication rate lowers the LQR cost. For arbitrary
inter-scheduling interval distributions, increasing the communication rate can
also change higher-order moments and exponential moments in unfavorable ways.
Consequently, one can find schedulers for which a higher communication rate
does not reduce, and may even increase, the equivalent LQR cost.

\begin{figure}[t]
    \centering
    \begin{subfigure}{0.48\linewidth}
        \centering
        \includegraphics[width=0.7\linewidth]{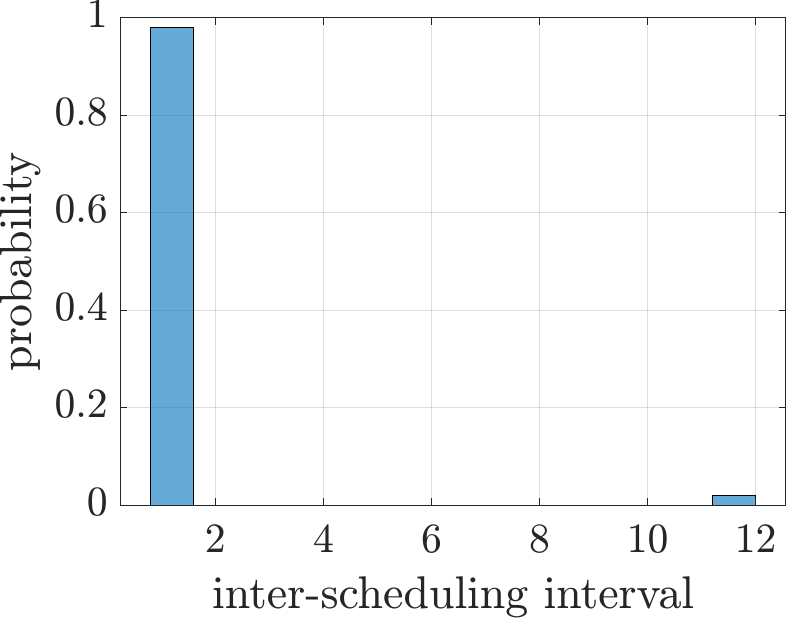}
        \caption{Original two-point distribution.}
        \label{fig:iut_distribution_two_point_correlated_original}
    \end{subfigure}
    \hfill
    \begin{subfigure}{0.48\linewidth}
        \centering
        \includegraphics[width=0.7\linewidth]{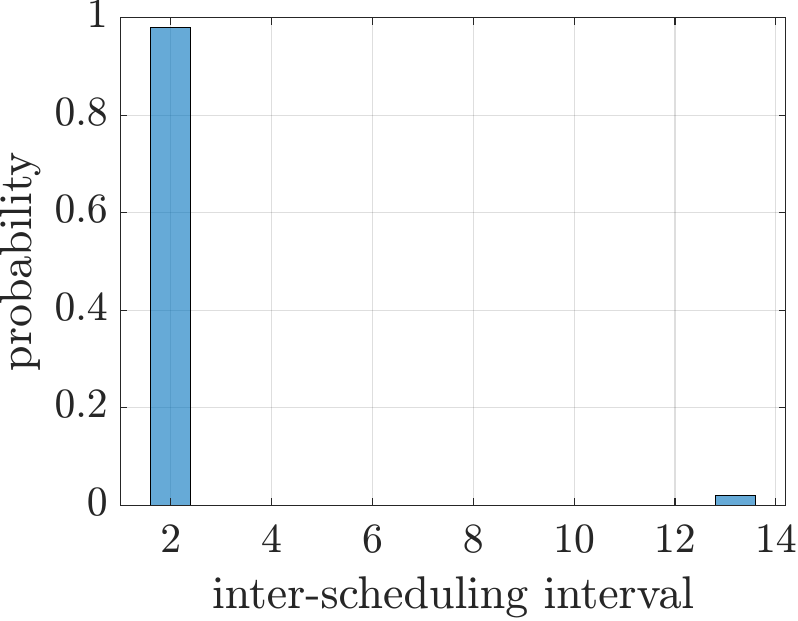}
        \caption{Shifted two-point distribution.}
        \label{fig:iut_distribution_two_point_correlated_shifted}
    \end{subfigure}
    \caption{Original and shifted two-point inter-scheduling interval
    distributions used in the correlated-noise simulations for
    \(\delta_s=1\), \(\delta_\ell=12\), \(p_\ell=0.02\), and \(s=1\).
    The shifted scheduler increases each inter-scheduling interval by one
    time step, thereby reducing the communication rate.}
    \label{fig:iut_distribution_two_point_correlated}
\end{figure}

To make this observation concrete, we revisit the two-point randomized
inter-scheduling interval distribution introduced in
Section~\ref{sec:AoI_iid_Noise}. Consider the specific choice illustrated in
Fig.~\ref{fig:iut_distribution_two_point_correlated},
\[
    \delta_s=1,
    \qquad
    \delta_\ell=12,
    \qquad
    p_\ell=0.02,
    \qquad
    s=1.
\]
Then
\[
    \mathbb{P}(\Delta=1)=0.98,
    \qquad
    \mathbb{P}(\Delta=12)=0.02,
\]
whereas
\[
    \mathbb{P}(\widetilde{\Delta}=2)=0.98,
    \qquad
    \mathbb{P}(\widetilde{\Delta}=13)=0.02.
\]
The corresponding means are
\[
    \mathbb{E}[\Delta]
    =
    0.98(1)+0.02(12)
    =
    1.22,
\]
and
\[
    \mathbb{E}[\widetilde{\Delta}]
    =
    0.98(2)+0.02(13)
    =
    2.22.
\]
Therefore,
\[
    \frac{1}{\mathbb{E}[\widetilde{\Delta}]}
    <
    \frac{1}{\mathbb{E}[\Delta]},
\]
so the shifted distribution has a smaller communication rate.

We now evaluate the correlated-noise equivalent objective in \(P_{8}\).
For \(\tau=1\), and ignoring the common positive scaling factor
\(q\sigma_w^2\), the relevant cost term is
\[
    J_{8}(\Delta)
    =
    \frac{1}{3}\frac{\mathbb{E}[\Delta^3]}{\mathbb{E}[\Delta]}
    +
    \frac{3}{2}\frac{\mathbb{E}[\Delta^2]}{\mathbb{E}[\Delta]}
    +
    \frac{13}{6}.
\]
For the original distribution,
\[
    \mathbb{E}[\Delta^2]
    =
    0.98(1)^2+0.02(12)^2
    =
    3.86,
\]
and
\[
    \mathbb{E}[\Delta^3]
    =
    0.98(1)^3+0.02(12)^3
    =
    35.54.
\]
Therefore,
\[
\begin{aligned}
    J_{8}(\Delta)
    &=
    \frac{1}{3}\frac{35.54}{1.22}
    +
    \frac{3}{2}\frac{3.86}{1.22}
    +
    \frac{13}{6} \\
    &\approx 16.62 .
\end{aligned}
\]
For the shifted distribution,
\[
    \mathbb{E}[\widetilde{\Delta}^2]
    =
    0.98(2)^2+0.02(13)^2
    =
    7.30,
\]
and
\[
    \mathbb{E}[\widetilde{\Delta}^3]
    =
    0.98(2)^3+0.02(13)^3
    =
    51.78.
\]
Thus,
\[
\begin{aligned}
    J_{8}(\widetilde{\Delta})
    &=
    \frac{1}{3}\frac{51.78}{2.22}
    +
    \frac{3}{2}\frac{7.30}{2.22}
    +
    \frac{13}{6} \\
    &\approx 14.87 .
\end{aligned}
\]
Hence,
\[
    J_{8}(\widetilde{\Delta})<J_{8}(\Delta).
\]
This example shows that, under the correlated-noise equivalent cost in
\(P_{8}\), shifting the inter-scheduling interval distribution to the right
decreases the communication rate and also decreases the equivalent objective.

For completeness, we also evaluate the finite-\(\beta\) cost in \(P_{9}\)
for the same two distributions. Again ignoring the common positive scaling
factor \(q\sigma_w^2\), define
\[
    J_{9}(\Delta)
    =
    F_1\frac{\mathbb{E}[\Delta^2]}{\mathbb{E}[\Delta]}
    +
    F_2
    \frac{1-\mathbb{E}[\rho^\Delta]}{\mathbb{E}[\Delta]}
    +
    F_3 .
\]
For the original and shifted distributions,
\[
    \mathbb{E}[\rho^\Delta]
    =
    0.98\rho+0.02\rho^{12},
    \qquad
    \mathbb{E}[\rho^{\widetilde{\Delta}}]
    =
    0.98\rho^2+0.02\rho^{13}.
\]
Therefore,
\[
\begin{aligned}
    J_{9}(\Delta)
    &=
    F_1\frac{3.86}{1.22}
    +
    F_2
    \frac{1-0.98\rho-0.02\rho^{12}}{1.22}
    +
    F_3,\\
    J_{9}(\widetilde{\Delta})
    &=
    F_1\frac{7.30}{2.22}
    +
    F_2
    \frac{1-0.98\rho^2-0.02\rho^{13}}{2.22}
    +
    F_3 .
\end{aligned}
\]
Hence, the shifted finite-\(\beta\) cost is smaller precisely when
\[
    F_1\left(\frac{7.30}{2.22}-\frac{3.86}{1.22}\right)
    +
    F_2\left(
    \frac{1-0.98\rho^2-0.02\rho^{13}}{2.22}
    -
    \frac{1-0.98\rho-0.02\rho^{12}}{1.22}
    \right)
    <0.
\]
For the fitted value \(\beta=0.007558\), this gives
\[
    J_{9}(\widetilde{\Delta})-J_{9}(\Delta)
    \approx
    -1.68,
\]
so the shifted distribution also has a smaller \(P_{9}\) equivalent cost in
the highly correlated regime used in the simulations.

Similar to the discussion in Section~\ref{sec:AoI_iid_Noise}, this example
also admits a useful channel interpretation. Suppose updates can be attempted
frequently, but the effective update process occasionally experiences a large
delay. Specifically, with probability \(p_\ell=0.02\), an update cycle lasts
\(\delta_\ell=12\) time steps instead of \(\delta_s=1\) time step. It may seem
natural to use the channel as frequently as possible. The shifted scheduler
tests the opposite behavior by waiting one additional time step between
scheduling attempts, thereby reducing the communication rate. The calculation
above shows that this lower-rate strategy can yield a smaller correlated-noise
equivalent cost. We return to this example in the simulation section and
compare the resulting empirical LQR costs under correlated US-101 velocity
disturbances.

\section{Data-Driven Validation with NGSIM US-101 Trajectories}
\label{sec:Real_Sim}
In this section, we evaluate the theoretical analysis for state-independent scheduling policies developed in Section~\ref{sec:AoI_Correlated_Noise} using practical car-following simulations based on the Next Generation Simulation (NGSIM) dataset. The NGSIM dataset is one of the most comprehensive and widely used datasets for studying vehicle dynamics and microscopic traffic behavior~\cite{thiemann2008estimating}. It was initiated by the U.S. Department of Transportation to support the development and validation of traffic simulation models by providing high-resolution, real-world vehicle trajectory data~\cite{ngsim_us101_2016}. This setting is particularly well suited to our objective because vehicle trajectories exhibit temporally correlated fluctuations, allowing us to test whether the LQR cost follows the higher-order moment and exponential-moment structure predicted by Section~\ref{sec:AoI_Correlated_Noise}, rather than being determined by mean AoI alone.

The dataset consists of multiple traffic scenarios collected from different locations, including US 101 (Los Angeles), I-80 (Emeryville), Lankershim Boulevard (Los Angeles), and Peachtree Street (Atlanta). Each dataset captures detailed interactions between vehicles under varying traffic conditions such as free flow, congestion buildup, and fully congested regimes. In what follows, we use these real trajectories to construct a data-driven car-following scenario and evaluate whether the empirical closed-loop performance is governed by the full inter-scheduling-time, or AoI, distribution, as predicted by the theory.

\begin{figure}[!ht]
    \centering
    \includegraphics[width=0.7\linewidth]{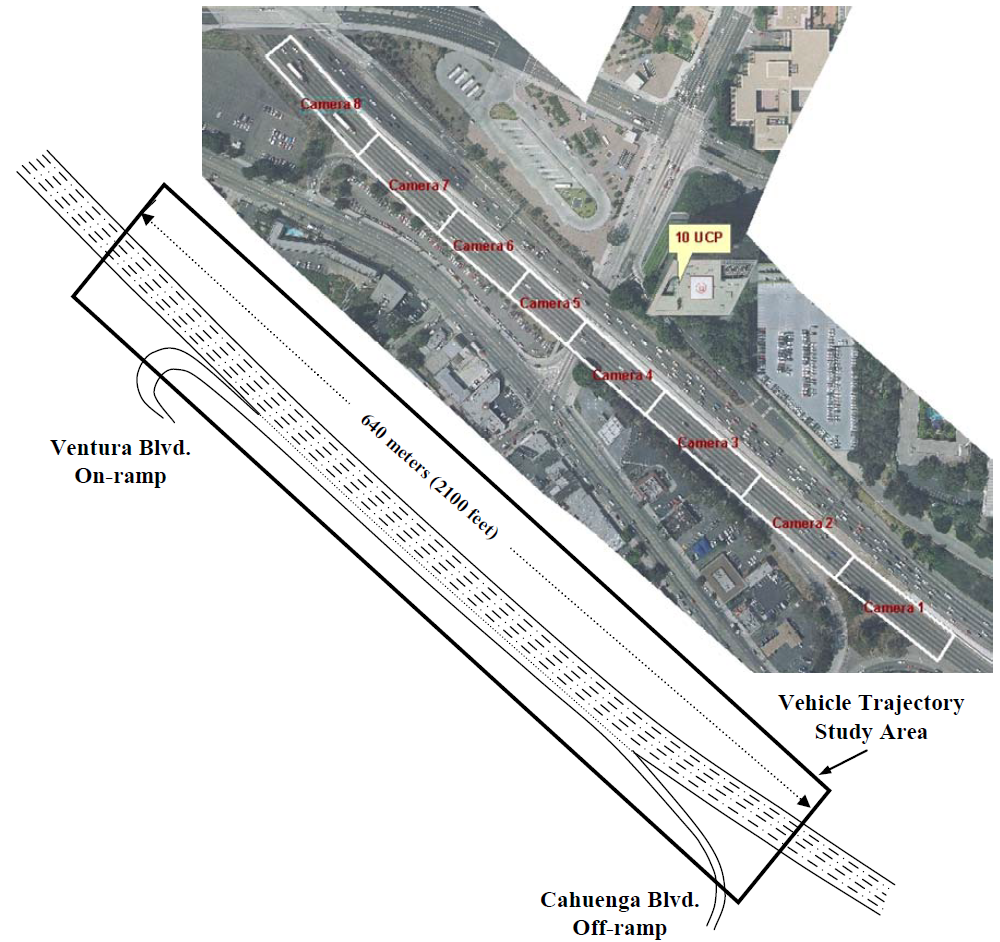}
    \caption{NGSIM US 101 study area showing the freeway segment, camera coverage, and ramp locations.}
    \label{fig:us101_map}
\end{figure}

NGSIM dataset includes all recorded vehicle measurements across all scenarios and time durations without subsampling, resulting in nearly 12 million data entries. This exhaustive coverage makes it particularly suitable for data-driven modeling and statistical analysis, as it preserves the full temporal and spatial structure of traffic dynamics.

Among the available NGSIM scenarios and locations, we focus on the highway setting, specifically the US 101 dataset, which provides a representative example of multi-lane freeway traffic dynamics. A schematic illustration of the study area is provided in Fig.~\ref{fig:us101_map}. In particular, the US 101 dataset contains vehicle trajectory data collected over a 640-meter section of a freeway using eight synchronized cameras mounted on a high-rise building. The trajectories were extracted using specialized computer vision tools, providing precise vehicle positions, velocities, and accelerations at a high temporal resolution of 0.1 seconds~\cite{ngsim_us101_2016}. The dataset spans 45 minutes, divided into three 15-minute intervals, capturing the transition from uncongested to congested traffic conditions~\cite{ngsim_us101_metadata}.

\begin{figure}[!ht]
    \centering
    \resizebox{0.8\linewidth}{!}{%
    \begin{minipage}{\linewidth}
        \centering
        
        \begin{subfigure}{0.48\linewidth}
            \centering
            \includegraphics[width=0.9\linewidth]{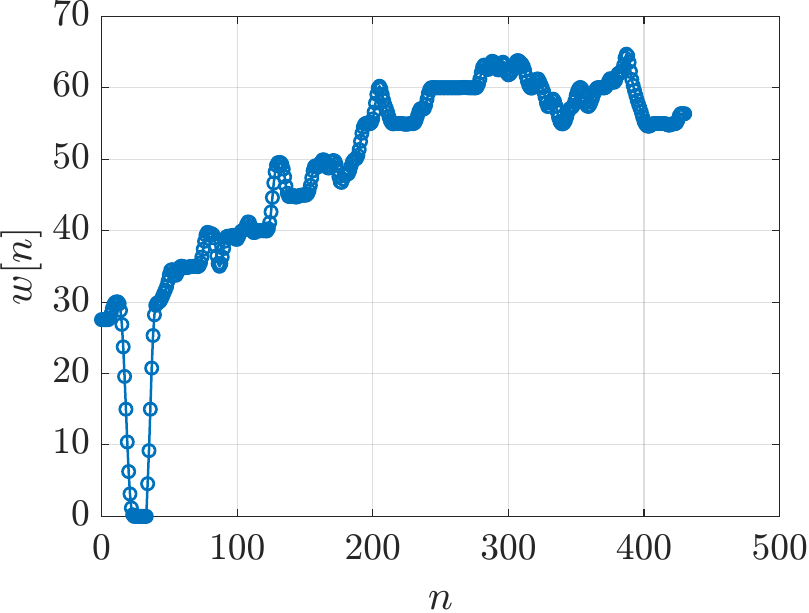}
            \caption{Vehicle 681.}
            \label{fig:track38}
        \end{subfigure}
        \hfill
        \begin{subfigure}{0.48\linewidth}
            \centering
            \includegraphics[width=0.9\linewidth]{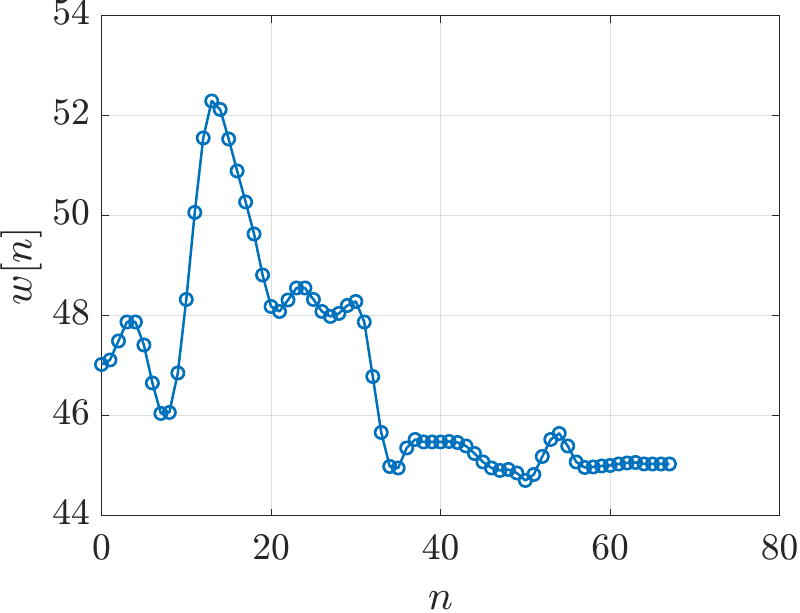}
            \caption{Vehicle 4031.}
            \label{fig:track102}
        \end{subfigure}

        \vspace{0.3cm}

        \begin{subfigure}{0.48\linewidth}
            \centering
            \includegraphics[width=0.9\linewidth]{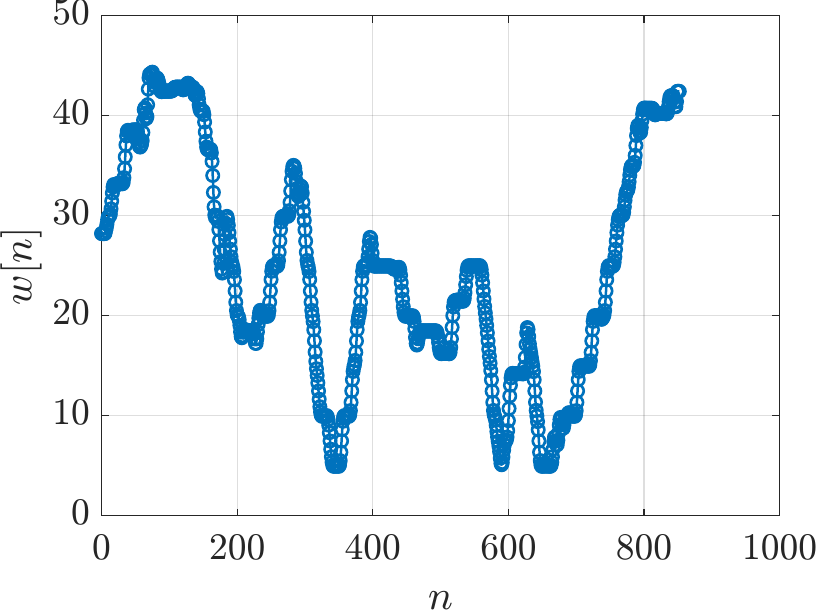}
            \caption{Vehicle 4731.}
            \label{fig:track68}
        \end{subfigure}
        \hfill
        \begin{subfigure}{0.48\linewidth}
            \centering
            \includegraphics[width=0.9\linewidth]{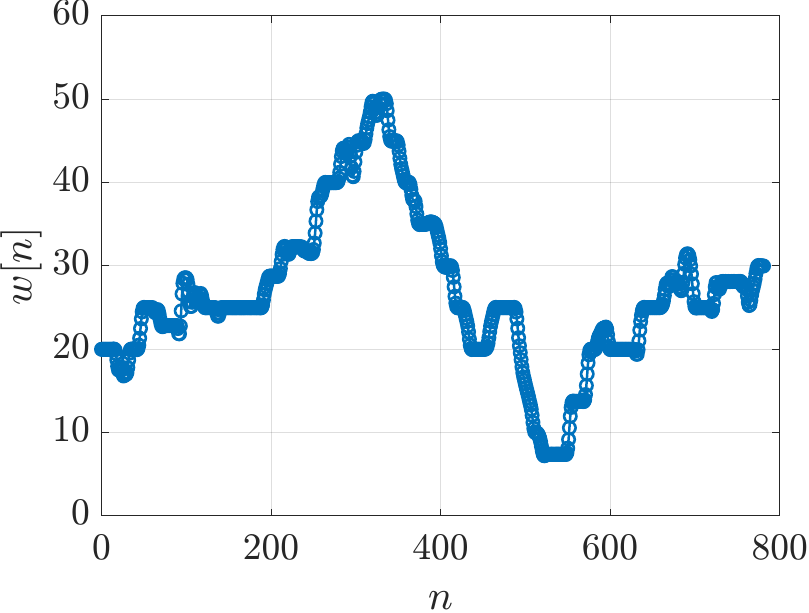}
            \caption{Vehicle 4473.}
            \label{fig:track43}
        \end{subfigure}
    \end{minipage}
    }
    \caption{Example velocity/disturbance sequences extracted from the NGSIM US-101 dataset.}
    \label{fig:noise_sequences}
\end{figure}

To obtain a setting that closely resembles a standard multi-lane highway scenario, we restrict our analysis to the first four mainline lanes. This choice avoids the auxiliary lanes associated with on-ramps and off-ramps present in the dataset, which introduce additional merging and diverging behaviors that are outside the scope of our study. Furthermore, we focus exclusively on passenger vehicles (autos), excluding motorcycles and trucks to ensure a more homogeneous traffic composition. As a result, our analysis is conducted on a subset consisting of approximately 3,900 vehicles.

In our setup, we leverage the high-resolution trajectory data in a data-driven manner by directly using individual vehicle trajectories to model the exogenous input of the system. Representative examples of the extracted velocity/disturbance sequences are shown in Fig.~\ref{fig:noise_sequences}. In particular, each trajectory from the NGSIM US-101 dataset is treated as a realization of the leader vehicle dynamics. This enables the incorporation of realistic traffic dynamics into the system through the disturbance process while maintaining full control over the follower dynamics.

Several observations regarding Fig.~\ref{fig:noise_sequences} are worth mentioning. First, although our analysis is restricted to 3900 vehicle trajectories, the original dataset contains nearly 6500 vehicles. Therefore, the vehicle IDs appearing in the figures may exceed 3900. Second, the durations of the velocity profiles vary significantly across vehicles. This is because the recorded road segment has a fixed length; consequently, vehicles traveling at higher speeds remain in the dataset for shorter durations, whereas slower traffic results in longer trajectories. To better illustrate this effect, the number of contributing trajectories at each time index $n$ is shown in Fig.~\ref{fig:num_contributing_tracks}. As can be observed, the number of available trajectories decreases significantly for $n > 1000$. As a result, some abrupt variations may appear in the running average LQR cost curves around this region due to the reduced number of contributing samples.

\begin{figure}[t]
    \centering
    \includegraphics[width=0.6\linewidth]{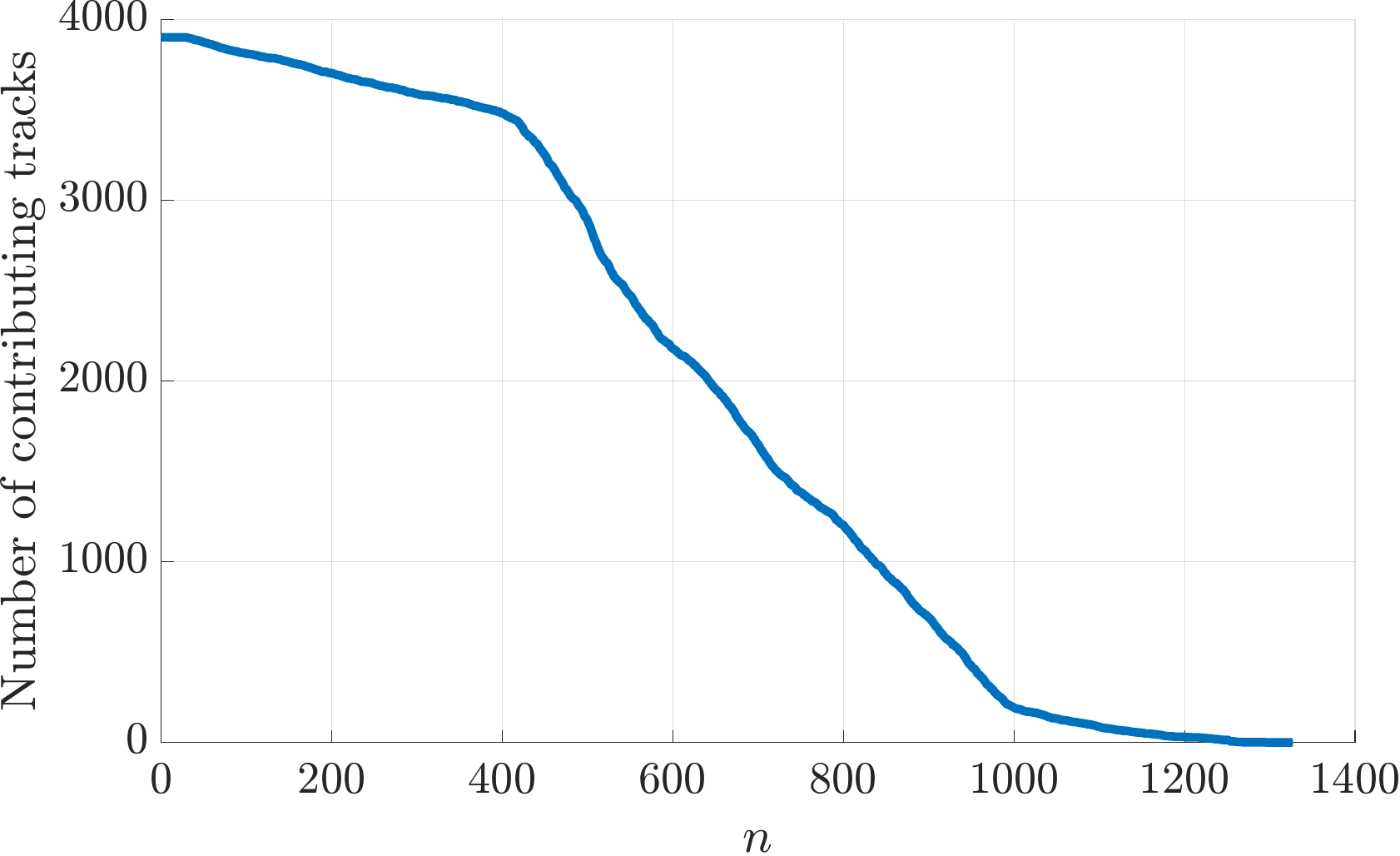}
    \caption{Number of contributing tracks for each averaged time index $n$.}
    \label{fig:num_contributing_tracks}
\end{figure}

The simulation setup is depicted in Fig.~\ref{fig:Car_Following}, which illustrates a discrete-time car-following scenario with a leader vehicle and a follower vehicle. The leader velocity profile is obtained from the NGSIM US-101 dataset. The objective is to maintain a desired inter-vehicle distance, while the follower does not have direct access to the relative distance measurement. Instead, the follower only receives information transmitted by the leader vehicle through a scheduling mechanism. Specifically, at each time step $n$, the leader decides whether to transmit its state according to the scheduling variable $d[n] \in \{0,1\}$. The communication channel is assumed to be rate-limited, error-free, and subject to a fixed delay. Based on the received updates, the follower determines its velocity, which serves as the control input of the system and is denoted by $u[n]$. Therefore, this setup directly corresponds to the system model introduced in Section~\ref{sec:Sys_Mdl}.

\begin{figure}[t]
    \centering
    \includegraphics[width=0.8\linewidth]{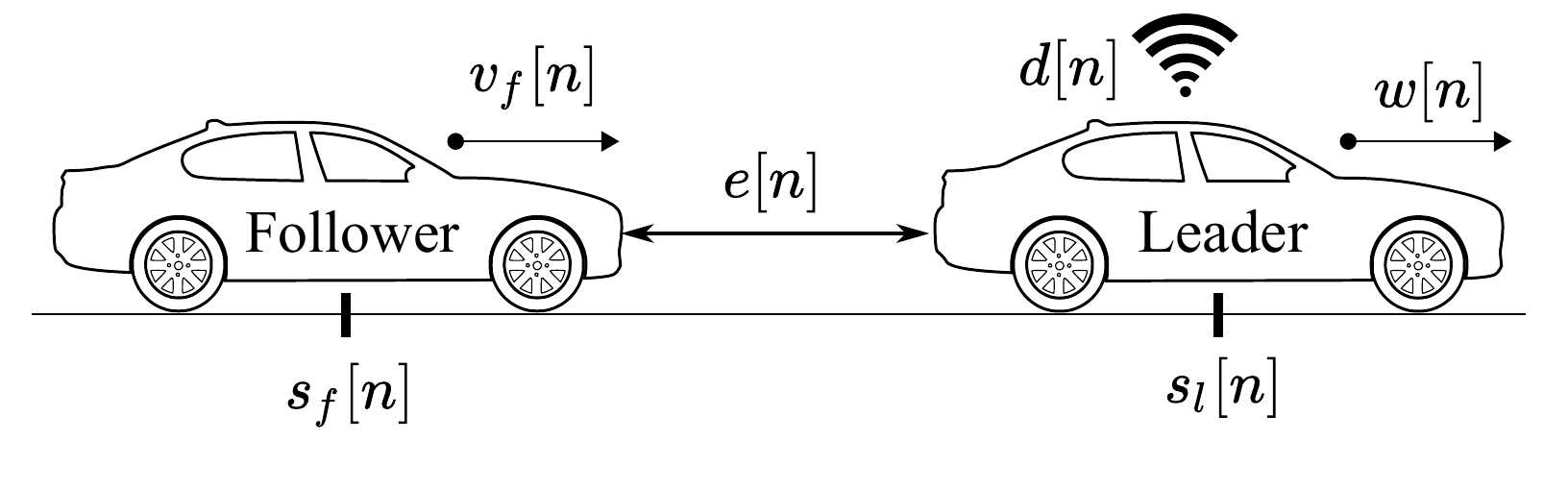}
    \caption{Discrete-time car-following scenario consisting of a leader and a follower vehicle. The leader determines the transmission decisions $d[n]$, while the follower regulates its velocity based on the received updates in order to maintain the desired inter-vehicle distance $e[n]$.}
    \label{fig:Car_Following}
\end{figure}

Let $s_\ell[n]$ and $s_f[n]$ denote the longitudinal positions of the leader and follower vehicles at time step $n$, respectively, where the system is sampled with period $T$. The vehicle dynamics are given by
\begin{align}
    s_\ell[n+1] &= s_\ell[n] + T w[n], \\
    s_f[n+1] &= s_f[n] + T v_f[n],
\end{align}
where $w[n]$ denotes the leader velocity and $v_f[n]$ represents the follower velocity, which acts as the control input of the system. Define the inter-vehicle spacing
\begin{equation}
    e[n] = s_\ell[n] - s_f[n].
\end{equation}
Then, the spacing dynamics evolve as
\begin{equation}
    e[n+1] = e[n] - T v_f[n] + T w[n].
\end{equation}
Assuming $\mathbb{E}[w[n]] = \mu'$, we decompose
\begin{equation}
    w[n] = \tilde{w}[n] + \mu',
\end{equation}
where $\tilde{w}[n]$ is zero-mean. Substituting yields
\begin{equation}
    e[n+1] = e[n] - T (v_f[n] - \mu') + T \tilde{w}[n].
\end{equation}

To regulate the spacing around a desired value $e^*$, define the state
\begin{equation}
    x[n] = e[n] - e^*.
\end{equation}
Then, the system becomes
\begin{equation}
    x[n+1] = x[n] - T (v_f[n] - \mu') + T \tilde{w}[n].
\end{equation}
Defining the control input
\begin{equation}
    u[n] = v_f[n] - \mu',
\end{equation}
we obtain the linear system
\begin{equation}
    x[n+1] = a x[n] + b u[n] + w'[n],
\end{equation}
where
\begin{equation}
    a = 1, \quad b = -T, \quad w'[n] = T \tilde{w}[n].
\end{equation}
Here, \(T=0.1\) s is the sampling period of the NGSIM US-101 trajectory data.

Note that the resulting system closely matches the form of the system in~\eqref{eqn:system_Model}. The key difference, however, lies in the modeling of the disturbance $w[n]$. In our setup, $w[n]$ is directly obtained from individual vehicle trajectories in the US 101 dataset, rather than being synthetically generated. As a result, $w[n]$ is non-zero mean.

To address this, we preprocess the disturbance sequence to obtain a zero-mean representation. Specifically, we consider two approaches. In the \emph{offline} approach, the mean of the entire trajectory is computed and subtracted from $w[n]$. In contrast, the \emph{online} approach estimates the mean in a causal manner using a sliding window over past samples, allowing the method to adapt to time-varying statistics. In both cases, the resulting disturbance $w'[n]$ is approximately zero-mean, enabling consistency with our system model formulation.

\begin{figure}[t]
    \centering
    \resizebox{0.85\linewidth}{!}{%
    \begin{minipage}{\linewidth}
        \centering
        \begin{subfigure}{0.48\linewidth}
            \centering
            \includegraphics[width=\linewidth]{noise_sequence_orig_track38_veh681.pdf}
            \caption{Original velocity/disturbance sequence.}
            \label{fig:orig_track38}
        \end{subfigure}
        \hfill
        \begin{subfigure}{0.48\linewidth}
            \centering
            \includegraphics[width=\linewidth]{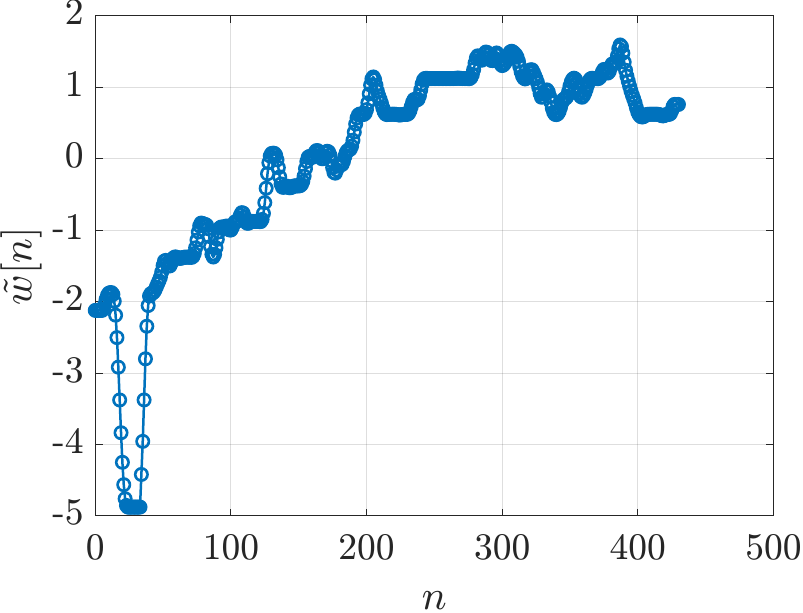}
            \caption{Corresponding effective disturbance sequence.}
            \label{fig:eff_track38}
        \end{subfigure}
    \end{minipage}
    }
    \caption{Comparison between the original and effective disturbance sequences for Vehicle 681 from the NGSIM US-101 dataset.}
    \label{fig:track38_comparison}
\end{figure}

More specifically, the effective disturbance sequence is constructed by removing the estimated mean of the measured velocity-derived disturbance and scaling it by the sampling/control gain factor. The resulting effective disturbance can be written as
\begin{equation}
    w'[n] = T \bigl( w[n] - \hat{\mu}_w[n] \bigr),
    \label{eqn:mean_normalization}
\end{equation}
where $\hat{\mu}_w[n]$ denotes the estimated mean of the disturbance sequence. For the offline case, $\hat{\mu}_w[n]$ corresponds to the sample mean computed using the entire trajectory, whereas in the online case it is estimated causally using the selected moving/windowed averaging method. An example illustrating the transformation from the original disturbance sequence to the corresponding effective disturbance sequence is shown in Fig.~\ref{fig:track38_comparison}.

We next describe the scheduling policies used in the simulations. To compare policies at the same communication rate, we use a common target rate \(p_{\mathrm{target}}\). Thus, all policies have the same mean AoI but induce different inter-scheduling interval distributions. Specifically, we consider three classes of state-independent schedulers: the as-periodic-as-possible policy, random scheduling, and Erlang-distributed inter-scheduling times with shape parameters \(k=2,4,8\). 

The random (Bernoulli) scheduling policy sets \(\{D_k\}\) as an i.i.d.\ Bernoulli sequence with
\begin{equation}
\mathbb{P}(D_k=1)=p_{max},\qquad \mathbb{P}(D_k=0)=1-p_{max}.
\end{equation}
This induces a geometric distribution on the inter-scheduling times. The periodic policy is defined in Corollary~\ref{corollary:optimal_as_periodic_as_possible}. It is the ``as-periodic-as-possible'' policy that satisfies the rate constraint for any \(0 < p_{\mathrm{target}} < 1\) by distributing scheduling instants as uniformly as possible over time. The Erlang inter-scheduling times are generated according to
\[
f_{\Delta_n}(x)=\frac{\lambda^k x^{k-1} e^{-\lambda x}}{(k-1)!}, \quad x \geq 0,
\]
where \(k\) is the shape parameter and \(\lambda\) is selected so that \(\mathbb{E}[\Delta_n]=1/p_{\mathrm{target}}\). In the discrete-time implementation, the generated inter-scheduling times are quantized to integer scheduling intervals while preserving the target mean rate. Thus, for a fixed communication rate \(p_{\mathrm{target}}\), all policies have the same mean inter-scheduling time but differ in the higher-order properties of their AoI distributions. An example of the resulting distributions for \(p_{\mathrm{target}}=0.15\) is shown in Fig.~\ref{fig:iut_histograms_offline}.

As shown in Fig.~\ref{fig:iut_histograms_offline}, the as-periodic-as-possible policy has only two possible inter-scheduling times, chosen so that the mean interval equals \(1/0.15\). The random scheduling policy produces a geometric inter-scheduling distribution, which has the largest variability among the policies considered. The Erlang-distributed policies interpolate between these two extremes: for smaller \(k\), the distribution is more dispersed and closer to an exponential shape, whereas for larger \(k\), it becomes increasingly concentrated around its mean.

\begin{figure}[!ht]
    \centering
    \includegraphics[width=\linewidth]{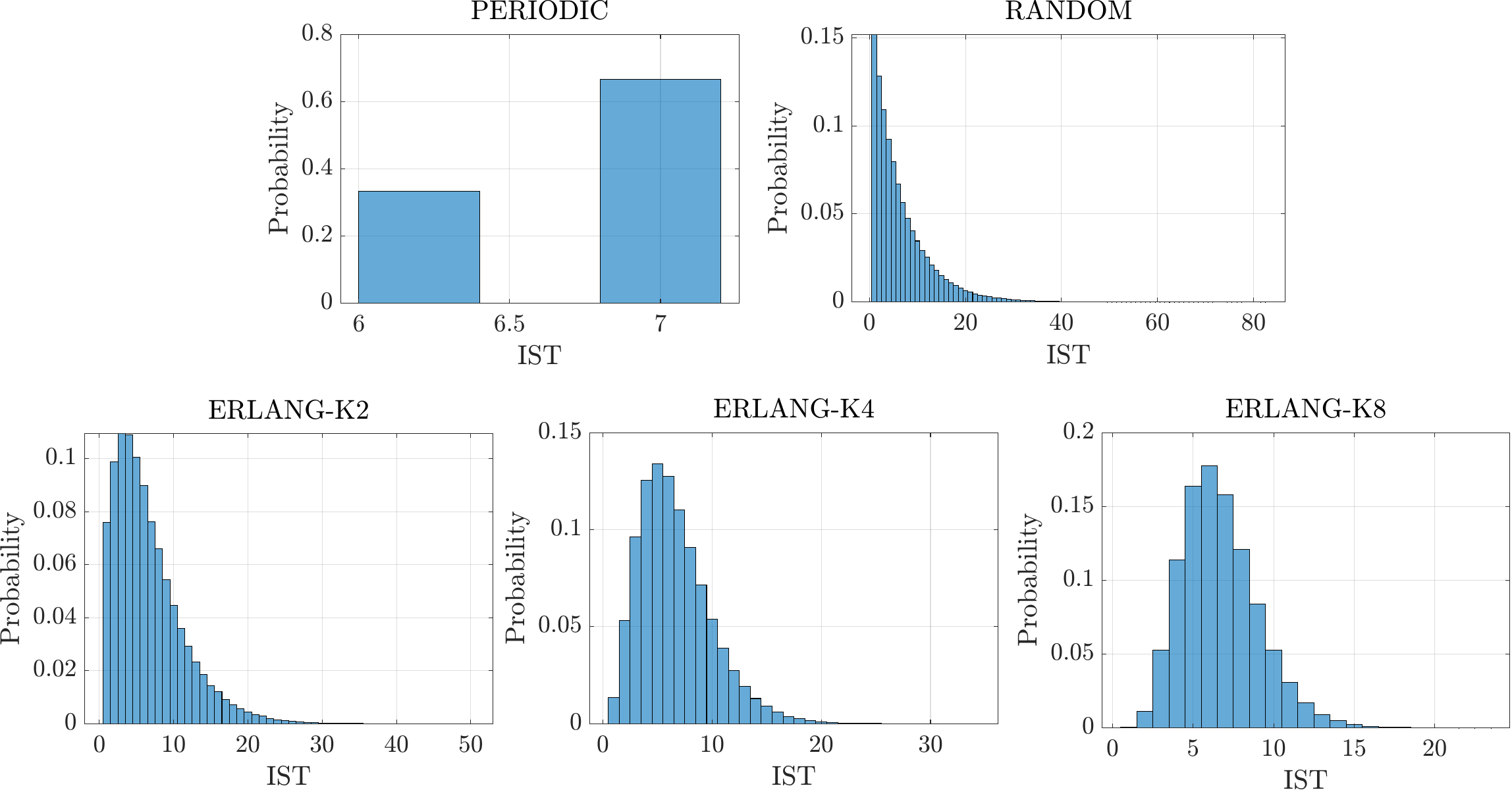}
    \caption{Inter-scheduling-time (IST) distributions for different scheduling policies with the communication rate $p_{\mathrm{target}}=0.15$.}
    \label{fig:iut_histograms_offline}
\end{figure}

On the controller side, we consider three controller policies. The first two are commonly used benchmark controllers in the networked control literature: the zero-order-hold (ZOH) controller and the impulsive (IMP) controller. The ZOH controller applies the linear feedback law~\eqref{eqn:optimal_controller} when a new observation is received and holds the control input constant until the next observation. Specifically, for scheduling instants \(\{t_n\}\),
\begin{equation}
    U^{\mathrm{ZOH}}_k =
    \begin{cases}
        -L \hat{X}_k, & \text{if } k = t_n + \tau \text{ for some } n, \\
        U_{k-1}, & \text{otherwise}.
    \end{cases}
    \label{eqn:ZOH_controller}
\end{equation}
In contrast, the impulsive controller concentrates the entire control action at the observation instant and applies no control between scheduling times:
\begin{equation}
    U^{\mathrm{IMP}}_k =
    \begin{cases}
        -\dfrac{a}{b}\,\hat{X}_k, & \text{if } k = t_n + \tau \text{ for some } n, \\
        0, & \text{otherwise}.
    \end{cases}
    \label{eqn:imp_controller}
\end{equation}
where \(\hat{X}_k\) denotes the conditional state estimate appearing in~\eqref{eqn:opt_controller_sym}. Since the impulsive controller exerts no input during the inter-scheduling interval, the feedback gain is chosen as \(-a/b\) rather than \(-L\). This gain corresponds to the limiting case of the LQR controller as \(r \to 0\), where control effort is penalized negligibly and the nominal state can be driven to zero in a single step. As a third benchmark, we consider the symmetric controller, defined for all \(m \ge \tau\) by
\begin{equation}
    U_{t_n+m}
    = -L\,\hat{X}_{t_n+m}
    = -L\,(a-bL)^{m-\tau}
    \left(
    a^{\tau} X_{t_n}
    +
    \sum_{j=0}^{\tau-1} a^{\tau-1-j} b\,U_{t_n+j}
    \right).
    \label{eqn:opt_controller_sym}
\end{equation}

In our first analysis, we validate the AoI-equivalent cost in
\eqref{eqn:eqv_cost_iid_case}, which was derived for i.i.d. disturbances.
However, to compare the LQR cost with this equivalent cost, we must also
include the MSE-equivalent term obtained in \eqref{eqn:eqv_prob_last_eqn}.
Therefore, the complete equivalent cost is given by
\begin{equation}
    \left(
    L^2(r+b^2K)\left(
    \frac{1}{2}
    \frac{\mathbb{E}[\Delta_n^2]}
    {\mathbb{E}[\Delta_n]}
    +
    \tau+\frac{1}{2}
    \right) + K
    \right) \sigma_W^2 .
    \label{eqn:eqv_cost_iid_complete}
\end{equation}
To evaluate the AoI-equivalent cost in the i.i.d. setting, we use the
empirical distribution obtained from all velocity profiles, shown in
Fig.~\ref{fig:empirical_W_dist}. We then draw independent samples from this
empirical distribution and compute the running average LQR cost. Figure~\ref{fig:theory_vs_sim_iid} compares the theoretical cost in
\eqref{eqn:eqv_cost_iid_complete} with the steady-state i.i.d.-sampling
simulation results as a function of the target communication rate
$p_{\mathrm{target}}$. As shown, the theoretical expression closely matches
the simulation results.

\begin{figure}[!ht]
    \centering
    \includegraphics[width=0.75\linewidth]{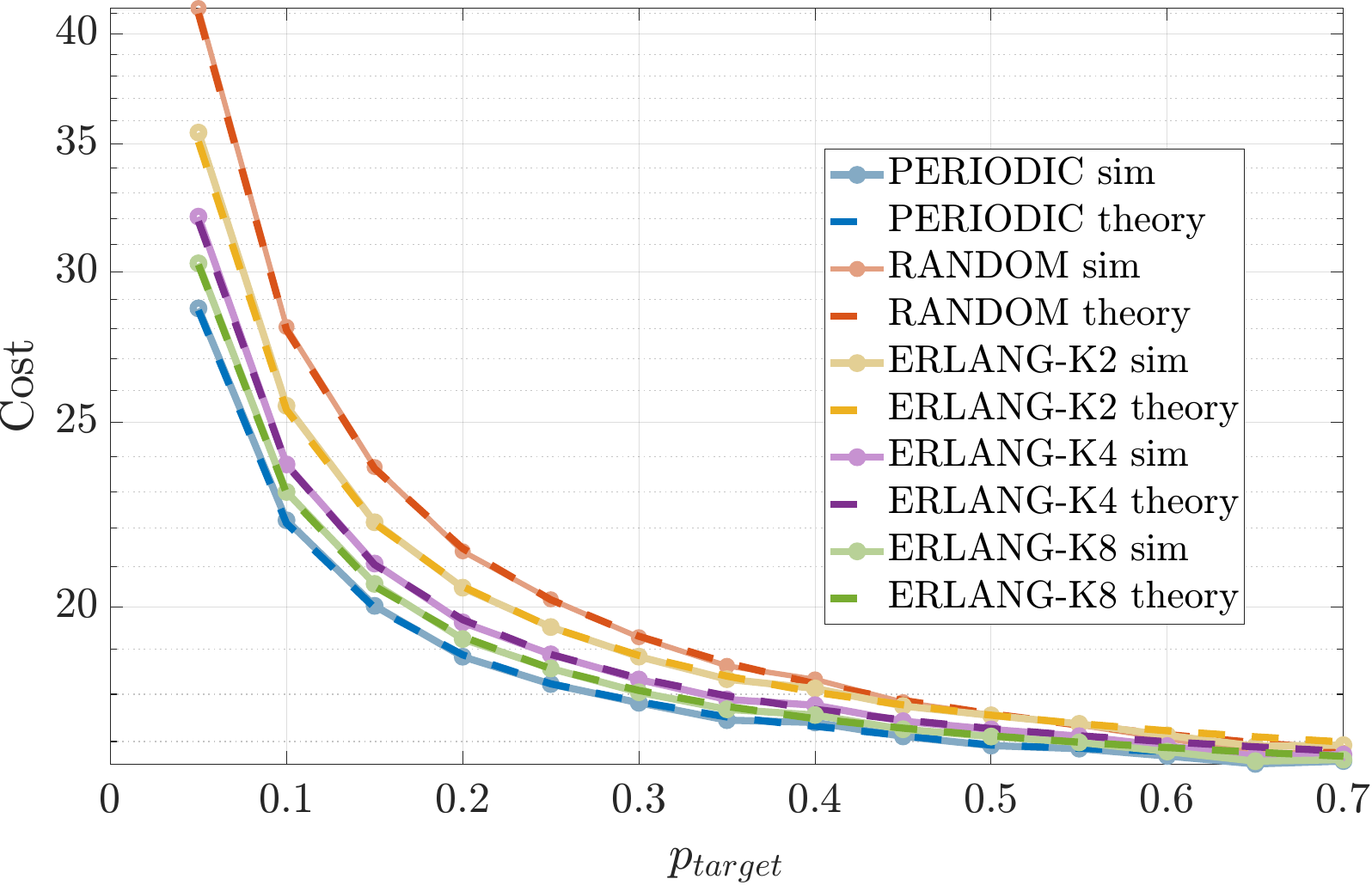}
    \caption{Comparison between the theoretical AoI-equivalent cost in
    \eqref{eqn:eqv_cost_iid_complete} and the simulated running average LQR
    cost under i.i.d. sampling from the empirical disturbance distribution. The
    close agreement confirms the validity of the equivalent-cost expression
    for the i.i.d. disturbance case.}
    \label{fig:theory_vs_sim_iid}
\end{figure}

Now consider the example introduced in Section~\ref{sec:AoI_iid_Noise},
where the two-point randomized scheduler has the inter-scheduling interval
distribution shown in Fig.~\ref{fig:iut_distribution_two_point_iid}. In Fig.~\ref{fig:Running_Avg_IID_Two_Point_1_12_0_05}, we compare the running-average LQR cost of this scheduler with that of its shifted version, where $\bar{p}$ denotes the empirically achieved communication rate.
As shown, the theoretical values closely match the simulation results. More
importantly, increasing the communication rate from approximately \(0.392\) to
\(0.645\) degrades the tracking performance.

\begin{figure}[!ht]
    \centering
    \includegraphics[width=0.65\linewidth]{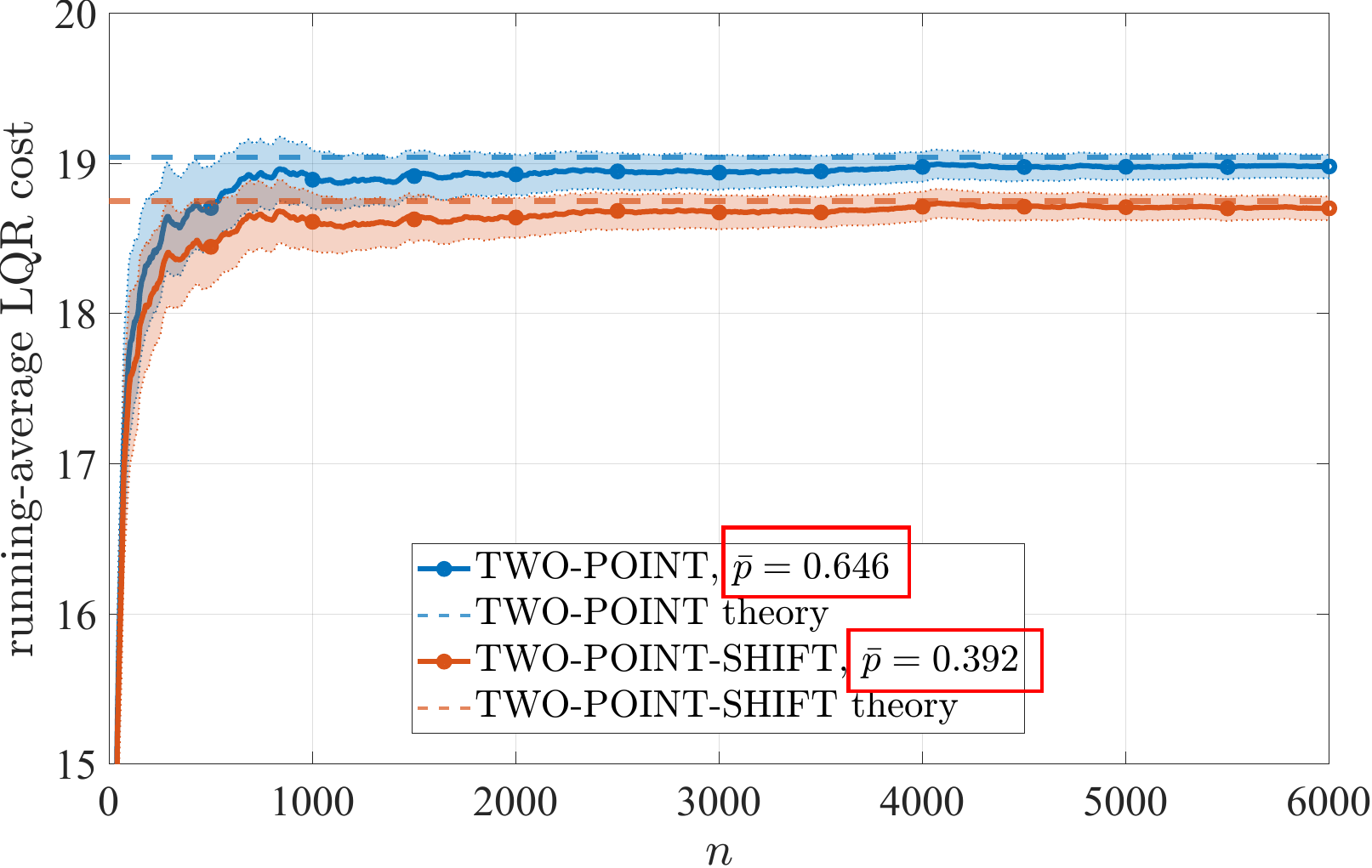}
    \caption{Running-average LQR cost for the original and shifted two-point randomized schedulers in the i.i.d. disturbance case, with \(\delta_s=1\), \(\delta_\ell=12\), \(p_\ell=0.05\), \(s=1\), and \(\tau=1\). The shaded regions represent the \(95\%\) confidence intervals.}
    \label{fig:Running_Avg_IID_Two_Point_1_12_0_05}
\end{figure}

The effect becomes even more pronounced in the second example. Consider the
two-point randomized scheduler shown in
Fig.~\ref{fig:iut_distribution_two_point_1_50}, with \(\delta_s=1\),
\(\delta_\ell=50\), \(p_\ell=0.02\), and its shifted version with \(s=10\).
The original scheduler has mean inter-scheduling interval
\[
    \mathbb{E}[\Delta]=0.98(1)+0.02(50)=1.98,
\]
corresponding to a communication rate of approximately \(0.506\). The shifted
scheduler has
\[
    \mathbb{E}[\widetilde{\Delta}]
    =
    0.98(11)+0.02(60)
    =
    11.98,
\]
corresponding to a communication rate of approximately \(0.083\). As shown in
Fig.~\ref{fig:Running_Avg_IID_Two_Point_1_50_0_02}, increasing the
communication rate by roughly a factor of six leads to worse tracking
performance.

\begin{figure}[!ht]
    \centering
    \begin{subfigure}{0.48\linewidth}
        \centering
        \includegraphics[width=0.7\linewidth]{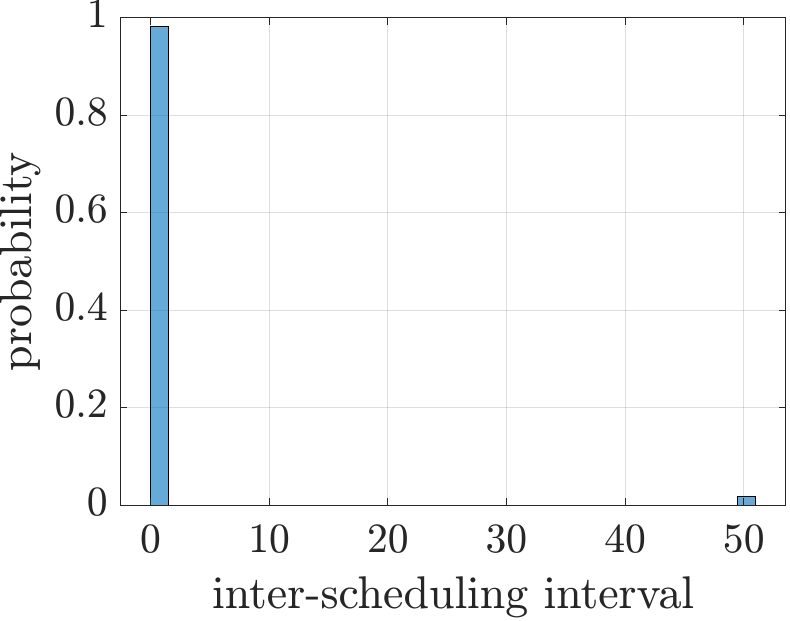}
        \caption{Original two-point distribution.}
        \label{fig:iut_distribution_two_point_1_50_original}
    \end{subfigure}
    \hfill
    \begin{subfigure}{0.48\linewidth}
        \centering
        \includegraphics[width=0.7\linewidth]{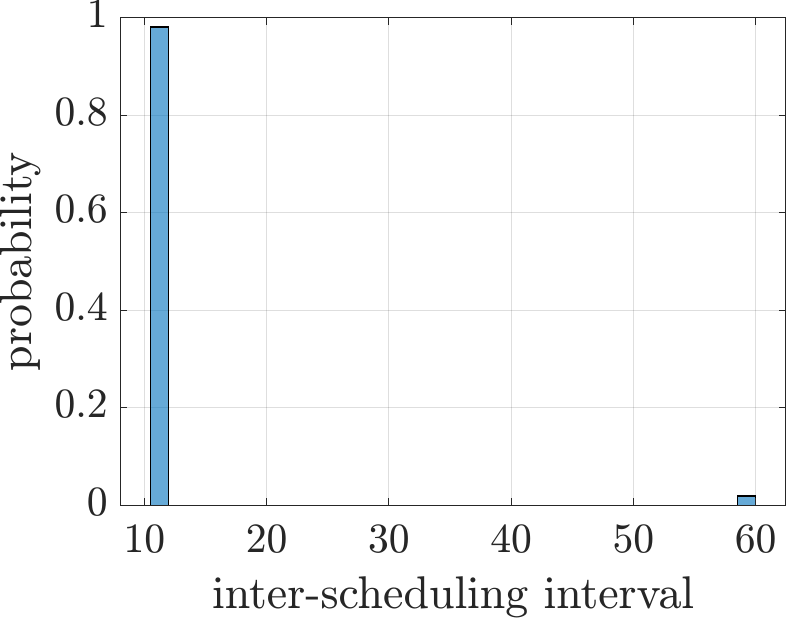}
        \caption{Shifted two-point distribution.}
        \label{fig:iut_distribution_two_point_1_50_shifted}
    \end{subfigure}
    \caption{Original and shifted two-point inter-scheduling interval
    distributions for \(\delta_s=1\), \(\delta_\ell=50\),
    \(p_\ell=0.02\), and \(s=10\). The shifted scheduler increases each
    inter-scheduling interval by ten time steps, thereby reducing the
    communication rate.}
    \label{fig:iut_distribution_two_point_1_50}
\end{figure}

This example has the same channel interpretation discussed earlier. Suppose
the channel is usually fast, but with probability \(2\%\) the effective update
cycle lasts \(50\) time steps instead of one time step. Although it may seem
natural to transmit as frequently as possible, the simulation shows the
opposite behavior: waiting longer between transmission attempts can reduce the
running-average LQR cost.

As discussed in Section~\ref{sec:AoI_iid_Noise}, our analysis considers sampled AoI. However, even if one considers the classical instantaneous AoI, the equivalent LQR cost is not determined solely by mean AoI when the open-loop gain is perturbed slightly away from \(|a|=1\), as shown in Fig.~\ref{fig:iid_two_point_a_near_one}. As can be seen, even for the small perturbation \(a=0.993\), the shifted scheduler has a higher mean AoI but a lower LQR cost.

\begin{figure}[!ht]
    \centering
    \includegraphics[width=0.65\linewidth]{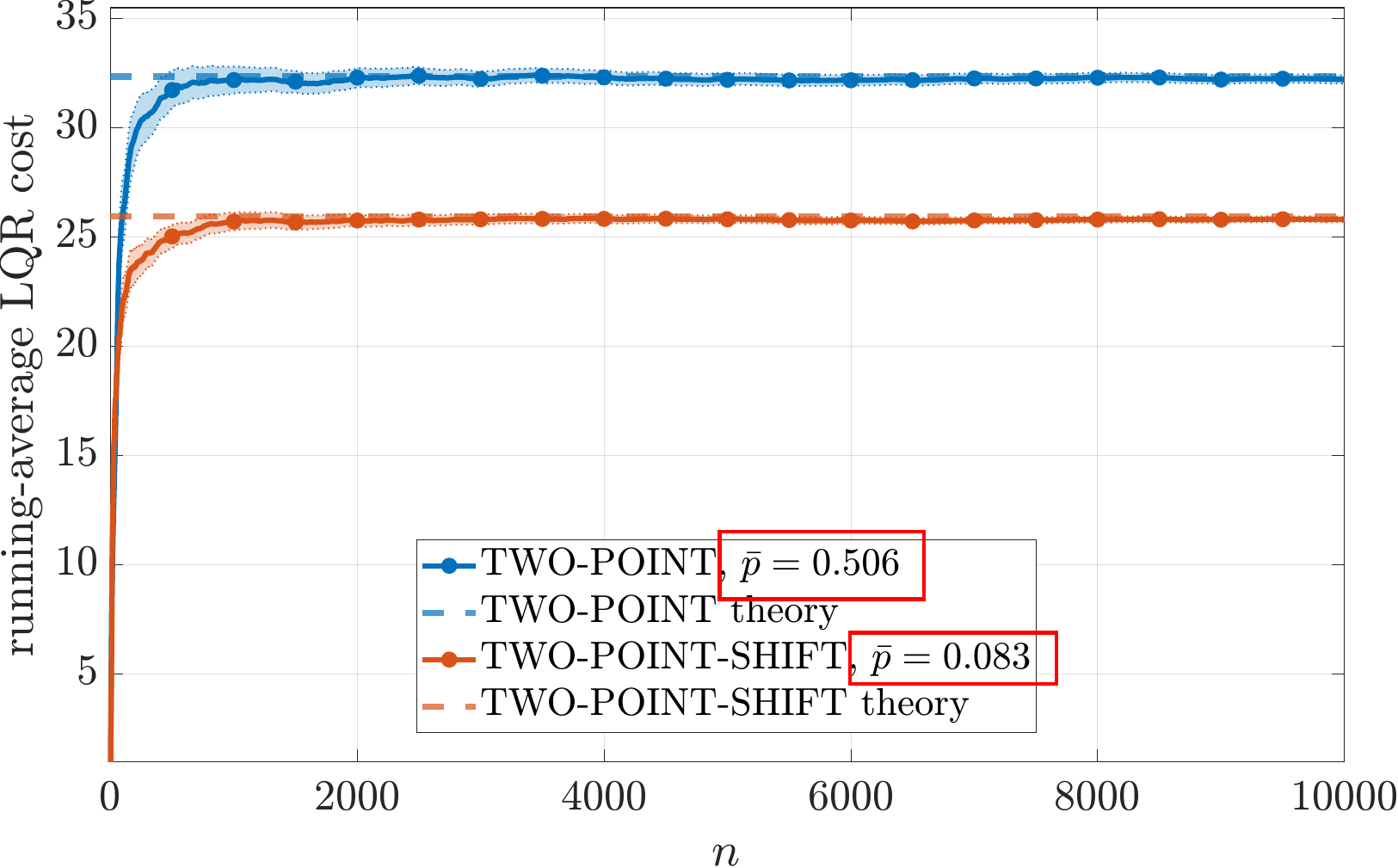}
    \caption{Running-average LQR cost for the original and shifted two-point
    randomized schedulers in the i.i.d. disturbance case, with
    \(\delta_s=1\), \(\delta_\ell=50\), \(p_\ell=0.02\), \(s=10\), and
    \(\tau=1\). The shaded regions represent the \(95\%\) confidence intervals.}
    \label{fig:Running_Avg_IID_Two_Point_1_50_0_02}
\end{figure}

\begin{figure}[!ht]
	\centering
	\includegraphics[width=0.65\linewidth]{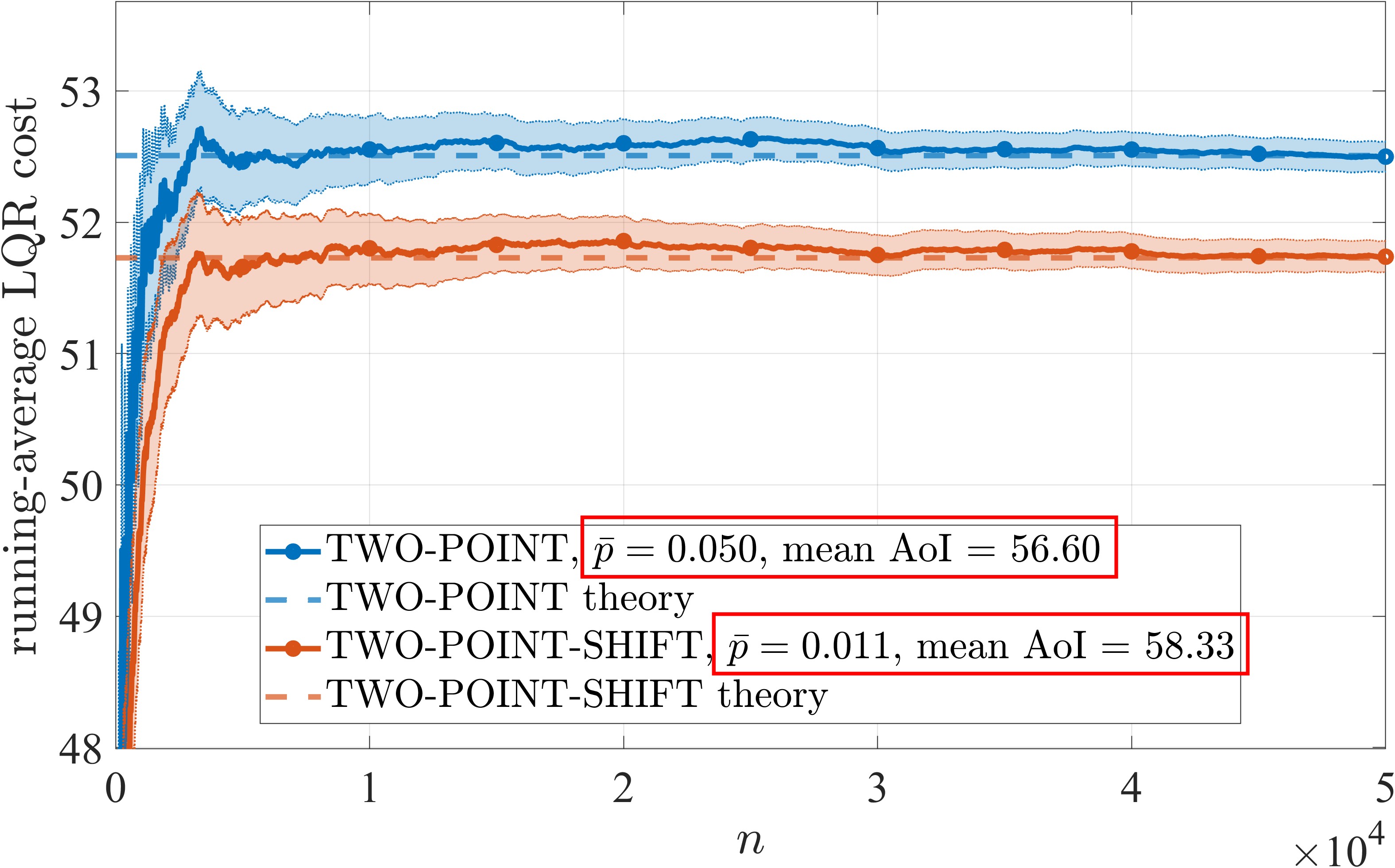}
	\caption{Running-average LQR cost for two i.i.d. two-point schedulers near the marginally stable case. Although the shifted scheduler has a larger classical mean AoI, it achieves a smaller LQR tracking cost, illustrating that the equivalence between mean AoI and the LQR-induced objective is fragile away from \(|a|=1\).}
	\label{fig:iid_two_point_a_near_one}
\end{figure}

We now turn to the correlated-disturbance case, where the simulations are
performed directly using the velocity profiles from the US-101 dataset.
Unlike the iid case, these profiles preserve the temporal correlations
present in the real traffic data, as expected in car-following traffic
dynamics. Therefore, the analysis in this section follows the
correlated-noise framework developed in
Section~\ref{sec:AoI_Correlated_Noise}. To connect the data-driven
disturbances with this framework, we model the autocorrelation function of
the disturbance process using the exponential form in
\eqref{eqn:noise_ACF}, where $\beta \ge 0$ controls the correlation strength.

We next estimate \(\beta\) from the empirical velocity data and evaluate the quality of the exponential autocorrelation function (ACF) approximation. As a preliminary observation, the mean-normalized velocity process \(w[n]\) has an approximately Gaussian marginal distribution, as shown in Fig.~\ref{fig:empirical_W_dist}. The autocorrelation analysis, however, requires additional care. In the simulations, each vehicle trajectory is treated as one sample path of the disturbance process \(w[n]\). Consequently, the average of the sample-path ACFs need not coincide with the ACF of the underlying stochastic process, especially when the trajectories are finite and heterogeneous. This effect is illustrated in Fig.~\ref{fig:exp_corr_fit}. The black curve shows the target exponential ACF, while the remaining curves show the sample-path ACFs of 10 synthetic trajectories generated from the same correlated process. The average of these sample-path ACFs can differ noticeably from the target ACF. Therefore, to estimate \(\beta\), we use a reverse fitting procedure. For each candidate value of \(\beta\), we generate \(10^4\) synthetic sample paths with the empirical marginal distribution shown in Fig.~\ref{fig:empirical_W_dist} and the exponential ACF model above. We then compute the average sample-path ACF and compare it with the average ACF obtained from the 3900 empirical vehicle-speed profiles. The resulting MSE as a function of \(\beta\) is shown in Fig.~\ref{fig:beta_mse}. The best-fitting value is \(\beta=0.007558\), indicating that the velocity process is highly correlated over time. The corresponding target exponential ACF, synthetic average ACF, and empirical average ACF are shown in Fig.~\ref{fig:beta_best_fit}.

\begin{figure}[!t]
    \centering
    \resizebox{0.9\linewidth}{!}{%
    \begin{minipage}{\linewidth}
        \centering
        
        \begin{subfigure}[t]{0.48\linewidth}
            \centering
            \includegraphics[height=0.24\textheight,keepaspectratio]{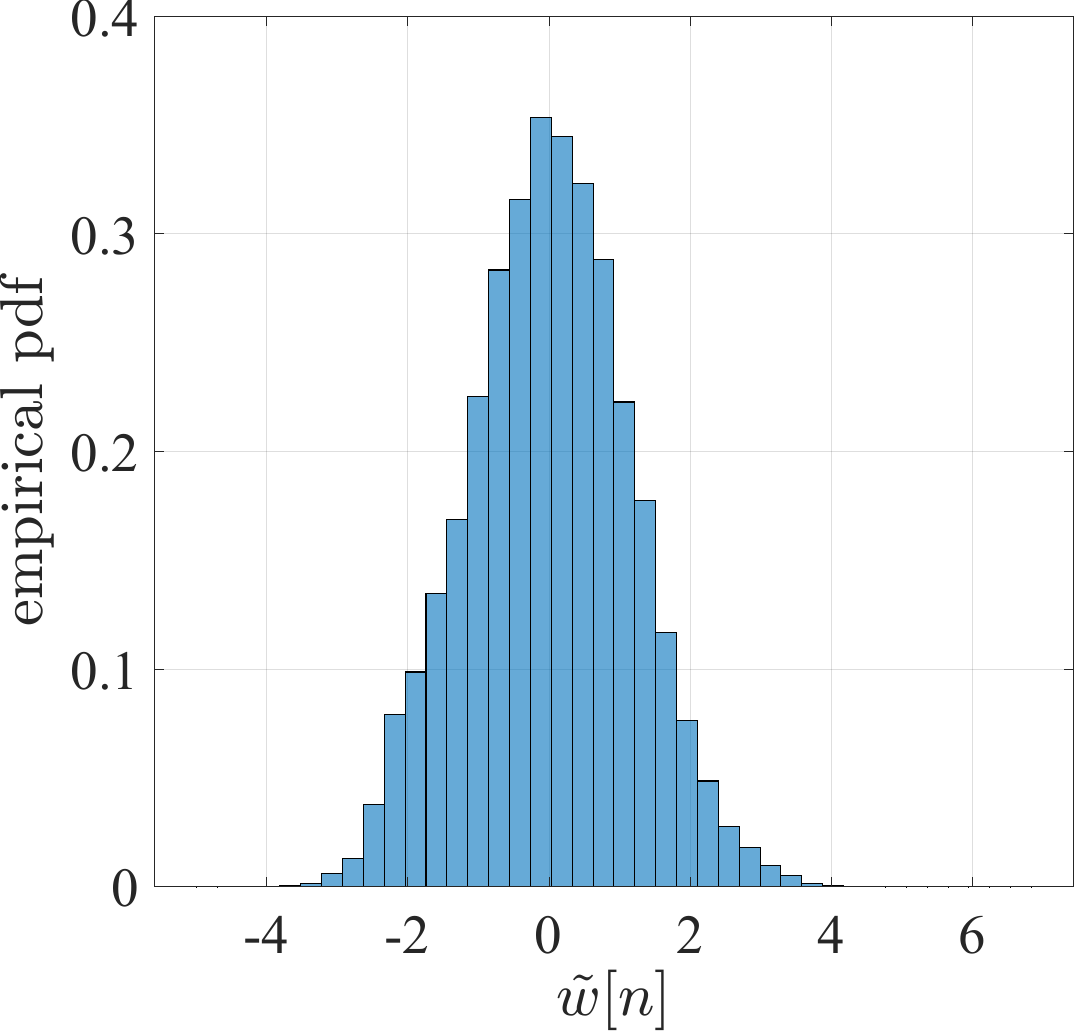}
            \caption{Empirical normalized distribution of the process $w[n]$.}
            \label{fig:empirical_W_dist}
        \end{subfigure}
        \hfill
        \begin{subfigure}[t]{0.48\linewidth}
            \centering
            \includegraphics[height=0.24\textheight,keepaspectratio]{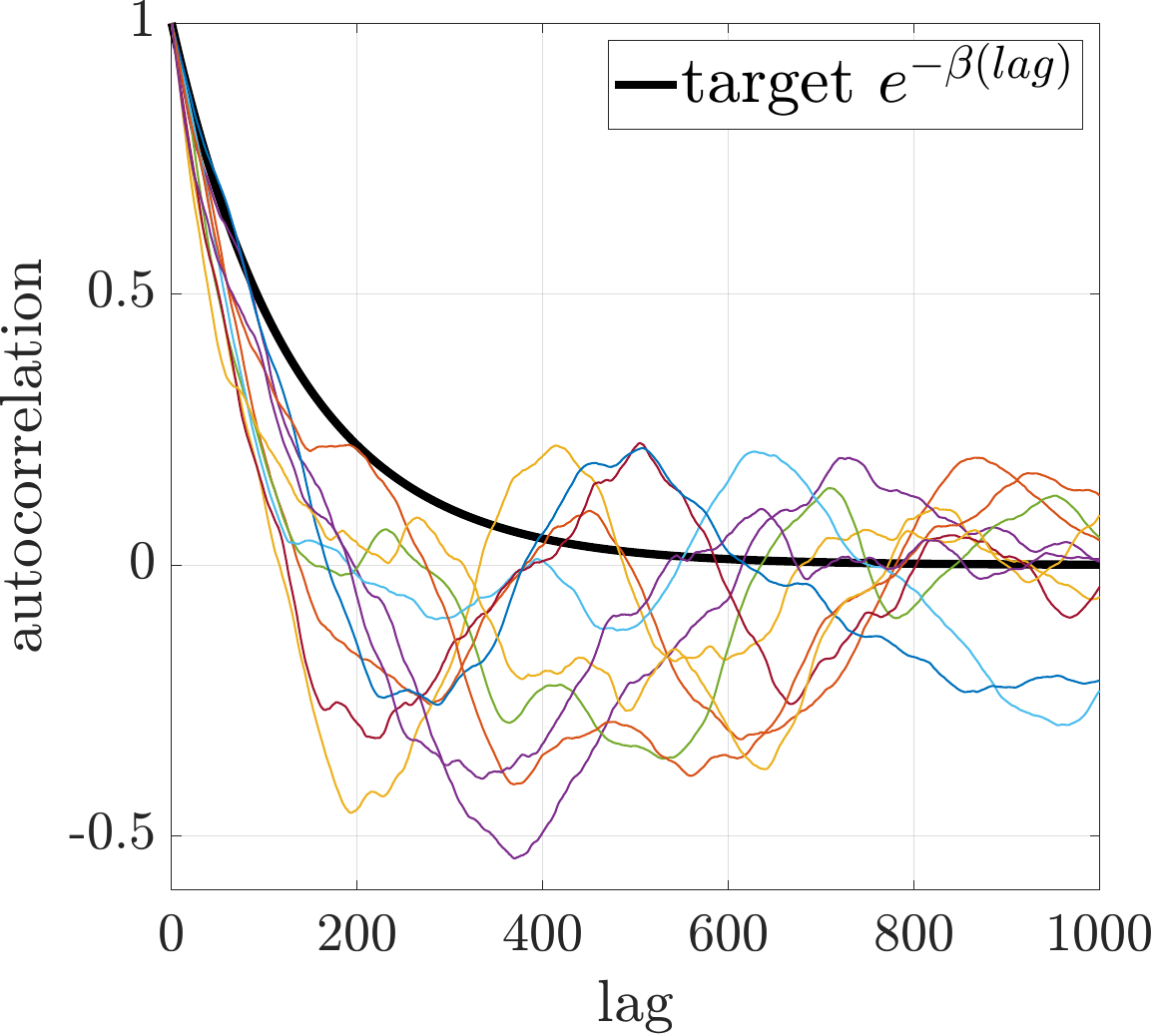}
            \caption{Exponential correlation model fitting.}
            \label{fig:exp_corr_fit}
        \end{subfigure}

        \vspace{0.3cm}

        \begin{subfigure}[t]{0.48\linewidth}
            \centering
            \includegraphics[height=0.24\textheight,keepaspectratio]{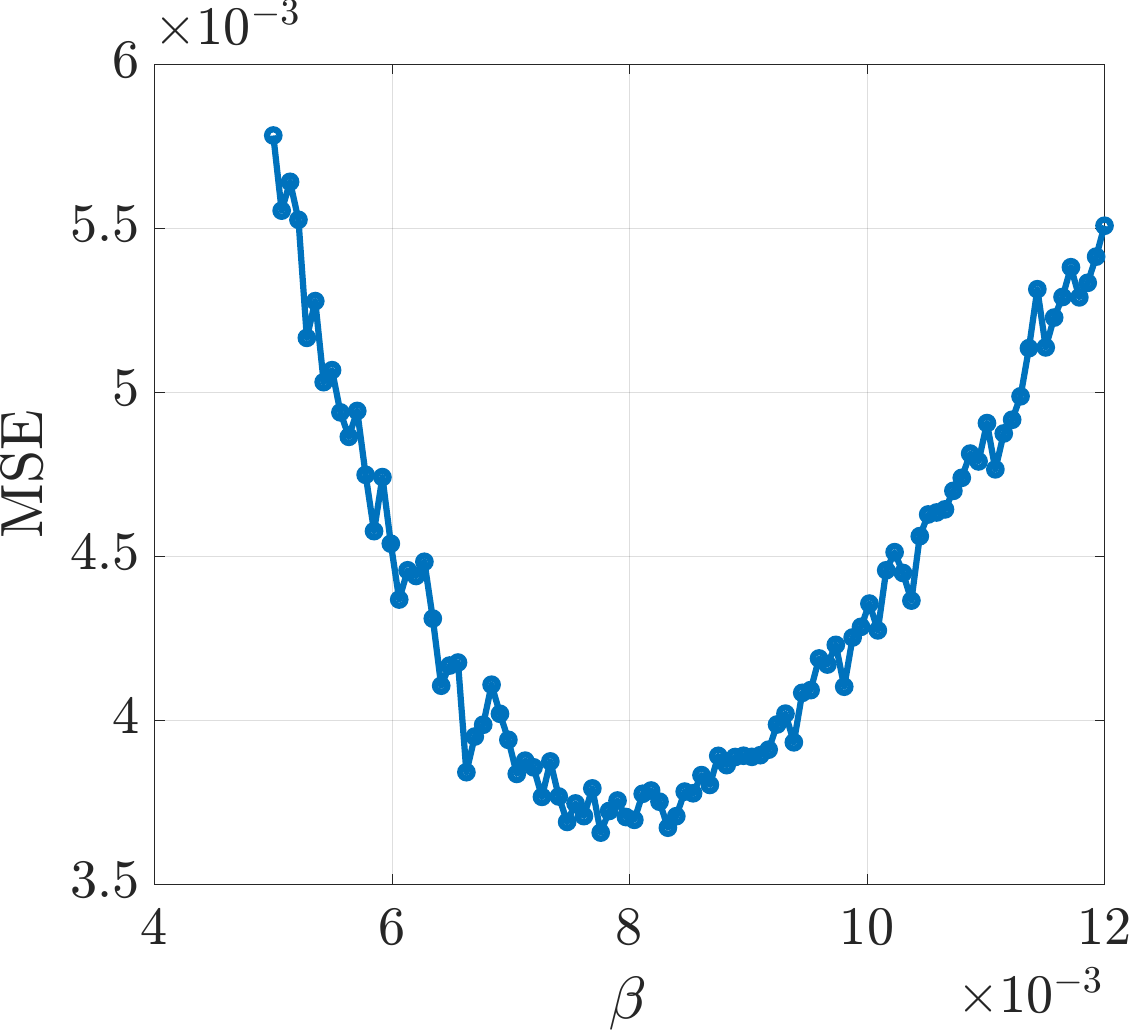}
            \caption{MSE versus the correlation parameter $\beta$.}
            \label{fig:beta_mse}
        \end{subfigure}
        \hfill
        \begin{subfigure}[t]{0.48\linewidth}
            \centering
            \includegraphics[height=0.24\textheight,keepaspectratio]{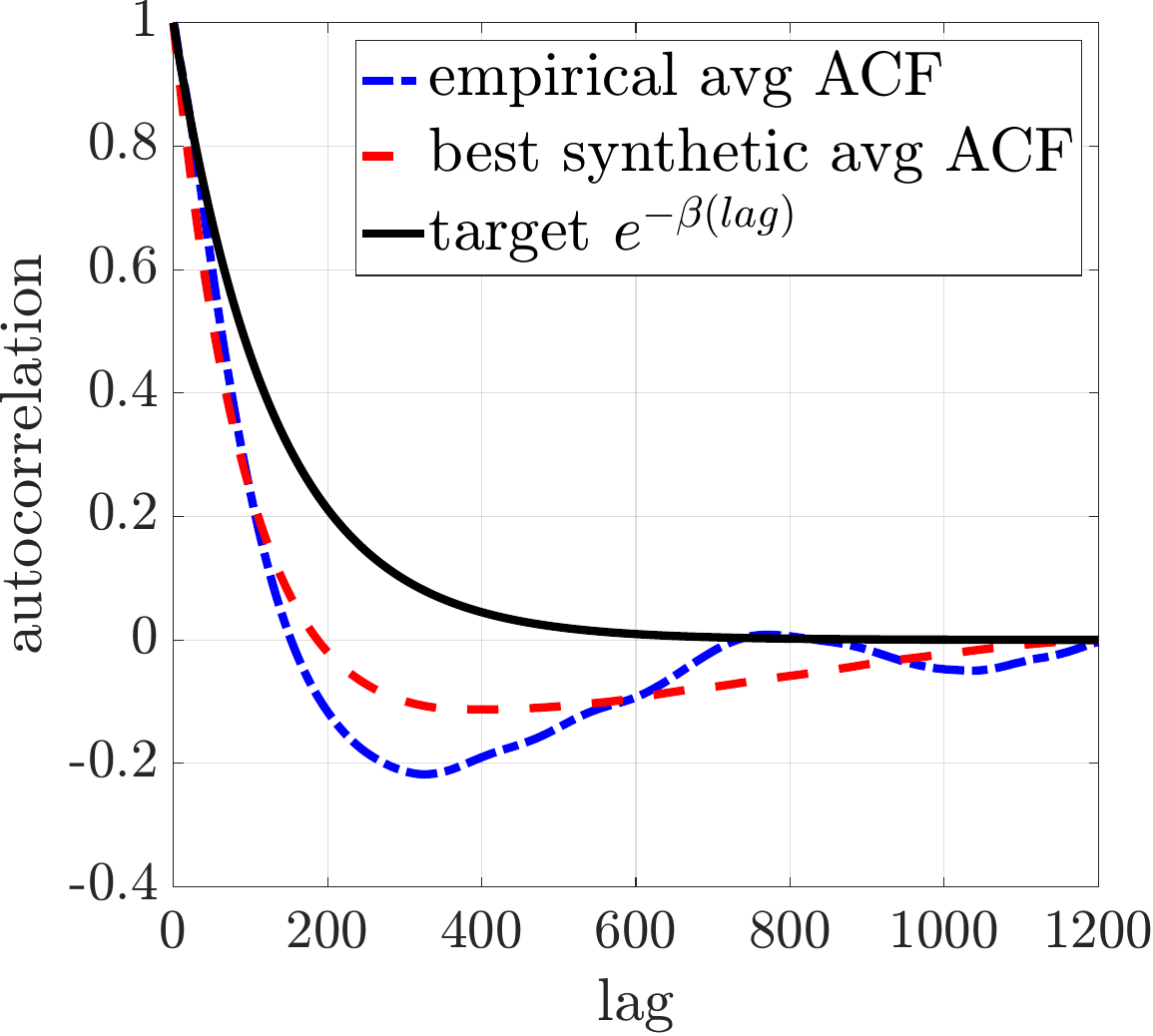}
            \caption{Best-fit correlation parameter selection where $\beta=0.007558$.}
            \label{fig:beta_best_fit}
        \end{subfigure}
    \end{minipage}
    }
    \caption{Statistical analysis and exponential correlation modeling of the normalized velocity process.}
    \label{fig:correlation_analysis}
\end{figure}

For the first analysis of the correlated-disturbance case, we revisit the
two-point randomized scheduler introduced in
Section~\ref{sec:AoI_Correlated_Noise}. We first consider the distribution
shown in Fig.~\ref{fig:iut_distribution_two_point_correlated}. The resulting
running-average LQR cost, obtained by directly using the velocity profiles
from the US-101 dataset, is shown in
Fig.~\ref{fig:Running_Avg_CORR_Two_Point_1_12_0_02}. In this simulation, we
use \(q=1\), \(r=0.001\), and the impulsive controller, consistent with the
theoretical analysis in Section~\ref{sec:AoI_Correlated_Noise}. The result
is striking: increasing the communication rate from \(0.450\) to \(0.821\),
almost doubling the rate, worsens the tracking performance.

\begin{figure}[!t]
    \centering
    \includegraphics[width=0.65\linewidth]{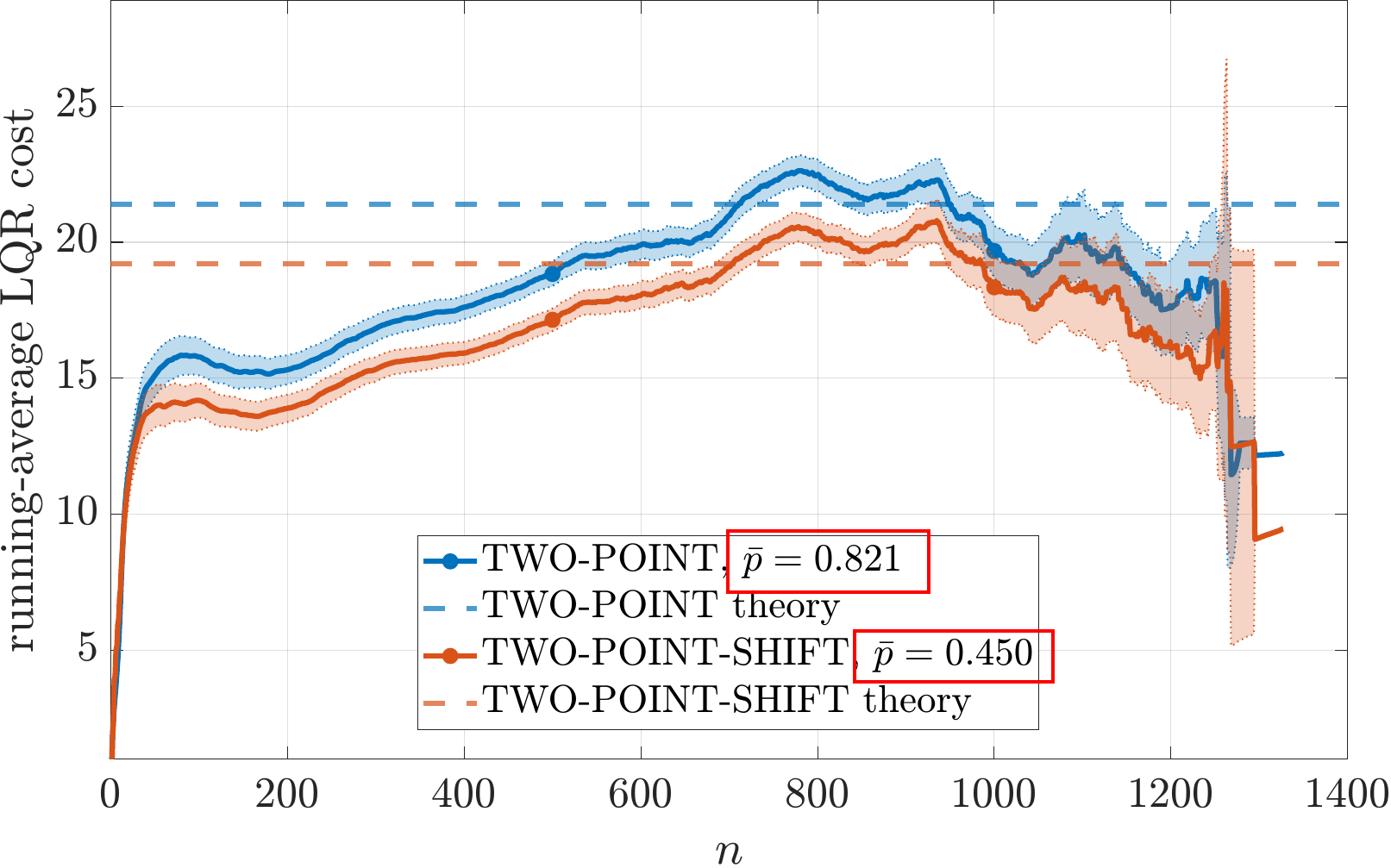}
    \caption{Running-average LQR cost for the original and shifted two-point
    randomized schedulers under correlated US-101 velocity disturbances, with
    \(\delta_s=1\), \(\delta_\ell=12\), \(p_\ell=0.02\), \(s=1\), and
    \(\tau=1\).}
    \label{fig:Running_Avg_CORR_Two_Point_1_12_0_02}
\end{figure}

\begin{figure}[!ht]
	\centering
	\includegraphics[width=0.65\linewidth]{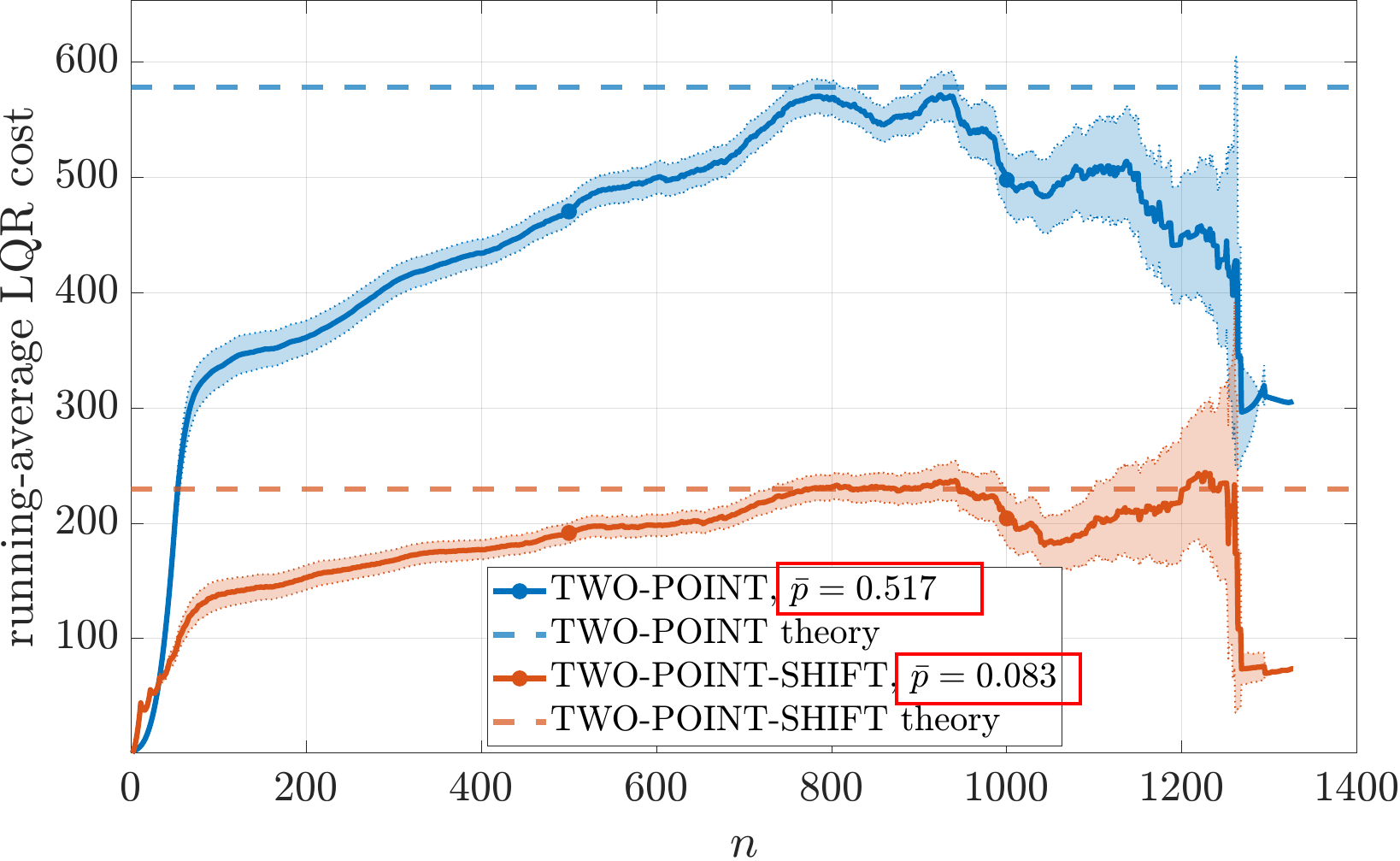}
	\caption{Running-average LQR cost for the original and shifted two-point
		randomized schedulers under correlated US-101 velocity disturbances, with
		\(\delta_s=1\), \(\delta_\ell=50\), \(p_\ell=0.02\), \(s=10\), and
		\(\tau=1\).}
	\label{fig:Running_Avg_CORR_Two_Point_1_50_0_02}
\end{figure}
We next give a more pronounced example, analogous to the one used in the
i.i.d. sampling case. Consider the two-point randomized scheduler shown in
Fig.~\ref{fig:iut_distribution_two_point_1_50}. Using the same simulation
settings as above, the running-average LQR cost is shown in
Fig.~\ref{fig:Running_Avg_CORR_Two_Point_1_50_0_02}. In this case, increasing
the communication rate by roughly a factor of six still degrades the tracking
performance, with a degradation exceeding \(50\%\).

We further examine this example from a safety-oriented perspective. Since
\(r=0.001\), the LQR objective is dominated by the state term and is therefore
approximately proportional to \(\mathbb{E}[X_k^2]\). Recall that the state is
defined as
\[
X_k = e_k - e^*,
\]
where \(e_k\) is the inter-vehicle spacing and \(e^*\) is the desired spacing.
Thus, \(\mathbb{E}[X_k^2]\) measures the average squared deviation of the
actual spacing from the desired spacing. Moreover, since
\[
X_k < -e^*
\quad \Longleftrightarrow \quad
e_k < 0,
\]
the probability
\[
\Pr(X_k < -e^*)
\]
can be interpreted as the probability that the follower vehicle crashes into
the leader vehicle, or more generally as a crash probability.

Figure~\ref{fig:tail_probs} shows the corresponding left-tail probabilities
for the two schedulers in Fig.~\ref{fig:iut_distribution_two_point_1_50}.
The shifted scheduler, despite having a substantially larger mean AoI and a
much lower communication rate, yields a much smaller probability of large
negative spacing errors over most relevant desired spacings. The only exception
occurs for very small desired spacings, below approximately \(7\) m. For
moderate and large desired spacings, however, the shifted scheduler provides a
significantly lower crash probability. This improvement is quantified in
Fig.~\ref{fig:crash_prob_percentage}, which plots the percentage decrease in
the crash probability relative to the original two-point scheduler. Around
\(e^*=30\) m, the shifted scheduler reduces the probability \(\Pr(X_k<-e^*)\) by nearly \(80\%\). Thus, even though the shifted scheduler
communicates roughly six times less frequently, it substantially reduces the
probability that the follower vehicle crashes into the leader vehicle over a
wide range of desired spacings. This provides an even stronger illustration that minimizing mean AoI alone is not sufficient for networked control systems.

This example also has the same channel interpretation discussed earlier. It
can be viewed as a channel that is usually fast, but with probability
\(2\%\) produces an effective update cycle of \(50\) time steps. The practical
US-101 simulation shows that transmitting as frequently as possible is not
necessarily best. In fact, in this example, scheduling almost every time step
can be more than twice as costly as scheduling at roughly one-tenth of that
rate. These results, for both i.i.d. sampling and correlated disturbances,
support two key conclusions. First, the equivalent optimization problems in
Section~\ref{sec:AoI_Correlated_Noise}
correctly capture the relevant scheduling tradeoffs. Second, minimizing mean AoI alone is not sufficient. The
entire AoI distribution must be accounted for through the optimization problems
introduced in Section~\ref{sec:AoI_iid_Noise} and~\ref{sec:AoI_Correlated_Noise}.

\begin{figure}[!t]
	\centering
	\includegraphics[width=0.6\linewidth]{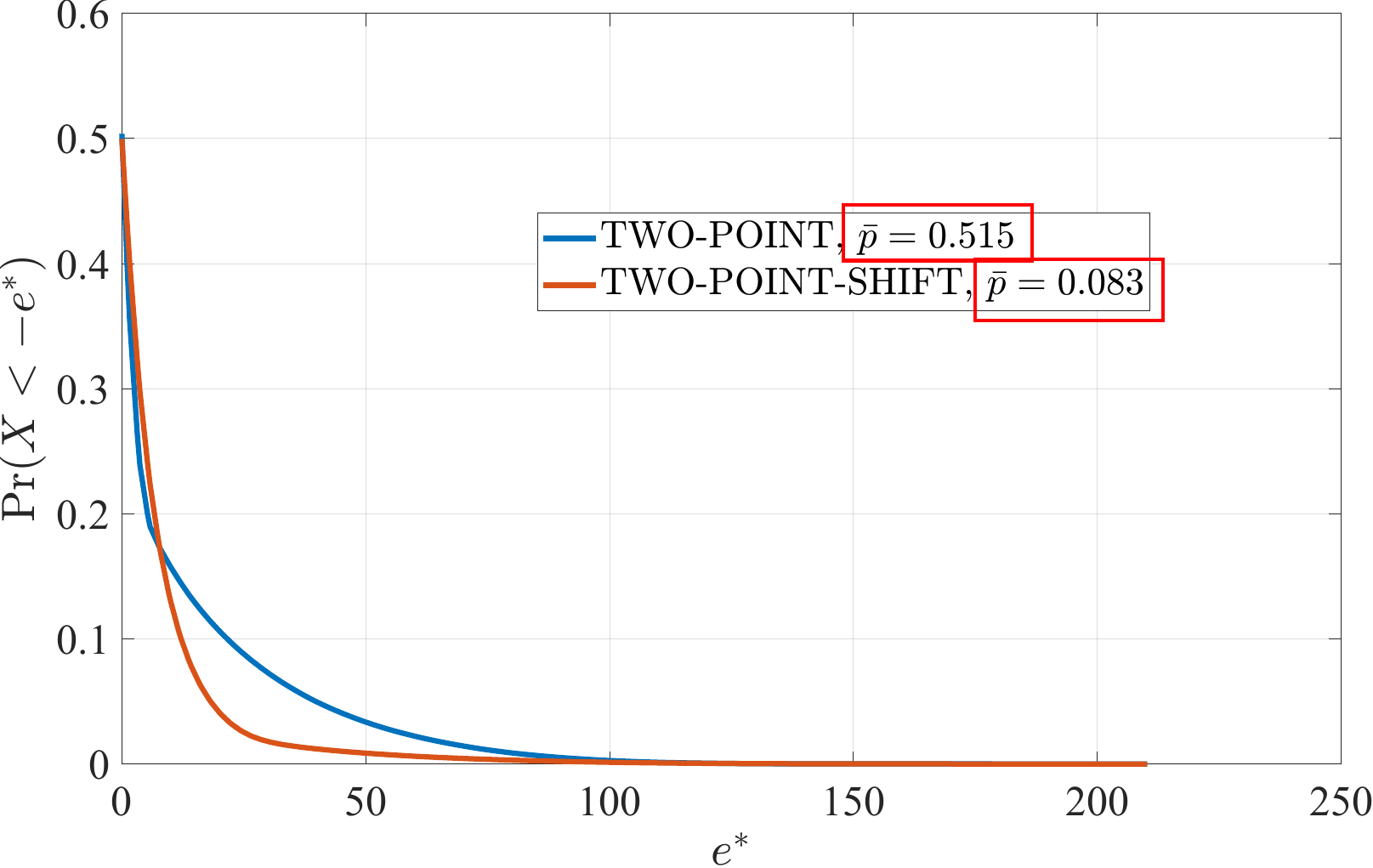}
	\caption{Crash-risk probability \(\Pr(X_k<-e^*)\) for the original and
		shifted two-point schedulers under correlated US-101 velocity
		disturbances. The scheduler parameters are \(\delta_s=1\),
		\(\delta_\ell=50\), \(p_\ell=0.02\), shift \(s=10\), and \(\tau=1\).}
	\label{fig:tail_probs}
\end{figure}

\begin{figure}[!t]
	\centering
	\includegraphics[width=0.6\linewidth]{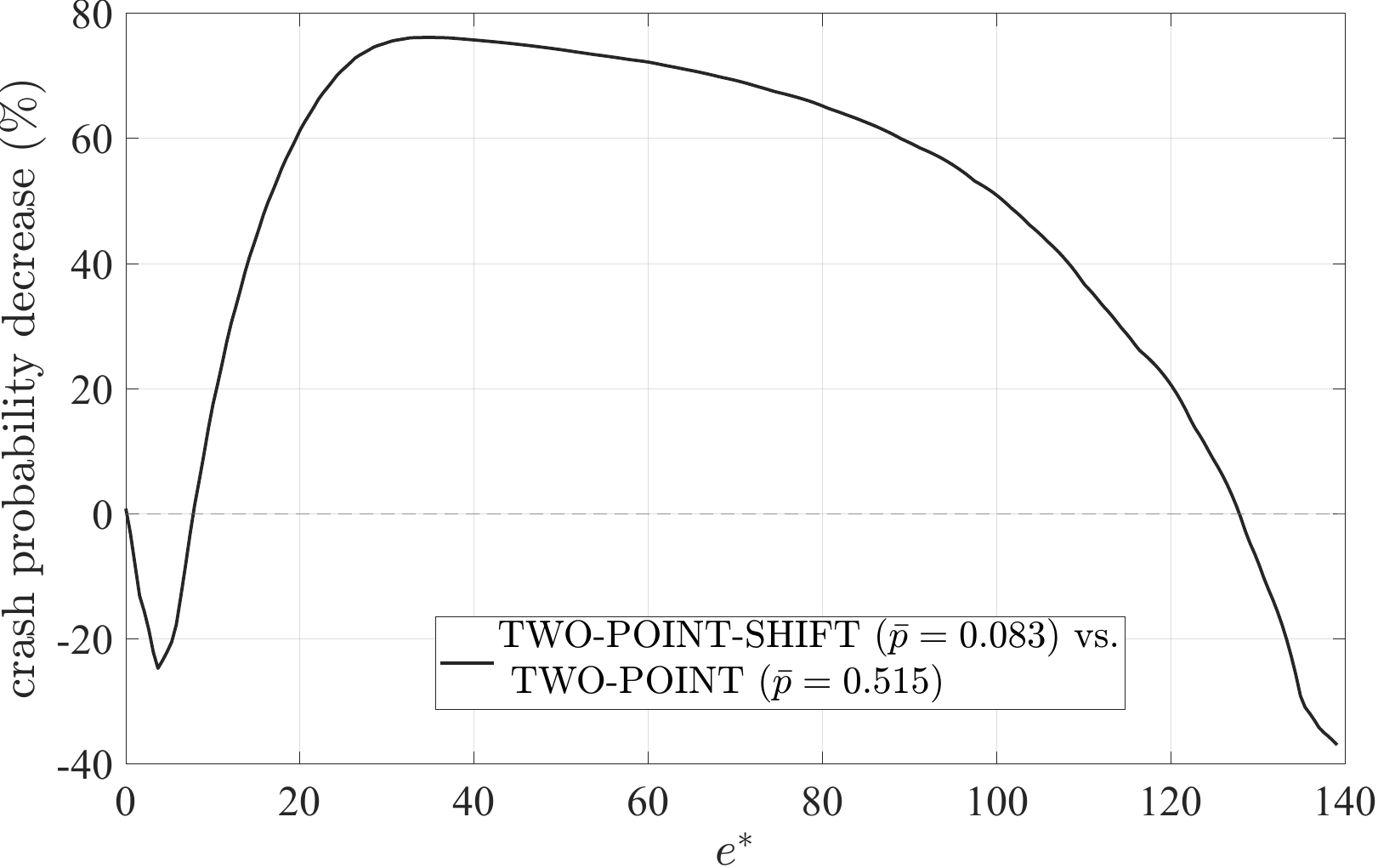}
	\caption{Percentage decrease in crash-risk probability,
		\(100(\Pr_A(X_k<-e^*)-\Pr_B(X_k<-e^*))/\Pr_A(X_k<-e^*)\), for the
		shifted two-point scheduler relative to the original scheduler. Positive
		values indicate a reduction in the probability that the follower vehicle
		crashes into the leader vehicle. The parameters are \(\delta_s=1\),
		\(\delta_\ell=50\), \(p_\ell=0.02\), shift \(s=10\), and \(\tau=1\).}
	\label{fig:crash_prob_percentage}
\end{figure}

In the theoretical analysis for the correlated-disturbance case, we considered
the impulsive controller for notational simplicity and to isolate the dominant
state-tracking contribution. Accordingly, most of the simulations above focus
on the small-\(r\) impulsive-controller regime. However, when we repeat the
same experiment using the ZOH controller and the optimal controller in
\eqref{eqn:optimal_controller}, we observe the same qualitative behavior.
Fig.~\ref{fig:running_avg_corr_two_point_optimal_controller} shows the
running-average LQR cost for the two-point schedulers in
Fig.~\ref{fig:iut_distribution_two_point_1_50} under the optimal controller.
Although the shifted distribution has a lower communication rate, it achieves
better tracking performance. A similar behavior is observed for the ZOH
controller in Fig.~\ref{fig:running_avg_corr_two_point_zoh_controller}, where
the performance gap is even larger. These results show that the phenomenon is
not an artifact of the impulsive-controller approximation used in the
theoretical analysis; it can also occur under other controller structures.

\begin{figure}[!t]
	\centering
	\includegraphics[width=0.65\linewidth]{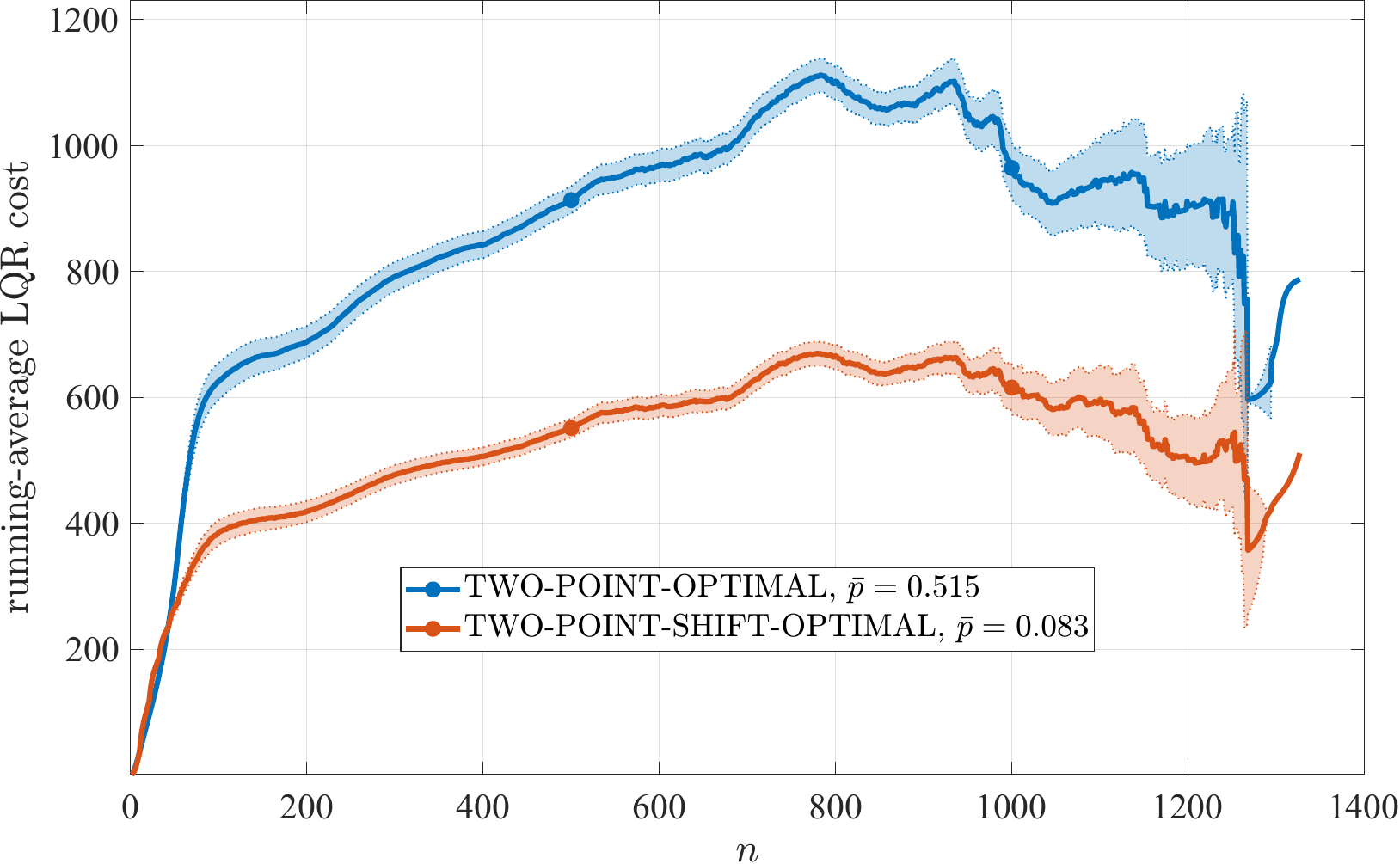}
	\caption{Running-average LQR cost for the original and shifted two-point
		schedulers under correlated US-101 velocity disturbances using the optimal
		feedback controller in~\eqref{eqn:optimal_controller}. The scheduling
		parameters are \(\delta_s=1\), \(\delta_\ell=50\), \(p_\ell=0.02\), shift
		\(s=10\), and \(\tau=1\). The original scheduler operates at achieved rate
		\(\bar p\approx0.515\), whereas the shifted scheduler operates at
		\(\bar p\approx0.083\). Despite its lower communication rate, the shifted
		scheduler achieves a smaller running-average LQR cost.}
	\label{fig:running_avg_corr_two_point_optimal_controller}
\end{figure}
\begin{figure}[!t]
    \centering
    \includegraphics[width=0.65\linewidth]{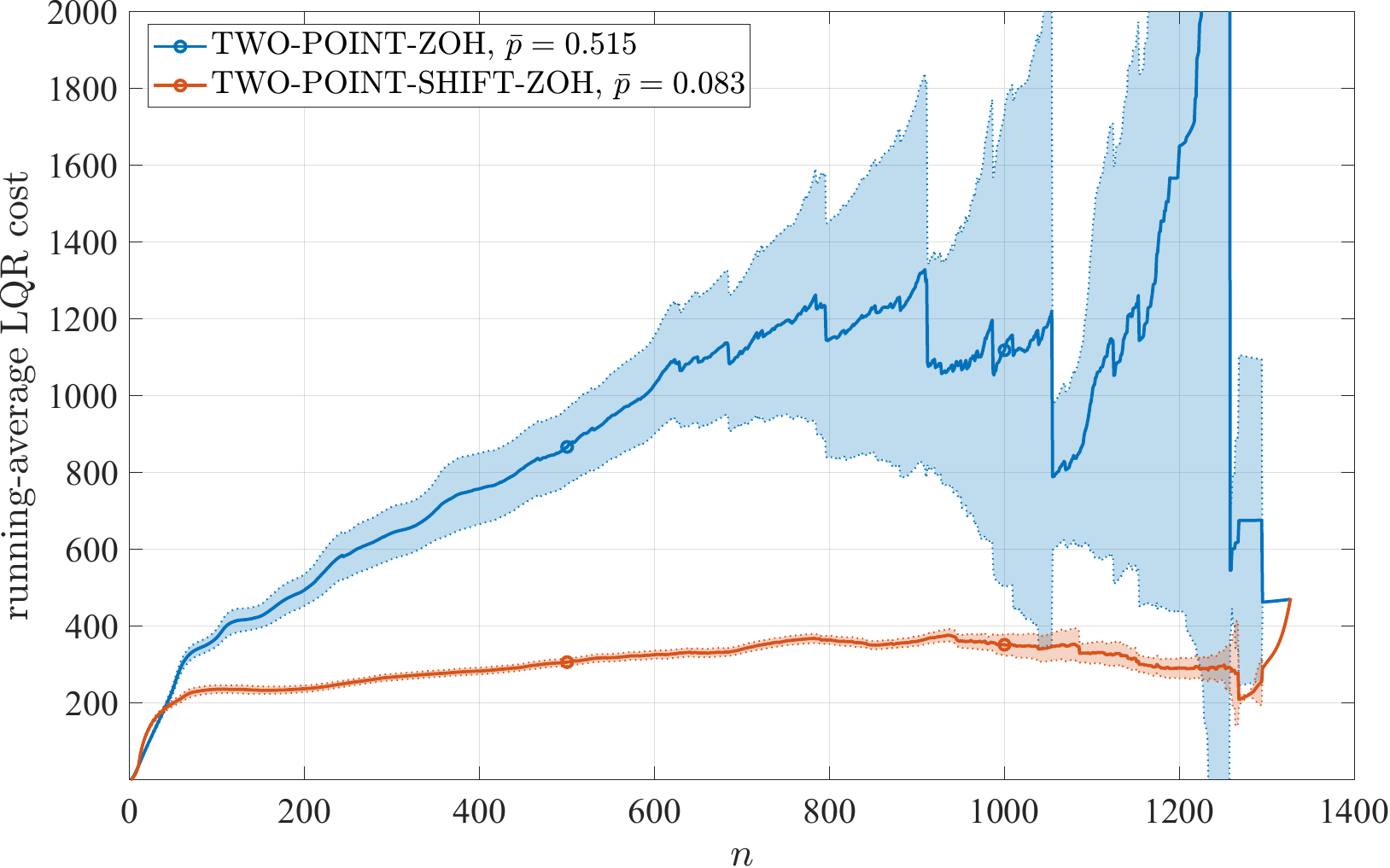}
    \caption{Running-average LQR cost for the original and shifted two-point
    schedulers under correlated US-101 velocity disturbances using the ZOH
    controller. The scheduling parameters are \(\delta_s=1\),
    \(\delta_\ell=50\), \(p_\ell=0.02\), shift \(s=10\), and \(\tau=1\). The
    original scheduler operates at achieved rate \(\bar p\approx0.515\),
    whereas the shifted scheduler operates at \(\bar p\approx0.083\). The
    lower-rate shifted scheduler again achieves a smaller running-average LQR
    cost, with a larger performance gap than in the optimal-controller case.}
    \label{fig:running_avg_corr_two_point_zoh_controller}
\end{figure}

Next, we evaluate how the scheduling policies perform under different mean-normalization methods and feedback delays. Rather than considering only isolated operating points at fixed \(p_{\mathrm{target}}\), we study the steady-state LQR cost as a function of the communication rate \(p_{\mathrm{target}}\). For each value of \(p_{\mathrm{target}}\), we compute the running-average LQR cost for each of the 3900 vehicle trajectories and then average the last 500 samples of these running averages. This produces one steady-state cost value for the corresponding communication rate, controller, scheduling policy, and normalization method. Repeating this procedure over a range of communication rates yields the steady-state LQR cost curves reported below. Throughout this analysis, the LQR weights are set to \(q=r=1\).

\begin{figure}[!t]
    \centering
    \includegraphics[width=0.65\linewidth]{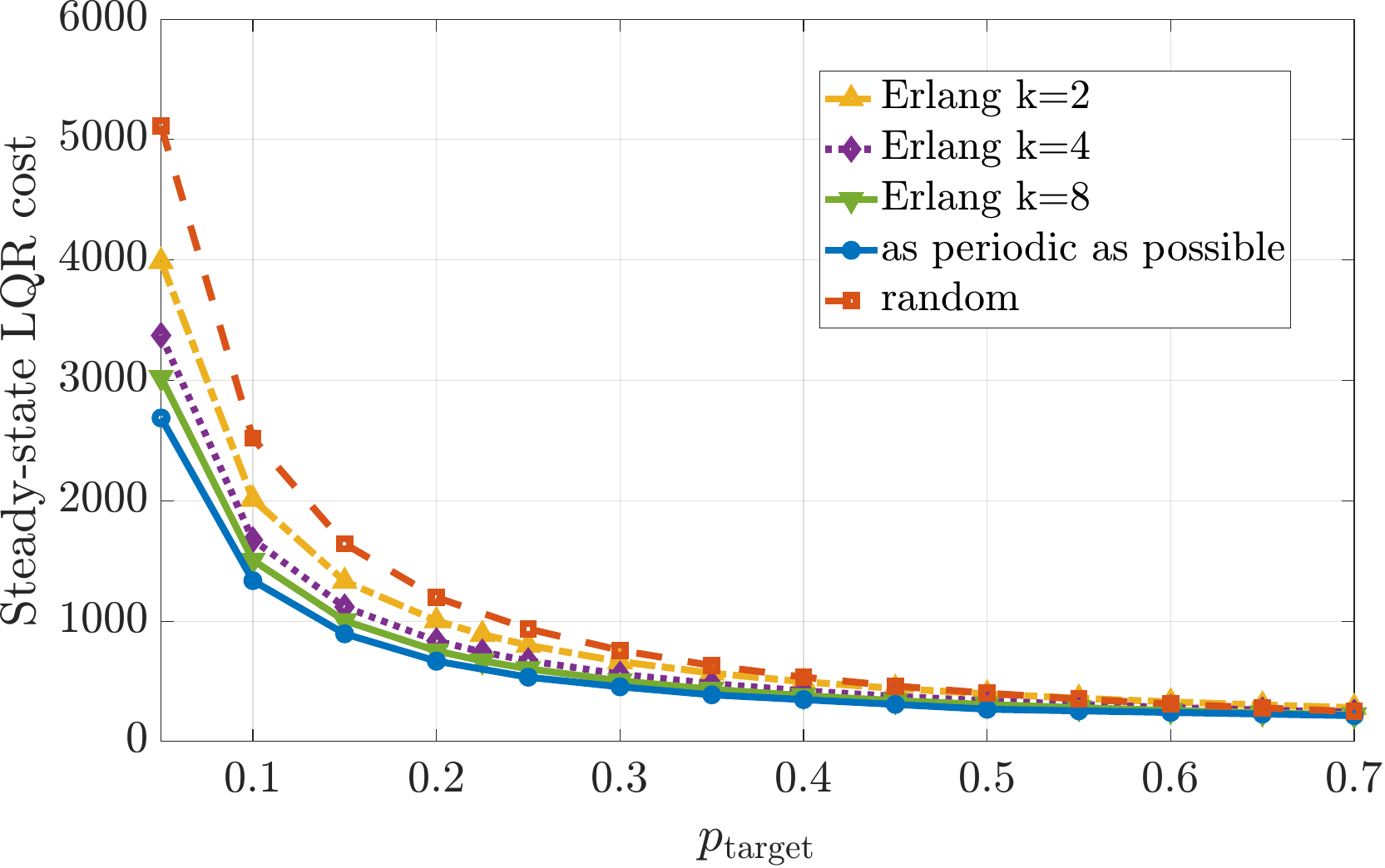}
    \caption{Steady-state LQR cost versus target communication rate $p_{\mathrm{target}}$ for the impulsive (IMP) controller under offline mean normalization with $\tau = 1$.}
    \label{fig:ss_lqr_imp_tau1}
\end{figure}

Figure \ref{fig:ss_lqr_imp_tau1} shows the steady-state LQR cost versus communication rate for the impulsive controller under all state-independent scheduling policies with a fixed delay of $\tau=1$. As expected, the periodic policy achieves the best tracking performance, while the random policy performs the worst. This observation is consistent with the theoretical results in Section \ref{sec:AoI_Correlated_Noise}, where we showed that the equivalent tracking problem depends on minimizing all higher-order moments of the AoI distribution. Since the periodic scheduler minimizes the variability of AoI, it achieves the lowest cost, whereas the random scheduler produces the highest cost.

\begin{figure}[!t]
    \centering
    \includegraphics[width=0.65\linewidth]{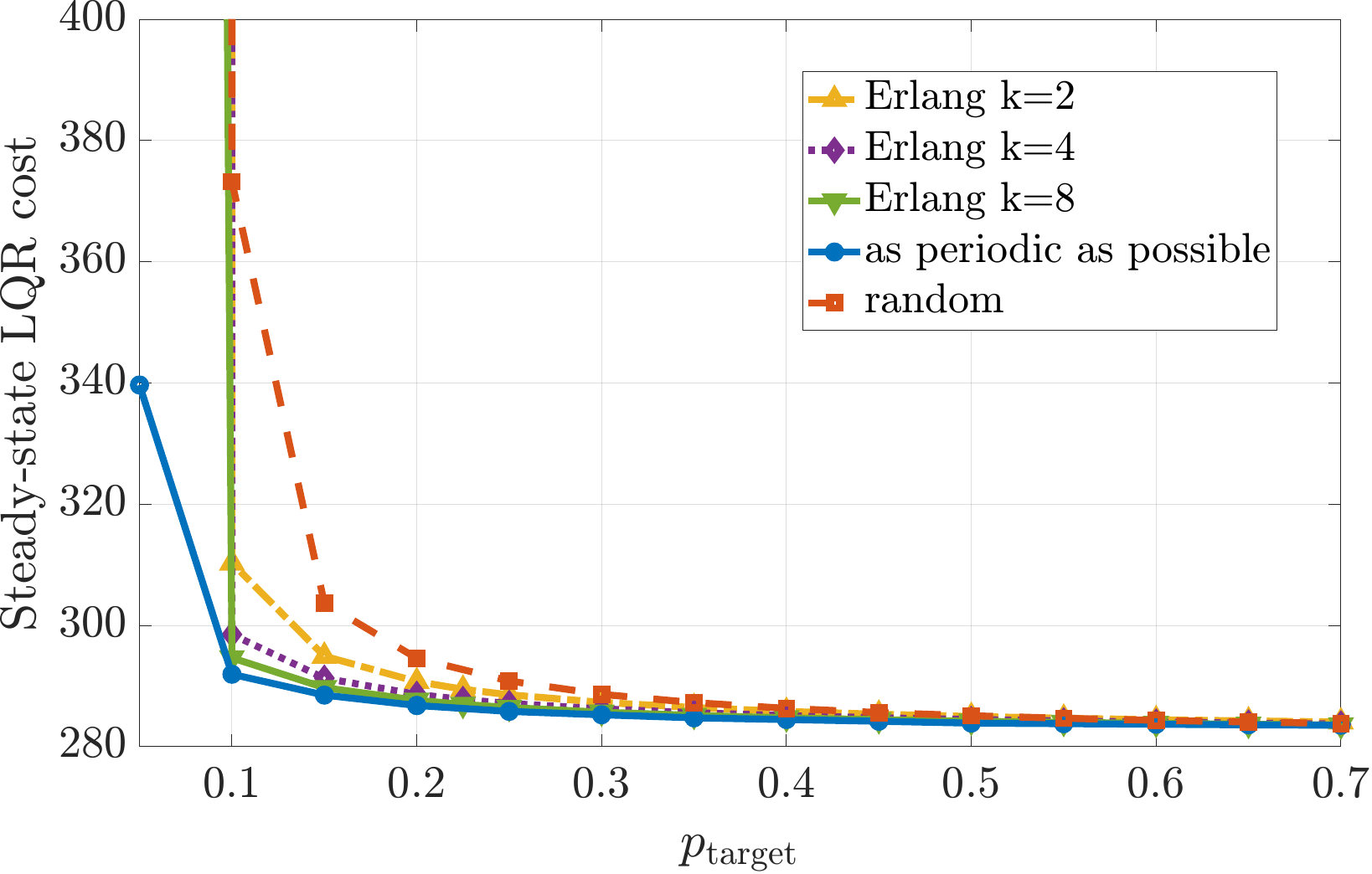}
    \caption{Steady-state LQR cost versus target communication rate $p_{\mathrm{target}}$ for the zero-order-hold (ZOH) controller under offline mean normalization with $\tau = 1$.}
    \label{fig:ss_lqr_zoh_tau1}
\end{figure}

\begin{figure}[!t]
	\centering
	\includegraphics[width=0.65\linewidth]{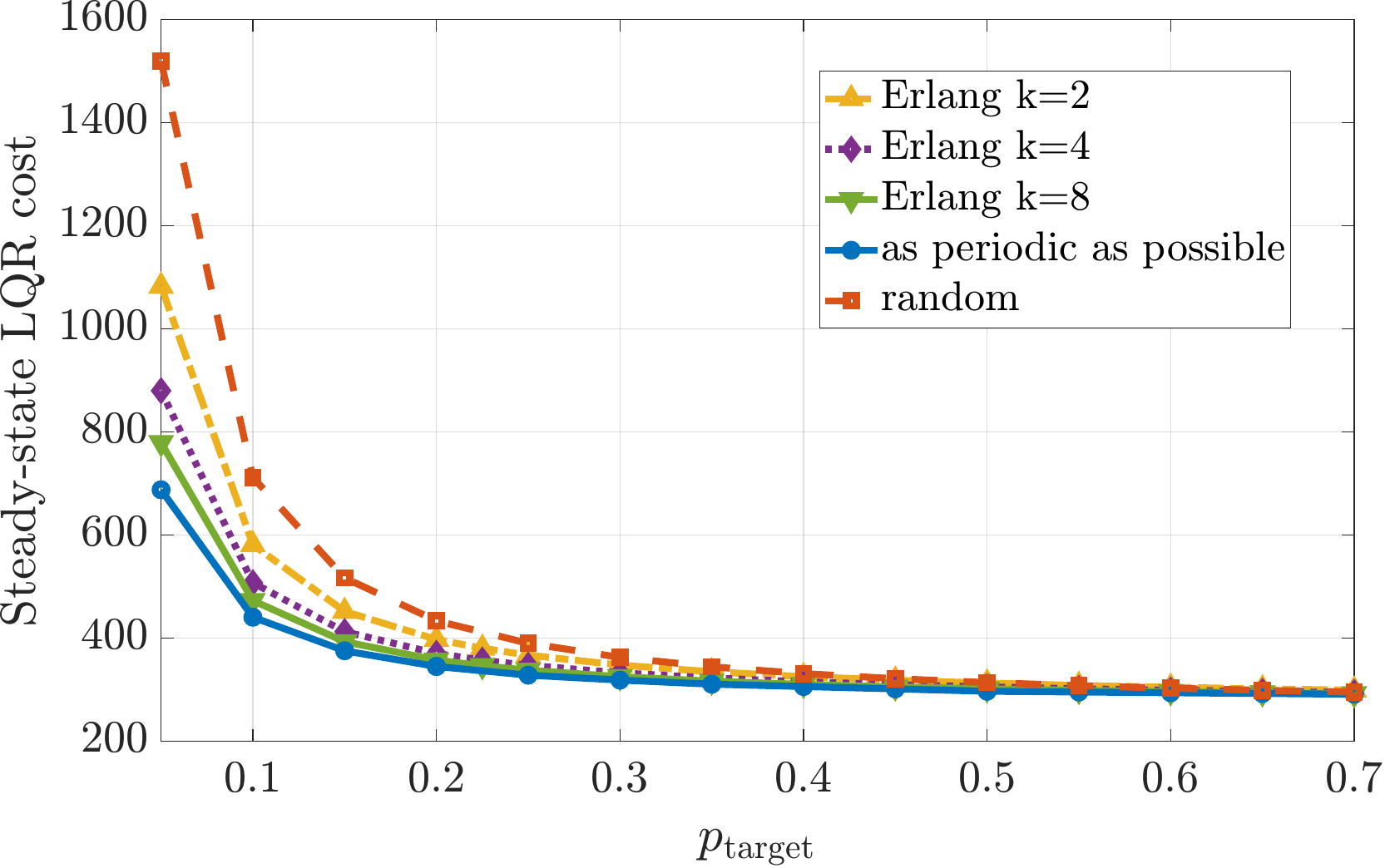}
	\caption{Steady-state LQR cost versus target communication rate $p_{\mathrm{target}}$ for the symmetric controller under offline windowing with $\tau = 1$.}
	\label{fig:ss_lqr_lqr_tau1}
\end{figure}

Figure~\ref{fig:ss_lqr_zoh_tau1} shows the corresponding results for the ZOH controller. Compared with the impulsive-controller results in Fig.~\ref{fig:ss_lqr_imp_tau1}, the ZOH controller generally achieves lower LQR cost. However, beyond a certain communication rate, the closed-loop system becomes unstable.

Figure~\ref{fig:ss_lqr_lqr_tau1} shows the steady-state LQR cost for the symmetric controller. Although its tracking performance is slightly worse than that of the ZOH controller, the symmetric controller remains stable over the entire range of communication rates considered, unlike ZOH. Despite differences in absolute cost across controller types, the ordering of the scheduling policies remains consistent and supports the theoretical analysis in Section~\ref{sec:AoI_Correlated_Noise}: the periodic scheduler consistently achieves the best performance, whereas the random scheduler performs the worst.

Since the ZOH controller becomes unstable for certain communication rates and the results obtained with different controller types are otherwise qualitatively similar, differing primarily in the overall scale of the cost, the remaining plots present only the results for the symmetric controller. In particular, the relative ordering and performance trends of the scheduling policies remain nearly identical across all controller types.

Figure~\ref{fig:lqr_window_comparison} presents the steady-state tracking performance of the symmetric controller under different online mean-normalization methods. In the online normalization framework, the mean estimate in \eqref{eqn:mean_normalization} is computed causally using a windowed estimator over past disturbance observations. Specifically, at time \(n\), the estimated mean is given by
\[
\hat{\mu}_w[n]
=
\frac{\sum_{k=0}^{L-1} \alpha[k]\, w[n-k]}
{\sum_{k=0}^{L-1} \alpha[k]},
\]
where \(\alpha[k]\) denotes the window coefficient sequence and \(L\) is the window length. We consider four windowing methods: rectangular, Hann, Hamming, and exponential.
For the rectangular window,
\[
\alpha[k] = 1,
\qquad 0 \leq k \leq L-1.
\]
For the Hann window,
\[
\alpha[k]
=
\frac{1}{2}
\left(
1 - \cos\left(\frac{2\pi k}{L-1}\right)
\right),
\qquad 0 \leq k \leq L-1.
\]
For the Hamming window,
\[
\alpha[k]
=
0.54 - 0.46 \cos\left(\frac{2\pi k}{L-1}\right),
\qquad 0 \leq k \leq L-1.
\]
Finally, for the exponential window, the weights are generated according to
\begin{equation}
    \alpha[k] = \zeta^{L-1-k},
    \qquad 0 \leq k \leq L-1,
    \label{eqn:exp_window}
\end{equation}
where \(\zeta \in (0,1)\) is the exponential decay parameter. In our simulations, we use \(\zeta=0.9\), and for all methods the window length is fixed to \(L=80\) samples.
\begin{figure}[!ht]
    \centering
    
    \begin{subfigure}{0.48\linewidth}
        \centering
        \includegraphics[width=0.9\linewidth]{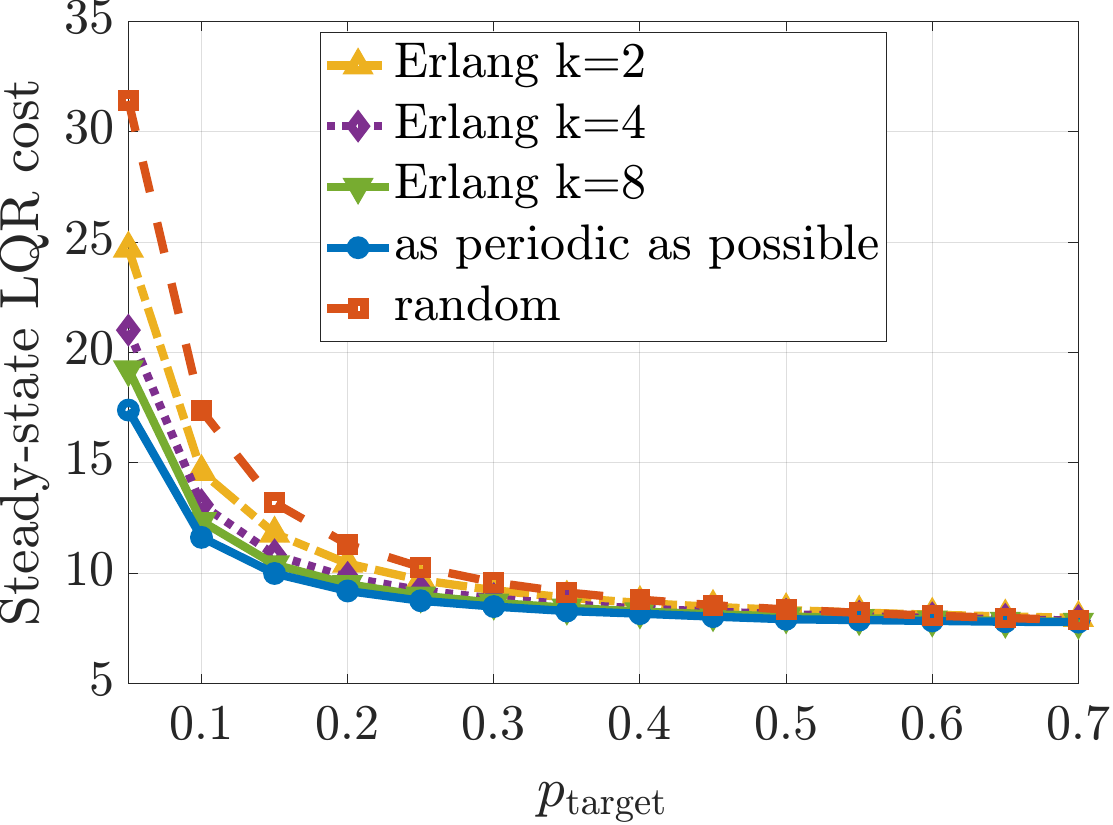}
        \caption{Exponential windowing with $\zeta=0.9$ in \eqref{eqn:exp_window}.}
        \label{fig:lqr_exp}
    \end{subfigure}
    \hfill
    \begin{subfigure}{0.48\linewidth}
        \centering
        \includegraphics[width=0.9\linewidth]{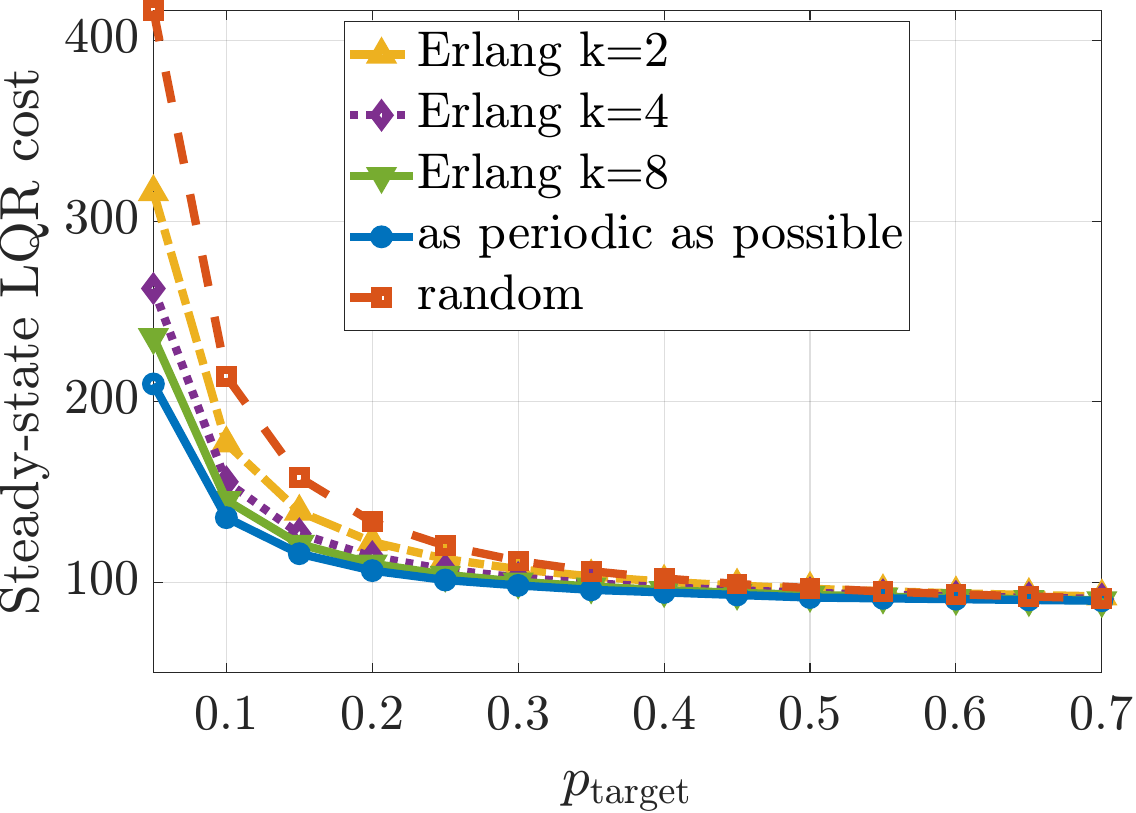}
        \caption{Hann windowing.}
        \label{fig:lqr_hann}
    \end{subfigure}

    \vspace{0.3cm}

    \begin{subfigure}{0.48\linewidth}
        \centering
        \includegraphics[width=0.9\linewidth]{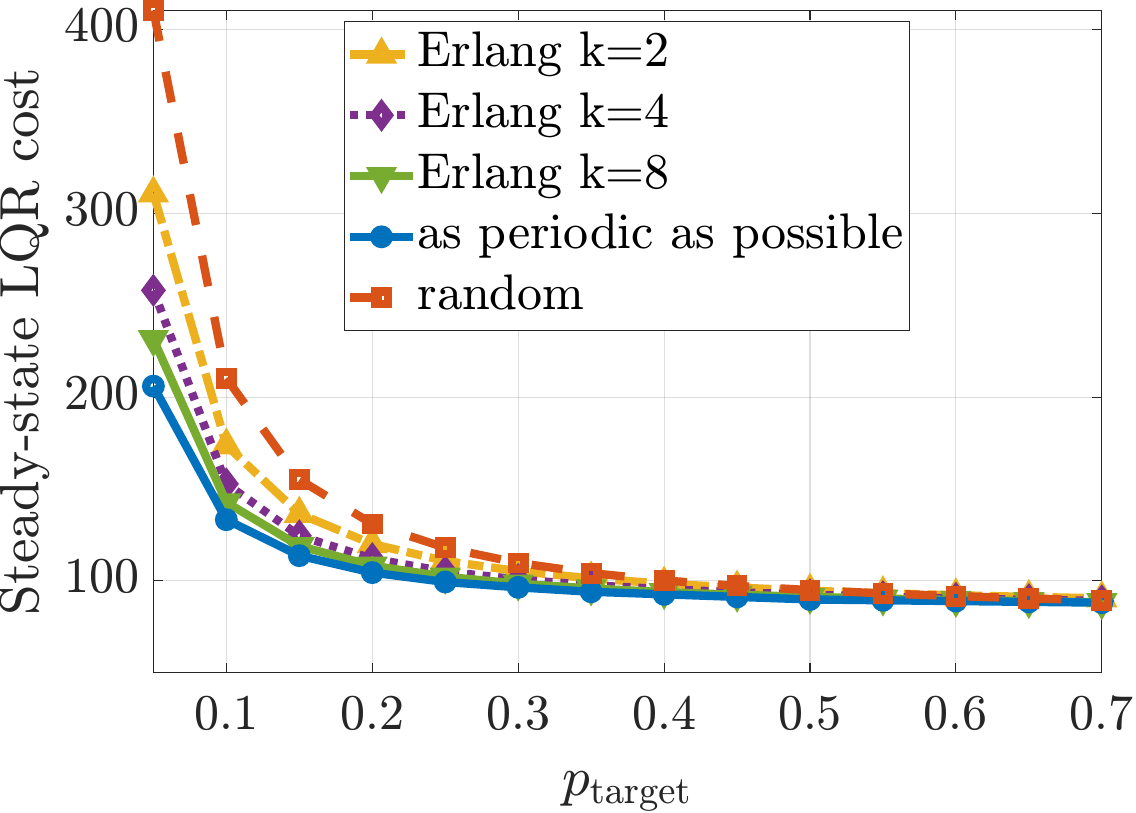}
        \caption{Hamming windowing.}
        \label{fig:lqr_hamming}
    \end{subfigure}
    \hfill
    \begin{subfigure}{0.48\linewidth}
        \centering
        \includegraphics[width=0.9\linewidth]{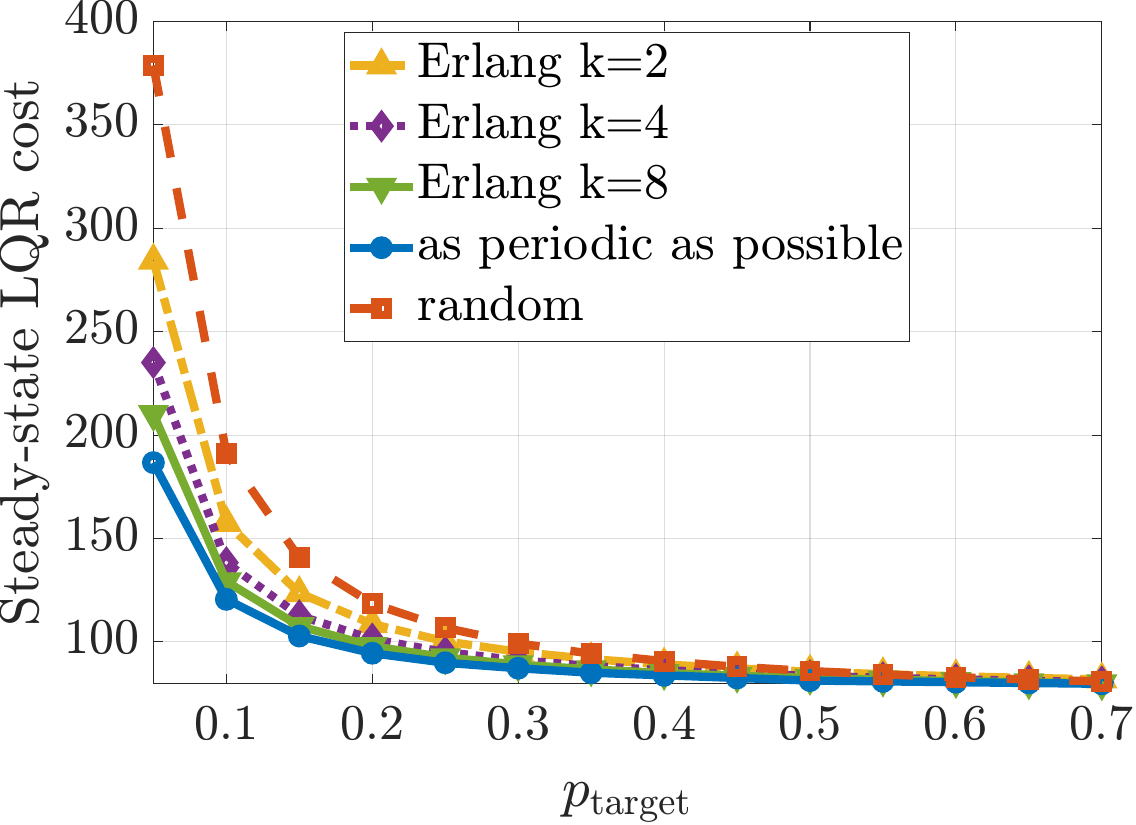}
        \caption{Rectangular windowing.}
        \label{fig:lqr_rect}
    \end{subfigure}

    \caption{Steady-state LQR cost versus target communication rate under different online windowing methods for mean normalization with $\tau = 1$. The window length is fixed to 80 for all methods.}
    \label{fig:lqr_window_comparison}
\end{figure}

As can be seen, the best tracking performance is achieved with exponential windowing. In contrast, the remaining windowing methods produce nearly identical control performance. Moreover, all windowing methods yield qualitatively similar results in the sense that the ordering of the scheduling policies remains unchanged, and the differences in steady-state LQR cost between scheduling policies are also highly similar. Therefore, the choice of windowing method does not significantly alter the overall characteristics of the problem.

To investigate the effect of delay on the control performance, Fig. \ref{fig:lqr_tau5_comparison} presents the results for offline mean normalization and online mean normalization with exponential windowing under a fixed delay of $\tau=5$. Comparing these results with the $\tau=1$ case shown in Fig. \ref{fig:ss_lqr_lqr_tau1} and Fig. \ref{fig:lqr_exp}, we observe that the overall behavior of the curves remains largely unchanged for both offline and online mean normalization methods. The primary difference is a moderate increase in the steady-state LQR cost as the fixed delay increases.

\begin{figure}[!ht]
    \centering
    
    \begin{subfigure}[t]{0.46\linewidth}
        \centering
        \includegraphics[height=0.22\textheight,keepaspectratio]{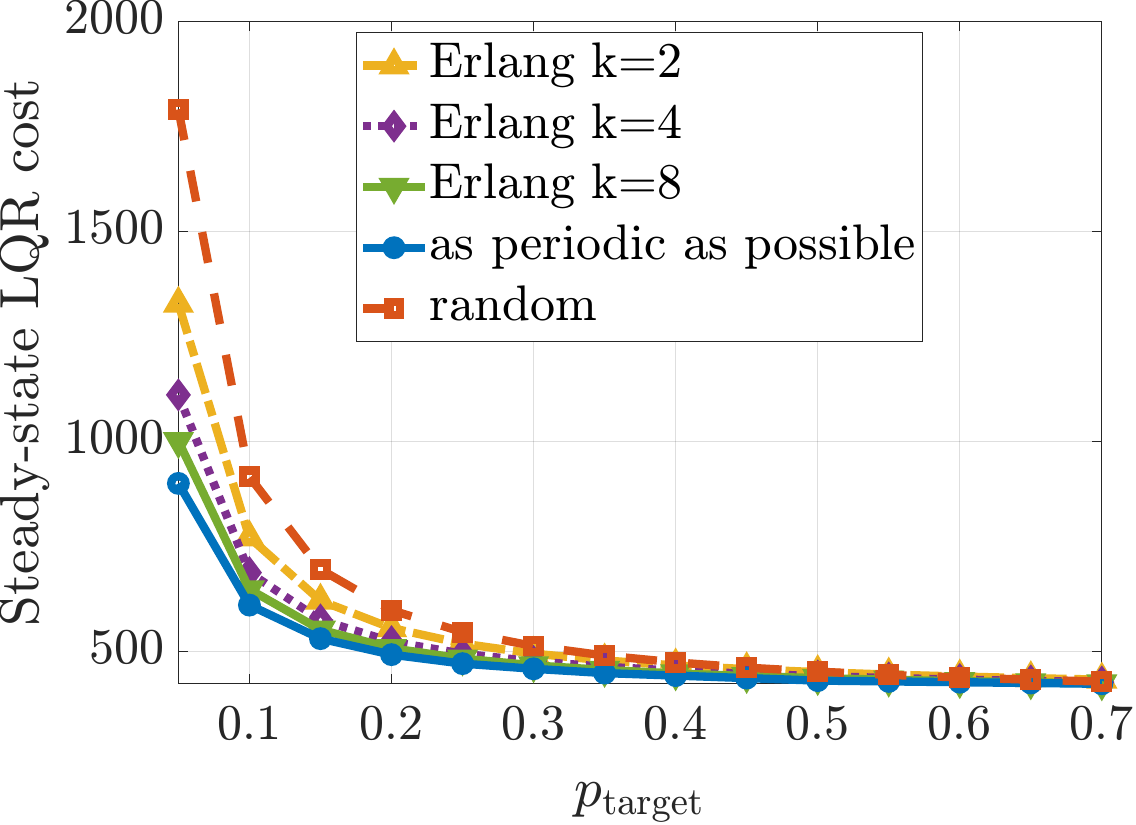}
        \caption{Offline mean normalization.}
        \label{fig:lqr_offline_tau5}
    \end{subfigure}
    \hfill
    \begin{subfigure}[t]{0.46\linewidth}
        \centering
        \includegraphics[height=0.22\textheight,keepaspectratio]{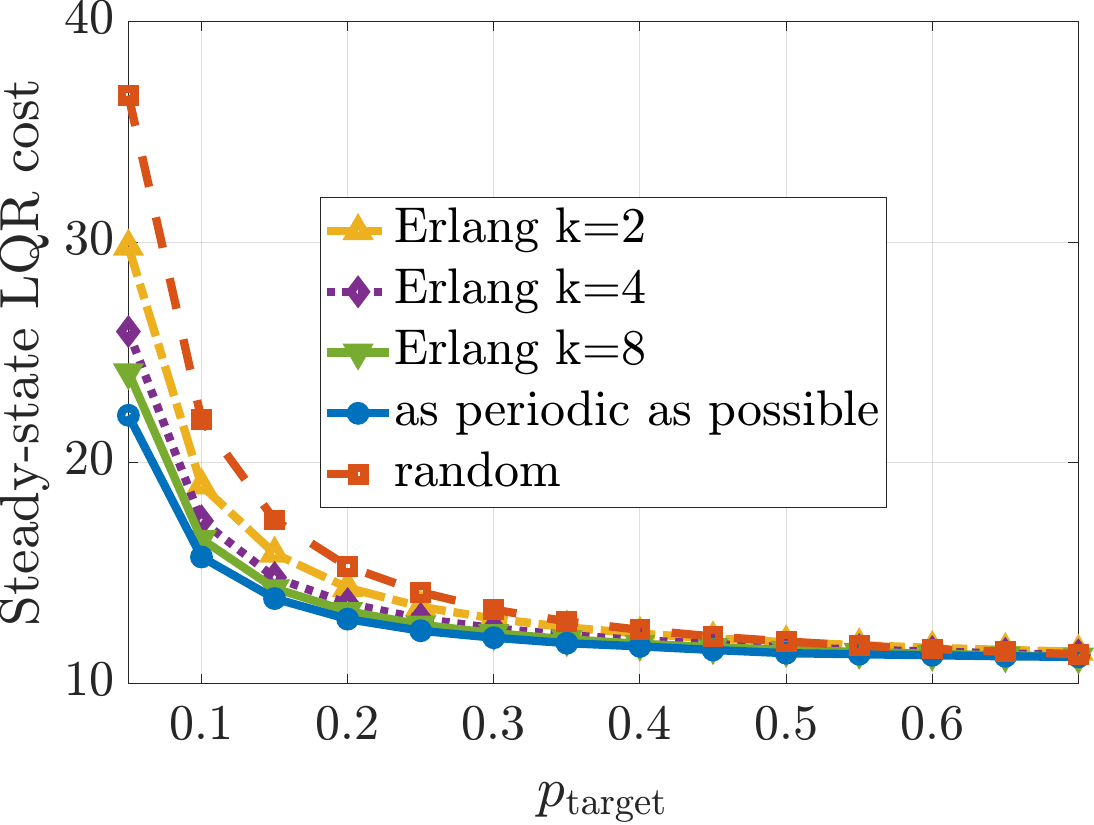}
        \caption{Online mean normalization with exponential windowing.}
        \label{fig:lqr_online_tau5}
    \end{subfigure}

    \caption{Steady-state LQR cost versus target communication rate for $\tau = 5$.}
    \label{fig:lqr_tau5_comparison}
\end{figure}

\begin{figure}[!ht]
    \centering
    \includegraphics[width=0.65\linewidth]{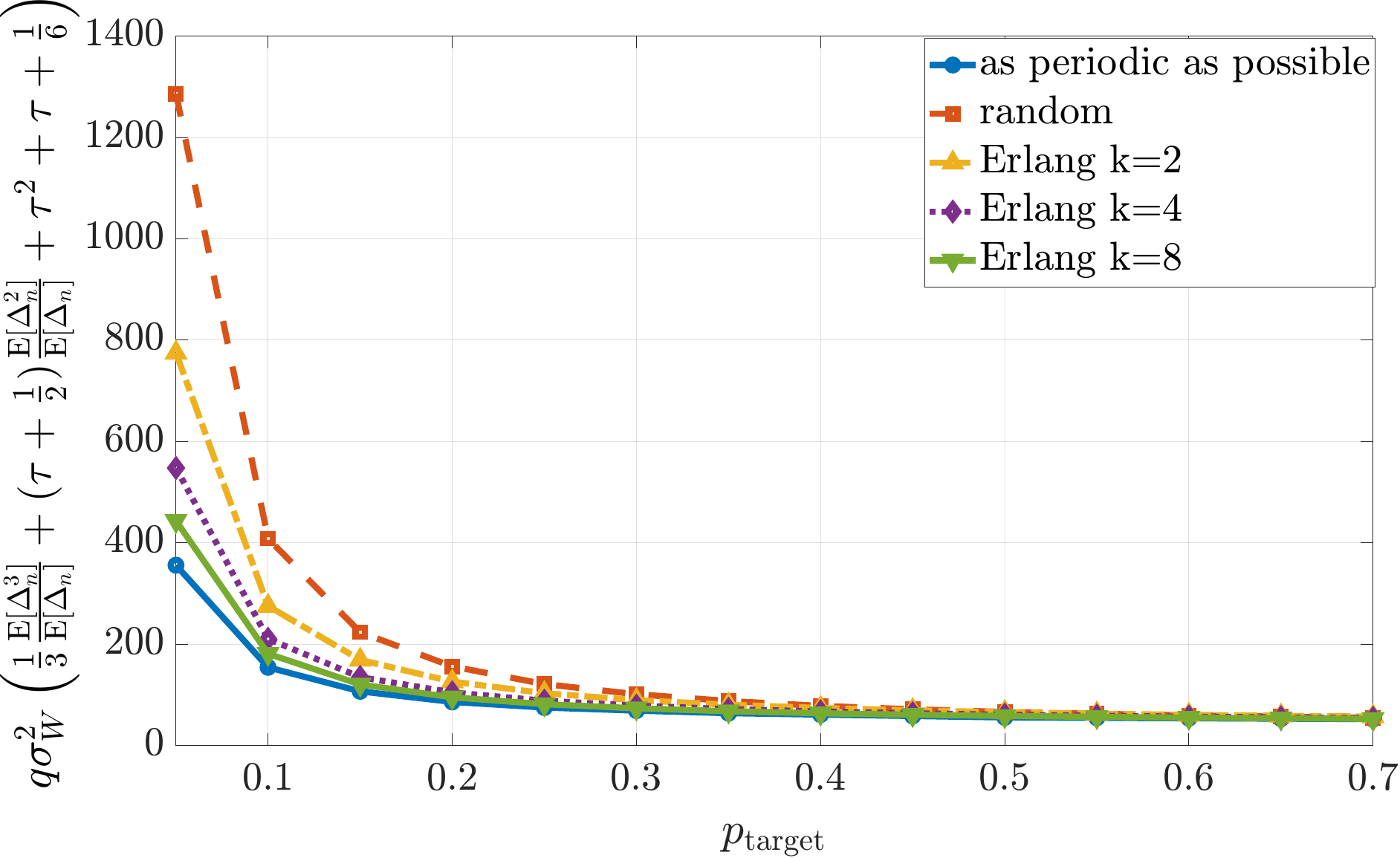}
    \caption{Theoretical steady-state cost expression versus target communication rate based on the LQR equivalent problem $P_{8}$ as a function of the AoI distribution for $\tau = 5$.}
    \label{fig:theoretical_iut_cost}
\end{figure}

We now turn to the analysis of the equivalent problem $P_{8}$, where we assumed $r \to 0$ in the LQR cost together with the impulsive controller. In addition, Fig. \ref{fig:beta_mse} showed that the disturbance process extracted from the US-101 dataset is highly correlated, corresponding to the regime $\beta \to 0$. Therefore, in this section, we compare the equivalent cost expression of problem $P_{8}$, written as a function of the moments of the AoI distribution, with the practical car-following simulation, using the impulsive controller with $q=1$ and $r=0.01$.

\begin{figure}[!t]
    \centering
    \includegraphics[width=0.65\linewidth]{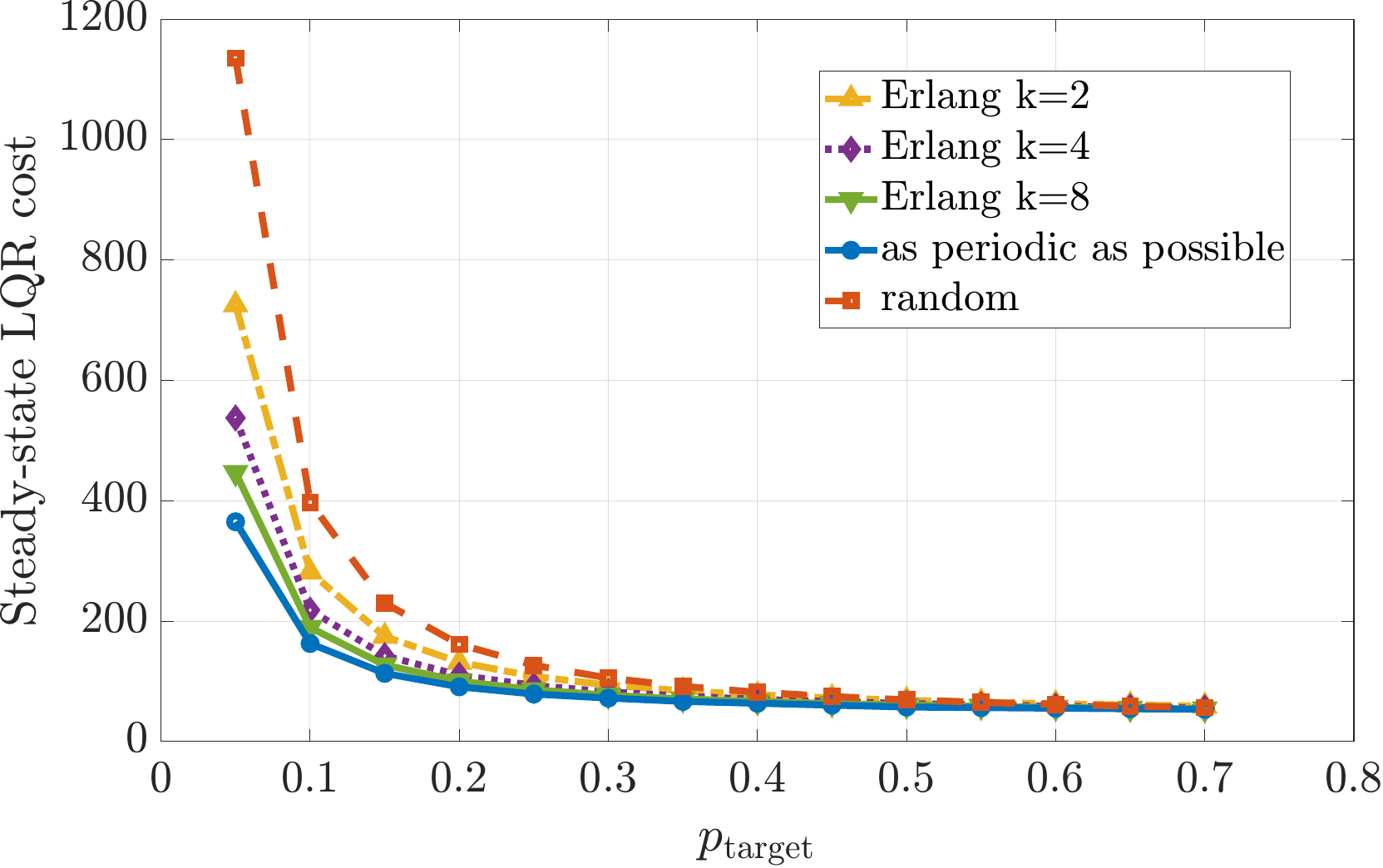}
    \caption{Steady-state LQR cost versus target communication rate $p_{\mathrm{target}}$ for the impulsive (IMP) controller under offline mean normalization with $\tau = 5$ and $r = 0.01$.}
    \label{fig:ss_lqr_imp_tau5_r001}
\end{figure}

\begin{figure}[!t]
    \centering
    \includegraphics[width=0.65\linewidth]{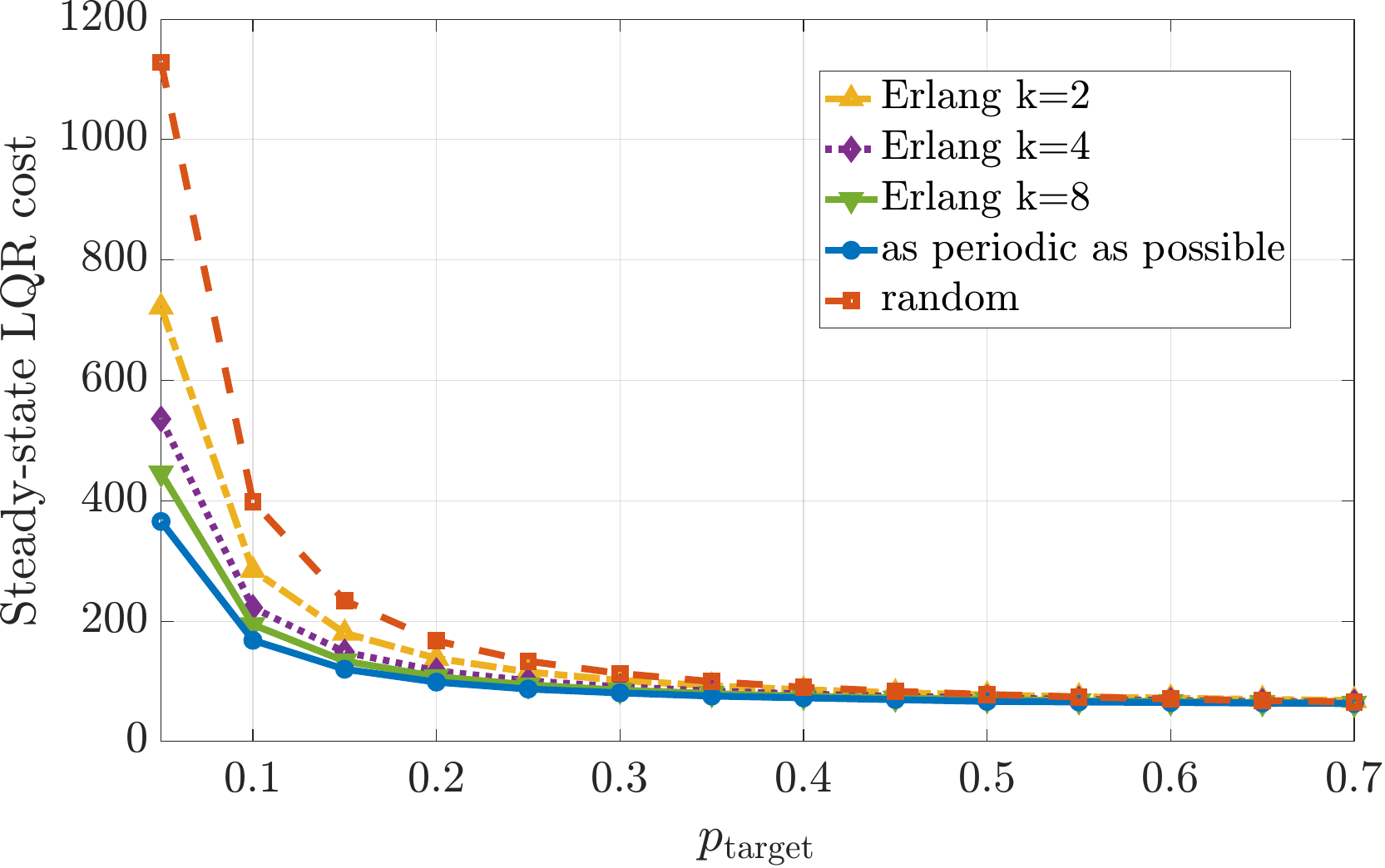}
    \caption{Steady-state LQR cost versus target communication rate $p_{\mathrm{target}}$ for the symmetric controller under offline mean normalization with $\tau = 5$ and $r = 0.01$.}
    \label{fig:ss_lqr_lqr_tau5_r001}
\end{figure}

\begin{figure}[!t]
	\centering
	\includegraphics[width=0.65\linewidth]{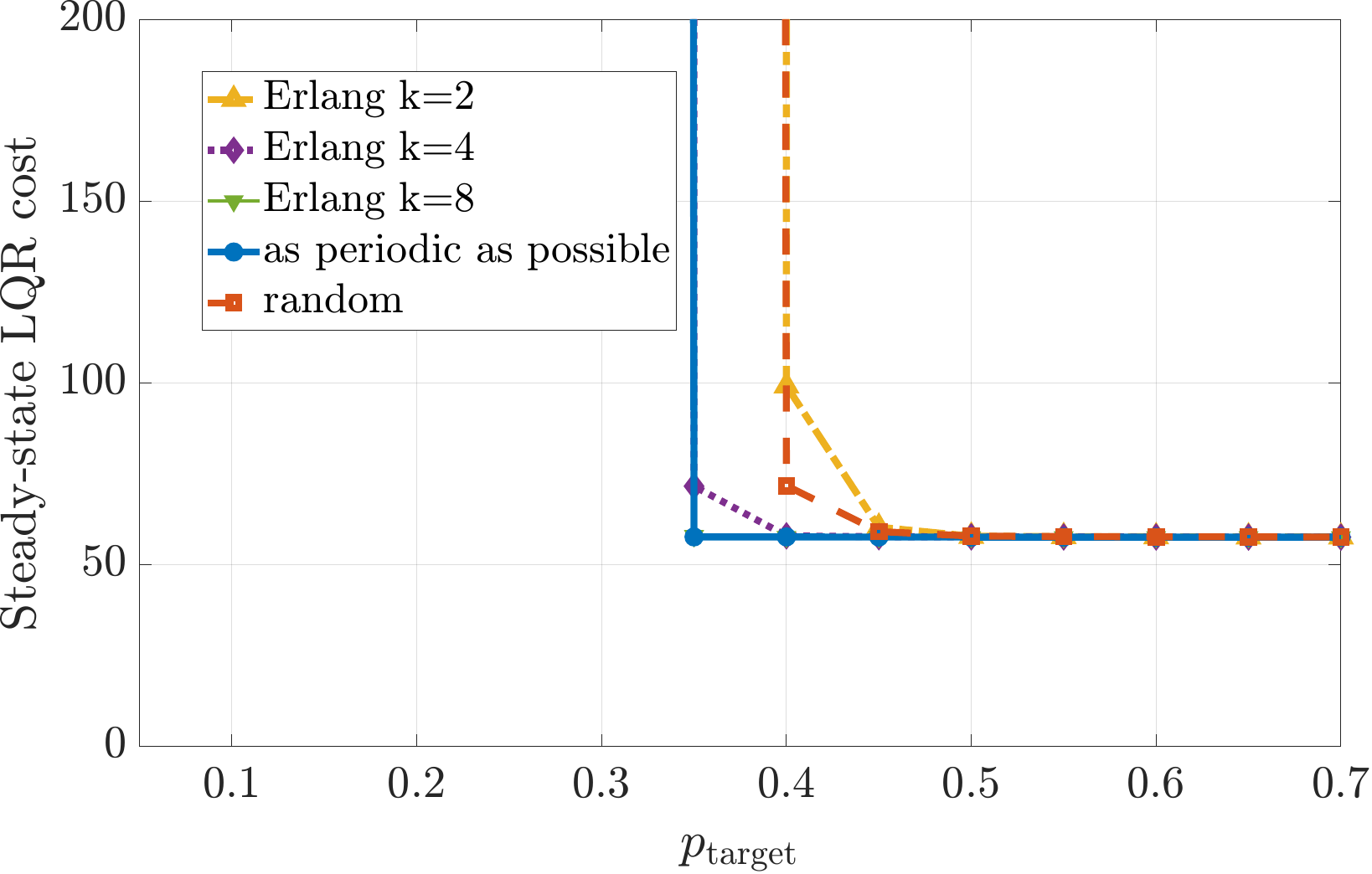}
	\caption{Steady-state LQR cost versus target communication rate $p_{\mathrm{target}}$ for the zero-order-hold (ZOH) controller under offline scheduling with $\tau = 5$ and $r = 0.01$.}
	\label{fig:ss_lqr_zoh_tau5_r001}
\end{figure}
First, Fig. \ref{fig:theoretical_iut_cost} illustrates the behavior of the equivalent cost in problem $P_{8}$ for varying communication rates, different AoI distributions, and a fixed delay of $\tau=5$. Figure \ref{fig:ss_lqr_imp_tau5_r001}, on the other hand, shows the steady-state LQR cost obtained from the practical simulation using the impulsive controller with $r=0.01$. The similarity between these two figures is immediately apparent and closely matches the theoretical predictions. Before comparing these figures in detail, we first discuss how the steady-state LQR cost changes under different controller structures. Figure \ref{fig:ss_lqr_lqr_tau5_r001} shows the steady-state LQR cost versus communication rate under offline mean normalization. Comparing this figure with Fig. \ref{fig:lqr_offline_tau5}, we observe that the costs increase by approximately $50\%$ for all AoI distributions, while the overall characteristics and relative ordering of the curves remain unchanged.

Figure \ref{fig:ss_lqr_zoh_tau5_r001} shows the steady-state LQR cost for the ZOH controller with $r=0.01$ and a fixed delay of $\tau=5$. Similar to the previous observations, the ZOH controller again becomes unstable beyond a certain communication rate. However, in this case, the transition to instability is significantly more abrupt. This behavior indicates that as the delay increases and the control penalty decreases, the stability margin of the system rapidly vanishes.

\begin{figure}[!t]
    \centering
    \includegraphics[width=0.65\linewidth]{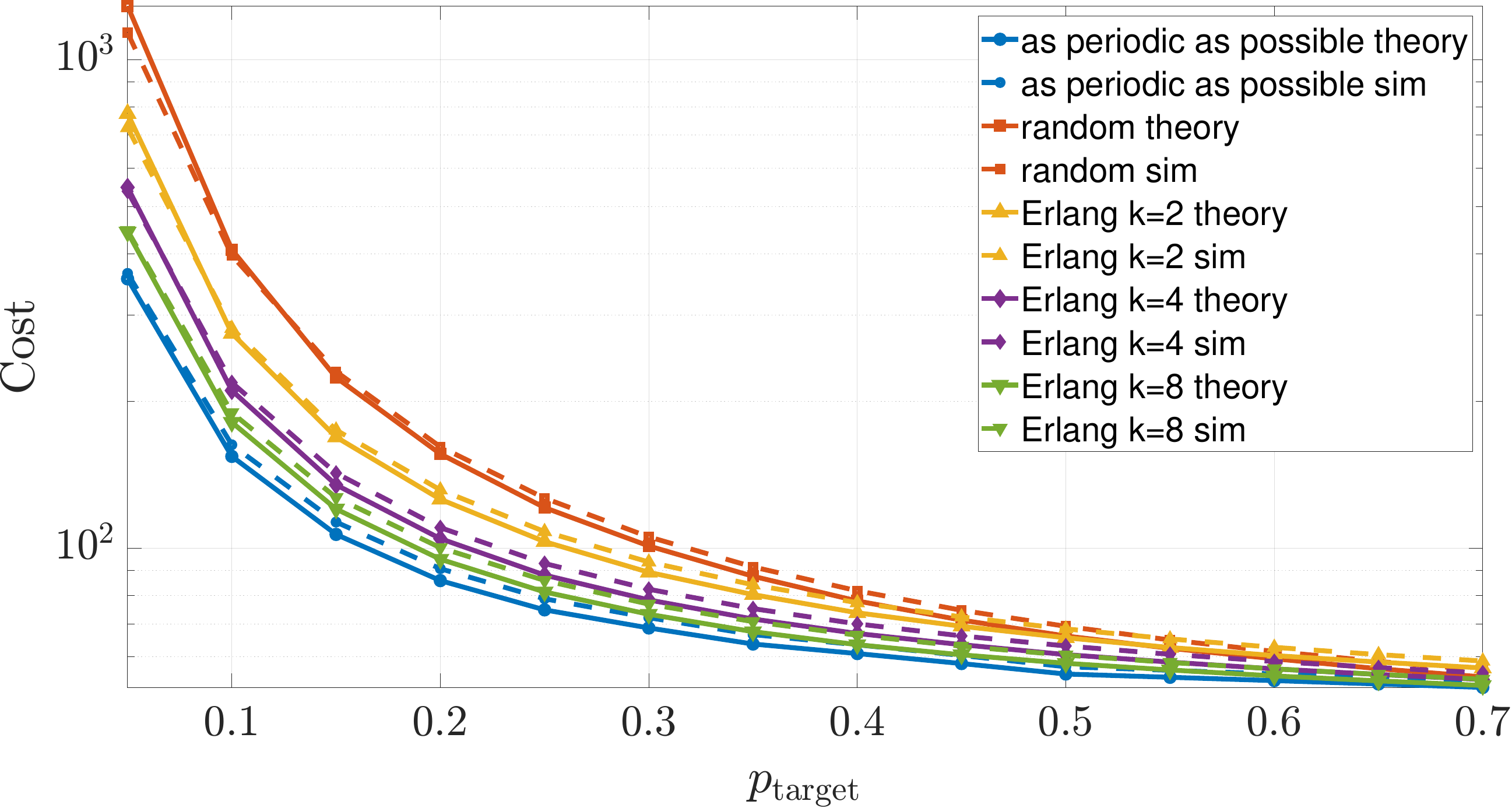}
    \caption{Comparison between the theoretical and simulated steady-state LQR costs for the impulsive (IMP) controller under offline scheduling with $\tau = 5$.}
    \label{fig:theory_vs_sim_imp}
\end{figure}

\begin{figure}[!t]
	\centering
	\includegraphics[width=0.65\linewidth]{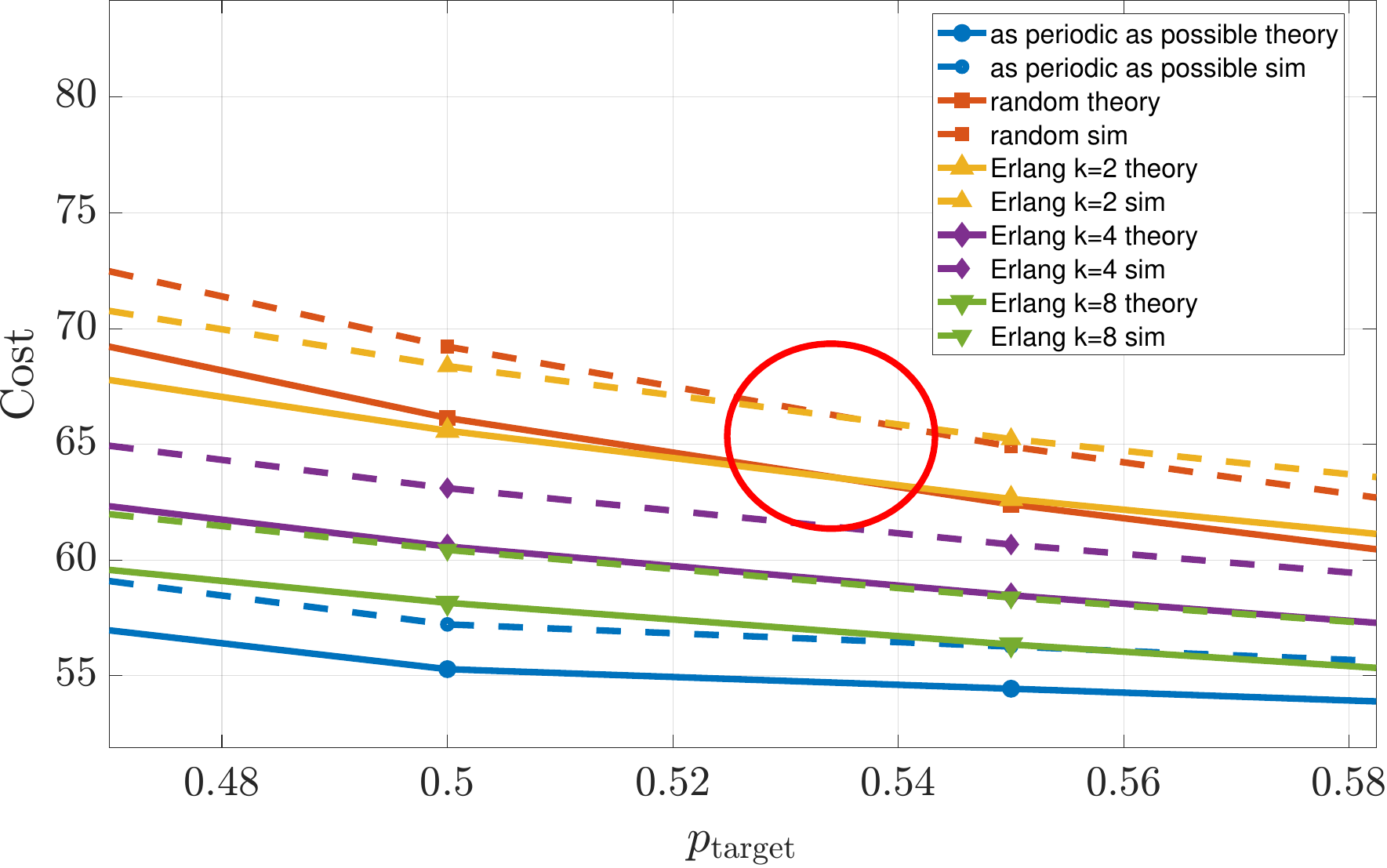}
	\caption{Zoomed comparison of the theoretical steady-state cost expression obtained from the equivalent problem $P_{8}$ and the practical steady-state LQR cost obtained from the car-following simulation under the impulsive controller with offline mean normalization and $\tau=5$. The figure highlights the crossover point around $p \approx 0.53$, where the random scheduling policy begins to outperform the Erlang-$2$ distribution in both the theoretical analysis and the practical simulation.}
	\label{fig:theory_vs_sim_zoomed_0_5}
\end{figure}

We now arrive at the main result of this section, where we compare the steady-state LQR cost obtained from the practical car-following simulation against the equivalent cost expression derived in \eqref{eqn:eqv_cost_corr_noise_a_1_beta_0}. The comparison is shown in Fig. \ref{fig:theory_vs_sim_imp}. To facilitate a more detailed comparison, the $y$-axis is plotted on a logarithmic scale. As can be seen, the theoretical curves (solid lines) closely follow the simulation results (dotted lines) across all communication rates and AoI distributions considered. This agreement strongly validates the theoretical analysis. We further investigate this behavior by zooming into the region around $p=0.5$, as shown in Fig. \ref{fig:theory_vs_sim_zoomed_0_5}. Interestingly, although the random scheduling policy yields the worst performance for communication rates below approximately $0.53$, it outperforms the Erlang-$2$ distribution beyond this point. The same phenomenon appears in both the AoI-equivalent problem ($P_{8}$) and the practical simulation at nearly identical communication rates. Only a small offset between the theoretical and simulation costs is observed. This discrepancy is expected since the theoretical analysis assumes both $r \to 0$ and $\beta \to 0$ as approximations. Overall, these results demonstrate that the approximations used in the analysis are highly accurate and that the resulting AoI-equivalent problem provides an excellent characterization of the practical system.

The agreement between theory and simulation also reveals a more subtle consequence of the distributional nature of the cost. If mean AoI were sufficient, increasing the communication rate would necessarily improve performance across policies. The zoomed comparisons below show that this monotonic intuition can fail once the full AoI distribution is taken into account. To this end, we zoom into the region around $p_{\mathrm{target}}=0.2$ in Fig. \ref{fig:theory_vs_sim_imp}, as shown in Fig. \ref{fig:theory_vs_sim_zoomed_0_2}. As can be seen, increasing $p_{\mathrm{target}}$ does not necessarily reduce the cost in \eqref{eqn:eqv_cost_corr_noise_a_1_beta_0}. In particular, the cost of the periodic scheduling policy with communication rate $p_{\mathrm{target}}=0.2$ is lower than that of the random scheduling policy with $p_{\mathrm{target}}=0.25$. Recall that $p_{\mathrm{target}}$ corresponds to the inverse of the mean AoI. Therefore, these results clearly show that reducing the mean AoI alone is insufficient, since the cost depends on the entire AoI distribution rather than only its mean.

\begin{figure}[!t]
    \centering
    \includegraphics[width=0.65\linewidth]{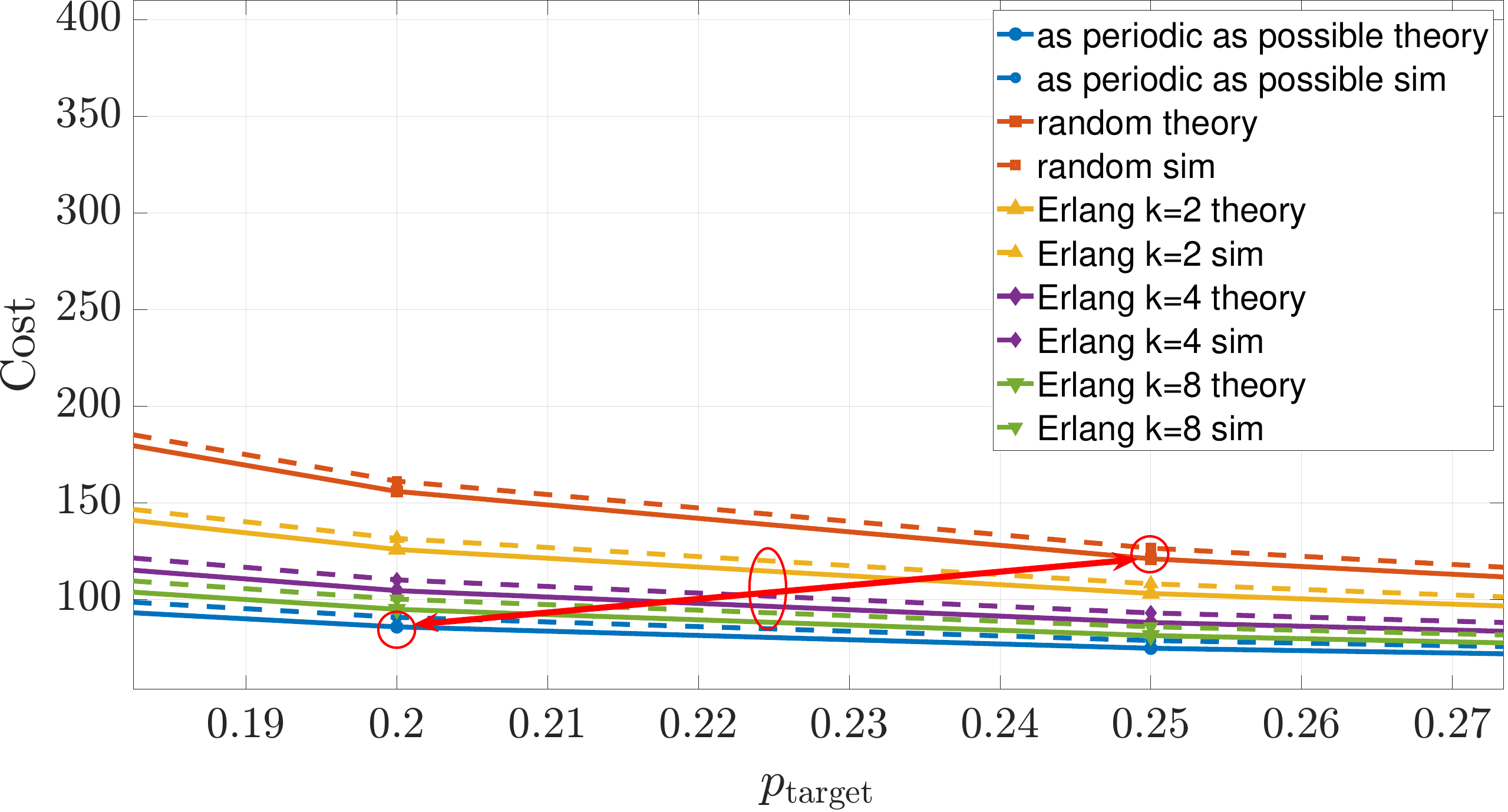}
    \caption{Zoomed comparison of the theoretical and simulated steady-state costs around $p_{\mathrm{target}}=0.2$. Despite operating at a lower communication rate, the periodic scheduling policy achieves a lower cost than the random scheduling policy, highlighting that the control performance depends on the entire AoI distribution rather than only the mean AoI.}
    \label{fig:theory_vs_sim_zoomed_0_2}
\end{figure}

\begin{figure}[!t]
    \centering
    \includegraphics[width=0.65\linewidth]{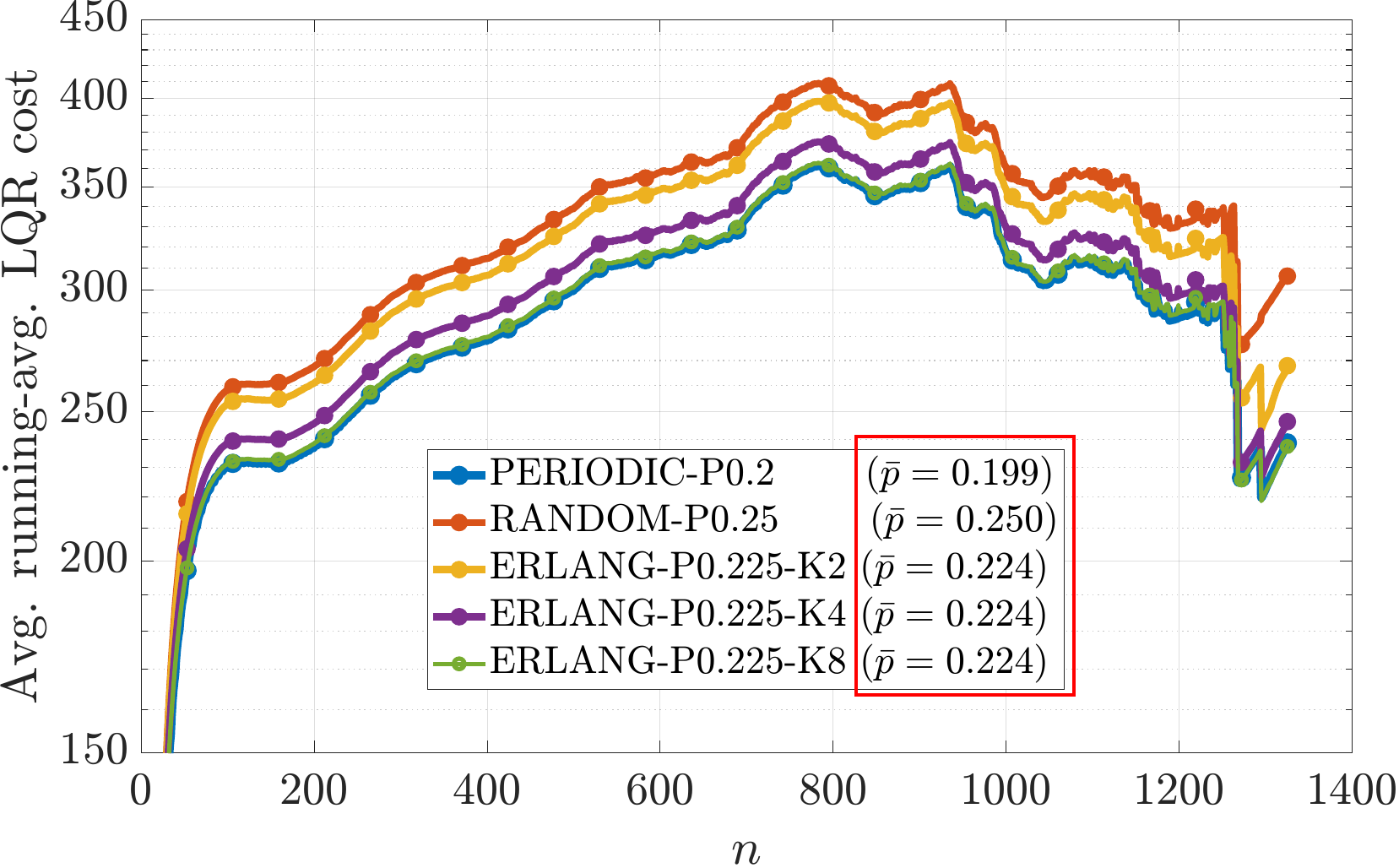}
    \caption{Average running-average LQR cost across all tracks for different scheduling policies operating at different communication rates under offline mean normalization.}
    \label{fig:avg_running_cost_iut_matched}
\end{figure}

Moreover, we observe that the cost in \eqref{eqn:eqv_cost_corr_noise_a_1_beta_0} becomes nearly identical for the Erlang and periodic AoI distributions at communication rates $p_{\mathrm{target}}=0.225$ and $p_{\mathrm{target}}=0.2$, respectively. To further validate this observation, Fig.~\ref{fig:avg_running_cost_iut_matched} shows the average running-average LQR cost at these operating points, where $\bar{p}$ denotes the achieved communication rate of each policy. As expected, the same behavior is observed in the practical simulation results.

These results make the central message clear: mean AoI alone is not a control-relevant objective. What matters is the full AoI distribution. Ignoring that distribution can lead to schedulers that are fresher on average but worse for closed-loop control.

\section{CONCLUSION}
This paper examined whether minimizing mean Age of Information is an optimal design principle for networked control systems. Starting from an infinite-horizon LQR tracking formulation, we showed that under state-independent scheduling the control problem reduces to an optimization over the inter-scheduling interval distribution. The resulting objective is not determined by mean AoI alone. For i.i.d. disturbances, the LQR-induced cost depends on higher-order moments, and under unstable dynamics or temporally correlated disturbances it depends on exponential moments of the interval distribution. Thus, two policies with the same mean AoI can yield substantially different tracking performance.

The analysis also identifies the as-periodic-as-possible scheduler as the optimal state-independent policy under rate-limited channels. This result provides a control-theoretic explanation for why variability in scheduling intervals is harmful: long inter-scheduling gaps are amplified by the plant dynamics and by the temporal structure of the disturbance process. In this sense, the relevant freshness object is the entire AoI distribution, not a scalar average or peak metric.

The theoretical conclusions were validated using data-driven car-following simulations based on real NGSIM US-101 highway trajectories. The empirical velocity profiles exhibit strong temporal correlation, with the fitted exponential correlation parameter \(\beta=0.007558\), placing the system in the regime where the derived exponential-moment terms become significant. The simulated steady-state LQR costs closely followed the theoretical equivalent costs across scheduling policies and communication rates. These observations directly confirm that reducing mean AoI alone is insufficient for control-oriented network design.

\bibliographystyle{IEEEtran}
\bibliography{references_allerton.bib}

\end{document}